\newtheorem{theorem}{Theorem}
\newtheorem{theorem}{Theorem}[section]
\newtheorem{lemma}[theorem]{Lemma}
\newtheorem{claim}[theorem]{Claim}
\newtheorem{definition}[theorem]{Definition}
\newtheorem{observation}[theorem]{Observation}
\newcommand{\set}[1]{\left\{#1\right\}}
\newcommand{\ceil}[1]{\left\lceil#1\right\rceil}
\DeclareMathOperator{\E}{\mathbb E}
\DeclareMathOperator{\union}{\bigcup}
\renewcommand{\tilde}{\widetilde}
\newcommand{\R}{\mathbb{R}}
\newcommand{\Z}{\mathbb{Z}}
\newcommand{\calA}{{\mathcal A}}
\newcommand{\calS}{{\mathcal S}}
\newcommand{\eps}{\epsilon}
\newcommand{\sfd}{\mathsf{d}}
\newcommand{\bffor}{{\bf for \xspace}}
\newcommand{\bfreturn}{{\bf return \xspace}}
\newcommand{\tprec}{\mathrm{prec}}
\newcommand{\PprecwC}{$P\big|\tprec\big|\sum_j w_jC_j$\xspace}
\newcommand{\PprecupwC}{$P \big| \tprec, p_j=1 \big| \sum_j w_jC_j$\xspace}
\begin{document}
	
	\title{Scheduling to Minimize Total Weighted Completion Time via Time-Indexed Linear Programming Relaxations}
	
	\author{Shi Li \thanks{Department of Computer Science and Engineering, University at Buffalo, Buffalo, NY, USA}}
	\date{}
	
	\maketitle
	\begin{abstract}
	We study approximation algorithms for scheduling problems with the objective of minimizing total weighted completion time, under identical and related machine models with job precedence constraints. We give algorithms that improve upon many previous 15 to 20-year-old state-of-art results. A major theme in these results is the use of time-indexed linear programming relaxations.  These are natural relaxations for their respective problems, but surprisingly are not studied in the literature. 
	
	We also consider the scheduling problem of minimizing total weighted completion time on unrelated machines. The recent breakthrough result of [Bansal-Srinivasan-Svensson, STOC 2016] gave a $(1.5-c)$-approximation for the problem, based on some lift-and-project SDP relaxation. Our main result is that a $(1.5 - c)$-approximation can also be achieved using a natural and considerably simpler time-indexed LP relaxation for the problem. We hope this relaxation can provide new insights into the problem.		
\end{abstract}
	

%
%
%

\section{Introduction}

	Scheduling jobs to minimize total weighted completion time is a well-studied topic in scheduling theory, operations research and approximation algorithms. A systematic study of this objective under many different machine models (e.g, identical, related and unrelated machine models,  job shop scheduling, precedence constraints, preemptions) was started in late 1990s and since then it has led to great progress on many fundamental scheduling problems. 
	
	In spite of these impressive results, the approximability of many problems is still poorly understood. Many of the state-of-art results that were developed in late 1990s or early 2000s have not been improved since then. Continuing the recent surge of interest on the total weighted completion time objective \cite{BSS16, IL16, Sku16}, 	we give improved approximation algorithms for many scheduling problems under this objective.  The machine models we study in this paper include identical machine model with job precedence constraints, with uniform and non-uniform job sizes, related machine model with job precedence constraints and unrelated machine model. 

	A major theme in our results is the use of time-indexed linear programming relaxations. Given the time aspect of scheduling problems, they are natural relaxations for their respective problems. 
	However, to the best of our knowledge, many of these relaxations were not studied in the literature and thus their power in deriving improved approximation ratios was not well-understood. Compared to other types of relaxations, 
	solutions to these relaxations give fractional scheduling of jobs on machines.  Many of our improved results were obtained by using the fractional scheduling to identify the loose analysis in previous results. 
	
	\subsection{Definitions of Problems and Our Results} We now formally describe the problems we study in the paper and state our results. In all of these problems, we have a set $J$ of $n$ jobs, a set $M$ of $m$ machines, each job $j \in J$ has a weight $w_j \in \Z_{> 0}$, and the objective to minimize is $\sum_{j \in J}w_j C_j$, where $C_j$ is the completion time of the job $j$. We consider non-preemptive schedules only.  So, a job must be processed on a machine without interruption. For simplicity, this global setting will not be repeated when we define problems.
	
	\paragraph{Scheduling on Identical Machines with Job Precedence Constraints}  In this problem, each job $j \in J$ has a processing time (or size) $p_j \in \Z_{>0}$. The $m$ machines are identical; each job $j$ must be scheduled on one of the $m$ machines  non-preemptively; namely, $j$ must be processed during a time interval of length $p_j$ on some machine. The completion time of $j$ is then the right endpoint of this interval. Each machine at any time can only process at most one job.  Moreover, there are precedence constraints given by a partial order ``$\prec$", where a constraint $j \prec j'$ requires that job $j'$ can only start after job $j$ is completed.  Using the popular three-field notation introduced by Graham et al.\ \cite{GLLR79}, this problem is described as \PprecwC. 
	
	For the related problem $P|\tprec|C_{\max}$, i.e, the problem with the same setting but with the makespan objective,  the seminal work of Graham \cite{Gra69} gives a $2$-approximation algorithm, based on a simple machine-driven list-scheduling algorithm. In the algorithm, the schedule is constructed in real-time.  As time goes, each idle machine shall pick any available job to process (a job is available if it is not scheduled but all its predecessors are completed.), if such a job exists; otherwise, it remains idle until some job becomes available.  
	%
%
	On the negative side, Lenstra and Rinnooy Kan \cite{LR78} proved a $(4/3-\epsilon)$-hardness of approximation for $P|\tprec|C_{\max}$. Under some stronger version of the Unique Game Conjecture (UGC) introduced by Bansal and Khot \cite{BK09}, Svensson \cite{Sve10} showed that $P|\tprec|C_{\max}$ is hard to approximate within a factor of $2-\epsilon$ for any $\epsilon > 0$.
	
	With precedence constraints, the weighted completion time objective is more general than makespan: one can create a dummy job of size $0$ and weight 1 that must be processed after all jobs in $J$, which have weight 0. Thus, the above negative results carry over to \PprecwC.  Indeed, Bansal and Khot \cite{BK09} showed that the problem with even one machine is already hard to approximate within a factor of $2-\epsilon$, under their stronger version of UGC.
	However, no better hardness results are known for \PprecwC, compared to those for $P|\tprec|C_{\max}$.

	On the positive side,  by combining the list-scheduling algorithm of Graham \cite{Gra69} with a convex programming relaxation for \PprecwC,  Hall et al.\ \cite{HSS97} gave a $7$-approximation for \PprecwC.  For the special case $1|\tprec|\sum_{j} w_j C_j$ of the problem where there is only 1 machine, Hall et al.\ \cite{HSS97} gave a $2$-approximation, which matches the $(2-\epsilon)$-hardness assuming the stronger version of UGC due to \cite{BK09}.  Later, Munier, Queyranne and Schulz (\cite{MQS98}, \cite{QS06}) gave the current best $4$-approximation algorithm for \PprecwC, using a convex programming relaxation similar to that in Hall et al.\ \cite{HSS97} and in Charkrabarti et al. \cite{CPS96}.  The convex programming gives a completion time vector $(C_j)_{j \in J}$, and  the algorithm of \cite{MQS98} runs a \emph{job-driven} list-scheduling algorithm using the order of jobs determined by the values $C_j - p_j/2$.  In the algorithm, we schedule jobs $j$ one by one, according to the non-increasing order of $C_j - p_j/2$; at any iteration, we schedule $j$ at an interval $(\tilde C_j - p_j, \tilde C_j]$ with the minimum $\tilde C_j$, subject to the precedence constraints and the $m$-machine  constraint.  It has been a long-standing open problem to improve this factor of $4$ (see the discussion after Open Problem 9 in \cite{SW99a}). 
	

	Munier, Queyranne and Schulz \cite{MQS98} also considered an important special case of the problem, denoted as \PprecupwC, in which all jobs have size $p_j = 1$. They showed that the approximation ratio of their algorithm becomes $3$ for the special case, which has not been improved since then. On the negative side, the $2-\eps$ strong UGC-hardness result of Bansal and Khot also applies to this special case.

	In this paper,  we improve the long-standing approximation ratios of 4 and 3 for \PprecwC and \PprecupwC due to Munier, Queyranne and Schulz \cite{MQS98, QS06}: 
	\begin{theorem}
		\label{theorem:main-PprecwC}
		There is a $2+2\ln2 + \epsilon < (3.387 + \epsilon)$-approximation algorithm for \PprecwC, for every $\epsilon > 0$.
	\end{theorem}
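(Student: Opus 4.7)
The plan is to follow the paper's central theme---use a time-indexed LP relaxation in place of the convex program of Munier-Queyranne-Schulz---and round via a randomized $\alpha$-point-style list scheduling. I would first formulate a time-indexed LP with variables $x_{j,t}\in[0,1]$ encoding the fraction of job $j$ processed in time slot $t$, with constraints $\sum_t x_{j,t}=p_j$, pointwise machine capacity $\sum_j x_{j,t}\le m$, and, for every precedence edge $j\prec j'$, a linearization forcing $j'$'s accumulated fractional progress at time $t$ to be dominated by $j$'s progress at time $t-p_j$. Define the fractional completion time $\bar C_j:=\tfrac{1}{p_j}\sum_t(t-\tfrac{1}{2})x_{j,t}+\tfrac{p_j}{2}$, so that $\sum_j w_j\bar C_j\le\mathrm{OPT}$. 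A geometric $(1+\epsilon)$-discretization of the time horizon keeps the LP polynomial-size at a $(1+\epsilon)$-factor loss, which accounts for the additive $\epsilon$ in the theorem.

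For the rounding, let $\bar C_j(\alpha)$ denote the earliest LP time by which an $\alpha$-fraction of $j$'s processing mass is placed. I would draw $\alpha$ from a carefully chosen density on $(0,1]$, list the jobs in nondecreasing order of $\bar C_j(\alpha)$ with ties broken topologically along $\prec$, and run Graham's greedy list scheduling (every idle machine immediately begins the next listed job whose predecessors are complete). The analysis proceeds via Graham's decomposition $C_j\le L_j+W_j/m$, where $L_j$ is the longest $\prec$-chain ending at $j$ and $W_j$ is the volume of non-chain jobs listed before $j$. The LP precedence constraints, under a careful time-indexed formulation, yield the strong chain bound $L_j\le\bar C_j(\alpha)\cdot(1+O(\epsilon))$---this is the crucial strengthening over MQS, whose weaker precedence lower bound forces an extra factor of $2$. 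The LP capacity constraint yields $W_j/m\le\bar C_j(\alpha)/\alpha$, since every job $j'$ listed before $j$ has at least $\alpha p_{j'}$ mass placed by time $\bar C_j(\alpha)$ and the total placed mass is at most $m\bar C_j(\alpha)$.

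Combining the two bounds gives $C_j\le\bar C_j(\alpha)(1+1/\alpha)$. Taking expectation over the random $\alpha$ and using the LP mean-completion identity $\int_0^1\bar C_j(\alpha)\,d\alpha=\bar C_j$ together with the Markov-type bound $\bar C_j(\alpha)\le\bar C_j/(1-\alpha)$, a suitably chosen density (e.g.\ truncated to $\alpha\ge\tfrac{1}{2}$, or a log-uniform density) yields $\E[C_j]\le(2+2\ln 2+O(\epsilon))\,\bar C_j$. Summing over $j$ with weights $w_j$ gives $\E[\sum_j w_jC_j]\le(2+2\ln 2+\epsilon)\,\mathrm{OPT}$. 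The constant $2+2\ln 2$ decomposes as a $2$ from the integrated chain contribution and a $2\ln 2$ from the work integration against $1/\alpha$ over the truncated range, matching $\int_{1/2}^1\tfrac{2\,d\alpha}{\alpha}=2\ln 2$.

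The main technical obstacle is the optimization of the $\alpha$-distribution, since the two bounds compete in opposite directions: $\bar C_j(\alpha)$ grows as $\alpha\to 1$ on back-loaded LP solutions, while $1/\alpha$ blows up as $\alpha\to 0$. Carrying out the integration requires combining the LP Markov inequality and the identity $\int_0^1\bar C_j(\alpha)\,d\alpha=\bar C_j$ in a way that is tight on the extremal LP solutions, and showing that the resulting constant $2+2\ln 2$ is optimal for the natural class of $\alpha$-distributions. Secondary issues are writing the precedence LP constraint strongly enough that $L_j\le\bar C_j(\alpha)(1+O(\epsilon))$ holds (rather than the $2\bar C_j$ bound in MQS), and handling the time discretization via a $(1+\epsilon)$-geometric grid without disturbing the chain or capacity bounds.
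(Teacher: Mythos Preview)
Your plan diverges from the paper's argument, and the sketched integration does not actually yield the constant $2+2\ln 2$.

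First, the decomposition you rely on is not the one the paper uses, and the two pieces you combine do not coexist. You invoke Graham's machine-driven list scheduling and assert $C_{j}\le L_j+W_j/m$ with $L_j$ the longest $\prec$-chain ending at $j$ and $W_j$ the work \emph{listed before} $j$. But in machine-driven list scheduling the busy time before $C_{j^*}$ can include jobs listed \emph{after} $j^*$ (jobs that were picked while $j^*$ was still blocked by a predecessor and are still running when $j^*$ becomes available). The paper instead uses the \emph{job-driven} list scheduling of Munier--Queyranne--Schulz; there the work term really is ``jobs considered before $j^*$'', but the idle-slot argument produces a chain mixing precedence links and ``earlier-in-list'' links (Lemma~\ref{lemma:PwC-previous-j}), and for that chain one needs to bound $\sum_{\ell\in L}p_{j_\ell}$ where $j_\ell\prec j_{\ell+1}$. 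The $\alpha$-point precedence inequality gives $p_{j_{\ell+1}}\le \bar C_{j_{\ell+1}}(\alpha)-\bar C_{j_\ell}(\alpha)$, which is off by one index from what is needed and does not telescope when type-1 and type-2 links alternate. So the ``crucial strengthening'' $L_j\le \bar C_j(\alpha)$ is not justified for the algorithm whose work bound you use.

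Second, even granting both bounds $L_j\le \bar C_j(\alpha)$ and $W_j/m\le \bar C_j(\alpha)/\alpha$, your integration is wrong. With $\alpha$ uniform on $[1/2,1]$, the worst non-decreasing profile $\bar C_j(\alpha)$ subject to $\int_0^1 \bar C_j(\alpha)\,d\alpha=\bar C_j$ is the step function that is $0$ on $[0,1/2)$ and $2\bar C_j$ on $[1/2,1]$; plugging in gives
\[
\E\!\left[\bar C_j(\alpha)\Big(1+\tfrac1\alpha\Big)\right]=2\int_{1/2}^{1}2\bar C_j\Big(1+\tfrac1\alpha\Big)\,d\alpha=(2+4\ln 2)\,\bar C_j,
\]
not $(2+2\ln 2)\bar C_j$. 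Your splitting ``$2$ from the chain, $2\ln 2$ from $\int_{1/2}^{1}2/\alpha\,d\alpha$'' double-uses the constraint $\int_0^1\bar C_j(\alpha)\,d\alpha=\bar C_j$: both the chain term $\E[\bar C_j(\alpha)]$ and the work term $\E[\bar C_j(\alpha)/\alpha]$ contain the random factor $\bar C_j(\alpha)$, and the same adversarial profile makes both large simultaneously.

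What the paper actually does is different in two ways. It orders by $M^\theta_j:=C_j-(1-\theta)p_j$ (a linear shift of the LP completion time, \emph{not} an $\alpha$-point), uses the MQS idle bound $C_{j^*}/(1-\theta)$ unchanged, and---this is the new idea you are missing---bounds the busy contribution \emph{globally over all} $\theta$ via a rectangle/area argument (Lemma~\ref{lemma:PwC-bound-busy-integral}): it shows directly that $\int_0^{1/2}p(J_\theta)\,d\theta\le mC_{j^*}$, so $\E_\theta[\text{busy}]\le 2C_{j^*}$, without ever invoking the pointwise bound $p(J_\theta)/m\le C_{j^*}/\theta$. Integrating the idle bound then contributes $2\ln 2$, giving $2+2\ln 2$. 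The rectangle argument is precisely the device that lets one avoid the divergent $\int 1/\theta$ you would otherwise face, and it has no analogue in your sketch.
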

	
	\begin{theorem}
		\label{theorem:main-PprecwCuniform}
		There is a $1 + \sqrt{2} < 2.415$-approximation algorithm for \PprecupwC.			
	\end{theorem}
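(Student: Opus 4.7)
The plan is to follow the time-indexed LP theme of the paper. I would formulate a natural relaxation with a variable $x_{j,t}\in[0,1]$ for each job $j$ and integer slot $t$, subject to $\sum_t x_{j,t}=1$, $\sum_j x_{j,t}\le m$, and $\sum_{t'\le t}x_{j',t'}\le\sum_{t'\le t-1}x_{j,t'}$ for every precedence pair $j\prec j'$ and every $t$. The objective is $\sum_j w_j C^{LP}_j$ with $C^{LP}_j:=\sum_t t\,x_{j,t}$. Every integer schedule yields a feasible LP solution of the same cost, so the LP optimum lower-bounds $\mathrm{OPT}$.

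From an optimal fractional $x^*$, for a parameter $\alpha\in(0,1)$ to be chosen later, I extract the $\alpha$-point $D_j:=\min\{t:\sum_{t'\le t}x^*_{j,t'}\ge\alpha\}$ of each job. Three basic properties follow from the LP: (i) the precedence inequality forces $D_{j'}>D_j$ whenever $j\prec j'$, so the $D_j$-order is topological; (ii) summing capacity gives $|\{j':D_{j'}\le t\}|\le mt/\alpha$ for every $t$; and (iii) since at most an $\alpha$-fraction of $j$ is finished by time $D_j-1$, we have $C^{LP}_j\ge(1-\alpha)D_j$. The algorithm is job-driven list scheduling in non-decreasing $D_j$-order, placing each job in its earliest unit slot consistent with $\prec$ and the $m$-machine cap.

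The heart of the argument is an upper bound $C_j\le(1+\sqrt{2})\,C^{LP}_j$ for every job. I partition the slots before $j$ starts into \emph{precedence-blocked} slots (some predecessor of $j$ is still unfinished) and \emph{machine-busy} slots (all $m$ machines process jobs $j'$ with $D_{j'}\le D_j$). The machine-busy slots number at most $D_j/\alpha$ by (ii), and by (iii) this is at most $C^{LP}_j/(\alpha(1-\alpha))$. The precedence-blocked count is controlled inductively along the critical chain $j\succ j_1\succ j_2\succ\cdots$, where each $j_i$ is the latest-completing predecessor of $j_{i-1}$; here one uses the strong chain inequality $C^{LP}_{j_{i-1}}\ge C^{LP}_{j_i}+1$ implied by the LP, together with care not to double-count slots that fall simultaneously inside $D_j$ and inside the chain. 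Combining the two charges yields
\[
C_j\ \le\ \max\!\Bigl(1+\tfrac{1}{\alpha},\ \tfrac{\alpha}{1-\alpha}\Bigr)\,C^{LP}_j,
\]
which is minimized at $\alpha=1/\sqrt{2}$, where both expressions equal $1+\sqrt{2}$; multiplying by $w_j$, summing over $j$, and using $\mathrm{LP}\le\mathrm{OPT}$ gives the theorem.

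The hardest step, and the reason this improves on the $3$-approximation of Munier--Queyranne--Schulz, is the joint accounting of the chain and machine-busy terms. Adding them naively produces roughly $(1+1/(\alpha(1-\alpha)))C^{LP}_j$, which is $5$ at $\alpha=1/2$ and fails even to reach $3$. Exploiting the time-indexed LP more carefully, the $\alpha$-point already pays for the bulk of the chain delay below $D_j$ through the LP objective, so only the ``overflow'' above $D_j$ needs to be absorbed by the second term. Making this sharing of credit precise in a slot-by-slot amortized way is where I expect the real work to lie, and where the fractional information of the time-indexed LP is essential over the completion-time vectors used by previous approaches.
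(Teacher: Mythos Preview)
Your plan has a genuine gap: with a \emph{fixed} $\alpha$-point, the per-job bound you claim cannot hold, and the paper's argument relies essentially on randomizing the threshold.

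Here is a concrete obstruction. Take $m$ machines, no precedence, and let $j^\ast$ have mass $\alpha-\varepsilon$ at $t=1$ and mass $1-\alpha+\varepsilon$ at $t=D$, so $C^{LP}_{j^\ast}\approx(1-\alpha)D$ and $D_{j^\ast}=D$. Add roughly $mD/\alpha$ other jobs, each with mass $\alpha/D$ spread uniformly on $\{1,\dots,D\}$; capacity is respected and every such job has $\alpha$-point $D$. List scheduling in $D_j$-order (with adversarial tie-breaking) makes $j^\ast$ complete at time $\approx D/\alpha$, so
\[
\frac{\tilde C_{j^\ast}}{C^{LP}_{j^\ast}}\;\approx\;\frac{D/\alpha}{(1-\alpha)D}\;=\;\frac{1}{\alpha(1-\alpha)}\;\ge\;4.
\]
Thus no deterministic choice of $\alpha$ yields a per-job factor better than $4$, let alone $1+\sqrt2$; in particular your asserted bound $C_j\le\max\bigl(1+\tfrac1\alpha,\tfrac{\alpha}{1-\alpha}\bigr)C^{LP}_j$ is false. (The second term $\alpha/(1-\alpha)$ does not correspond to any sound charge here; the machine-busy term alone already contributes $1/(\alpha(1-\alpha))$ once you substitute $D_j\le C^{LP}_j/(1-\alpha)$.)

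What the paper actually does is choose the threshold $\theta$ \emph{uniformly at random} in $(0,1]$ and bound $\E_\theta[\tilde C_{j^\ast}]$. The idle term is $a_{j^\ast}\le g(0)$ for every $\theta$, while the busy term averages to $\tfrac1m\sum_j\int_0^1\mathbf 1[h_j(\theta)\ge\theta]\,d\theta$, where $g(\theta)=M^\theta_{j^\ast}$ and $h_j(\theta)$ is the mass of $j$ up to $g(\theta)$. The heart of the proof is a variational argument showing
\[
\frac{g(0)+\tfrac1m\sum_j\int_0^1\mathbf 1[h_j(\theta)\ge\theta]\,d\theta}{\int_0^1 g(\theta)\,d\theta}\;\le\;1+\sqrt2,
\]
with the worst case $h_j(\theta)=\max\{\sqrt2-1,\theta\}$. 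This is precisely the ``joint accounting'' you allude to, but it is an averaging-over-$\theta$ phenomenon: for any single $\theta$ the adversary can force the quotient up to $1/(\theta(1-\theta))$, exactly as in the instance above. So the missing idea in your plan is the randomization of the $\alpha$-point and the subsequent optimization over the shape of $g$ and $h_j$; a fixed $\alpha$ together with the slot-by-slot amortization you describe cannot close the gap.
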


	\paragraph{Scheduling on Related Machines with Job Precedence Constraints} Then we consider the scheduling problem on related machines.  We have all the input parameters in the problem $P|\tprec|\sum_j w_jC_j$. Additionally, each machine $i \in M$ is given a speed $s_i > 0$ and the time of processing job $j$ on machine $i$ is $p_j/s_i$ (so the $m$ machines are not identical any more).  A job $j$ must be scheduled on some machine $i$ during an interval of length $p_j/s_i$.  Using the three-field notation, the problem is described as $Q|\tprec|\sum_j w_jC_j$.  
	
	Chudak and Shmoys \cite{CS97} gave the current best $O(\log m)$ approximation algorithm for the problem, improving upon the previous $O(\sqrt{m})$-approximation due to Jaffe \cite{Jaf80}. Using a general framework of Hall et al.\ \cite{HSS97} and Queyranne and Sviridenko \cite{QS02}, that converts an algorithm for a scheduling problem with makespan objective to an algorithm for the correspondent problem with weighted completion time objective,  Chudak and Shmoys reduced the problem $Q|\tprec|\sum_j w_jC_j$ to $Q|\tprec|C_{\max}$. 
	In their algorithm for $Q|\tprec|C_{\max}$, we partition the machines into groups, each containing machines of similar speeds. By solving an LP relaxation, we assign each job to a group of machines. Then we can run a generalization of the Graham's machine-driven list scheduling problem, that respect the job-to-group assignment.  The $O(\log m)$-factor comes from the number $O(\log m)$ of machine groups. 
	
	On the negative side, all the hardness results for $P|\tprec|C_{\max}$ carry over to both $Q|\tprec|C_{\max}$ and  $Q|\tprec|\sum_j w_jC_j$. Recently Bazzi and Norouzi-Fard \cite{BN15} showed that assuming the hardness of some optimization problem on $k$-partite graphs, both problems are hard to be approximated within any constant. 
	
	In this paper, we give a slightly better approximation ratio than $O(\log m)$ due to Chudak and Shmoys \cite{CS97}, for both  $Q|\tprec|C_{\max}$ and  $Q|\tprec|\sum_j w_jC_j$: 
	\begin{theorem}
		\label{theorem:main-QprecwC}
		There are $O(\log m/\log\log m)$-approximation algorithms for both $Q|\tprec|C_{\max}$ and $Q|\tprec|\sum_j w_jC_j$.
	\end{theorem}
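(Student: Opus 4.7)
The plan is as follows. By the classical framework of Hall et al.~\cite{HSS97} and Queyranne--Sviridenko~\cite{QS02}, an $\alpha$-approximation for $Q|\tprec|C_{\max}$ automatically yields an $O(\alpha)$-approximation for $Q|\tprec|\sum_j w_jC_j$, so it suffices to design an $O(\log m/\log\log m)$-approximation for the makespan variant. I would build on the Chudak--Shmoys approach~\cite{CS97}, whose $O(\log m)$ bound arises from partitioning the machines into $\Theta(\log m)$ speed classes of constant ratio and paying an additive per-class term whenever a precedence chain traverses all classes. The savings should come from using fewer, coarser speed classes, together with a more careful LP-driven analysis inside each of them.

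Concretely, I would formulate a natural time-indexed LP relaxation for $Q|\tprec|C_{\max}$ with variables $x_{j,i,t}$ representing the extent to which job $j$ is processed on machine $i$ at time $t$, subject to the usual covering, capacity, precedence, and deadline constraints, and let $T^{LP}$ be the LP optimum. Next, I would partition the machines into $g = \Theta(\log m/\log\log m)$ \emph{super-classes} $G_1,\ldots,G_g$ by speed, each spanning a multiplicative window of width $\log m$, and round the LP so as to assign each job to a single super-class while preserving the total LP load on each super-class up to a constant. Inside each super-class I would further subdivide into $\Theta(\log\log m)$ \emph{sub-classes} of constant speed ratio, use the restricted LP to assign each job to one sub-class, and then run the generalized Graham list-scheduling of~\cite{CS97}, with ties broken by LP-based priorities (e.g.\ LP completion times). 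A precedence chain in the final integral schedule incurs sequential inflation only when it crosses super-class boundaries, so its total length is at most $g$ times the per-super-class makespan.

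The main obstacle is then to bound the makespan inside a single super-class by $O(T^{LP})$, despite the speed window being $\log m$ wide and the sub-class assignment potentially spreading chain members across the $\log\log m$ sub-classes within the super-class. I expect to handle this by exploiting the LP's fractional schedule restricted to the super-class, which by construction already fits into time $T^{LP}$ with the full speed window available, and by a Graham-style analysis that bundles the $\log\log m$ sub-classes of a single super-class into one ``near-identical-machines'' instance whose effective speed is the super-class's slowest speed. This pushes the $\log m$ factor out of the per-super-class cost and into a one-time load term that is absorbed by the LP bound, so that the chain argument pays only the $g = O(\log m/\log\log m)$ factor; combining with the makespan-to-weighted-completion-time reduction then yields the theorem for both $Q|\tprec|C_{\max}$ and $Q|\tprec|\sum_j w_jC_j$.
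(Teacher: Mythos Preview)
Your high-level plan---reduce to makespan and coarsen the Chudak--Shmoys speed classes---is the right direction, but the two-level super-class/sub-class decomposition introduces a gap you do not close. The step where you ``bundle the $\log\log m$ sub-classes of a single super-class into one near-identical-machines instance whose effective speed is the super-class's slowest speed'' is precisely the kind of speed-rounding that destroys the bound: a super-class spans a speed ratio of $\log m$, so replacing every machine by the slowest one inflates processing times (hence the chain length) and the per-class load each by a factor of up to $\log m$. Nothing in the LP ``absorbs'' this; the LP only certifies the schedule at the \emph{original} speeds. Consequently your per-super-class makespan bound is not $O(T^{LP})$ but $O(\log m\cdot T^{LP})$ (or, if you keep sub-classes distinct and apply Chudak--Shmoys inside, $O(\log\log m\cdot T^{LP})$), and after multiplying by $g$ you are back to $O(\log m)$ or worse.

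The paper's argument is simpler and avoids the second level entirely. It uses a single partition into $K=O(\log m/\log\log m)$ groups of width $\gamma=\log m/\log\log m$ and---crucially---keeps the original machine speeds inside each group. Each job $j$ is assigned to a single group $k_j$ chosen so that (i) its processing time on any machine of $M_{k_j}$ is at most $2\gamma$ times its LP processing time, and (ii) the total normalized load $\sum_j p_j/s(M_{k_j})$ is at most $2KD$. List scheduling then gives a makespan of at most $2\gamma D$ (for times when every minimal job is running; this is the chain term) plus $2KD$ (for times when some group is saturated), i.e.\ the two factors \emph{add}. The paper explicitly remarks that rounding speeds within a group would make the factors multiply, yielding only $O((\log m/\log\log m)^2)$; your slowest-speed bundling is a more aggressive version of the same mistake. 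A minor point: the time-indexed LP is unnecessary here; the simpler assignment LP of Chudak--Shmoys already suffices.
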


	\paragraph{Scheduling on Unrelated Machines} Finally, we consider the classic scheduling problem to minimize total weighted completion time on unrelated machines (without precedence constraints). In this problem we are given a number $p_{i, j} \in \Z_{> 0}$ for every $i \in M, j \in J$,  indicating the time needed to process job $j$ on machine $i$.  
	This problem is denoted as $R||\sum_{j} w_j C_j$. 
	
	For this problem, there are many classic $3/2$-approximation algorithms, based on a weak time-indexed LP relaxation \cite{SS02} and a convex-programming relaxation (\cite{Sku01},  \cite{SS99}).  These algorithms are all based on independent rounding. Solving some LP (or convex programming) relaxation gives $y_{i, j}$ values, where each $y_{i, j}$ indicates the fraction of job $j$ that is assigned to machine $i$. Then the algorithms randomly and independently assign each job $j$ to a machine $i$, according to the distribution $\{y_{i, j}\}_i$.  Under this job-to-machine assignment, the optimum scheduling can be found by applying the Smith rule on individual machines.
	
	Improving the $3/2$-approximation ratio had been a long-standing open problem (see Open Problem 8 in \cite{SW99a}). The difficulty of improving the ratio comes from the fact that any independent rounding algorithm can not give a better than a $3/2$-approximation for $R||\sum_{j} w_j C_j$, as shown by Bansal, Srinivasan and Svensson \cite{BSS16}.  This lower bound is irrespective of the relaxation used: even if the fractional solution is already a convex combination of optimum integral schedules, independent rounding can only give a $3/2$-guarantee. To overcome this barrier, \cite{BSS16} introduced a novel dependence rounding scheme, which guarantees some strong negative correlation between events that jobs are assigned to the same machine $i$.  Combining this with their lifted SDP relaxation for the problem, Bansal, Srinivasan and Svensson gave a $(3/2-c)$-approximation algorithm for the problem $R||\sum_j w_j C_j$, where $c  = 1/(108 \times 20000)$. This solves the long-standing open problem in the affirmative. 

	
	Besides the slightly improved approximation ratio, our main contribution for this problem is that the $(1.5-c)$-approximation ratio can also be achieved using the following  natural time-indexed LP relaxation:
\ifdefined \CR
	\begin{equation}
	\min \qquad \sum_{j}w_j\sum_{i, s}x_{i, j, s}(s+p_{i,j}) \quad \text{s.t.} \tag{$\mathrm{LP}_{\mathrm{R}||\mathrm{wC}}$} \label{LP:RwC}
	\end{equation}
	\vspace*{-20pt}
	
	\begin{align}
		\sum_{i, s} x_{i, j, s} &= 1 & \quad \forall& j \label{LPC:RwC-job-j-scheduled} \\
		\sum_{j, s \in (t-p_{i, j}, t]} x_{i, j,s} &\leq 1 & \quad \forall& i, t \label{LPC:RwC-congestion} \\
		x_{i, j, s} & = 0 & \quad  \forall &i, j, s > T - p_{i, j}\label{LPC:RwC-job-length} \\
		x_{i, j, s} &\geq 0 & \quad \forall&i, j, s \label{LPC:RwC-non-negative}
	\end{align}
\else
	\begin{equation}
	\min \qquad \sum_{j}w_j\sum_{i, s}x_{i, j, s}(s+p_{i,j}) \qquad \qquad \text{s.t.} \tag{$\textsf{LP}_{\text{R}||\text{wC}}$} \label{LP:RwC}
	\end{equation}
	\vspace*{-30pt}
	
	\noindent
	\begin{minipage}{0.54\textwidth}
		\vspace*{15pt}
		\begin{align}
		\sum_{i, s} x_{i, j, s} &= 1 & \quad \forall& j \label{LPC:RwC-job-j-scheduled} \\[1pt]
		\sum_{j, s \in (t-p_{i, j}, t]} x_{i, j,s} &\leq 1 & \quad \forall& i, t \label{LPC:RwC-congestion} 
		\end{align}
	\end{minipage}
	\begin{minipage}{0.45\textwidth}
		\begin{align}
		x_{i, j, s} & = 0 & \quad  \forall &i, j, s > T - p_{i, j}\label{LPC:RwC-job-length} \\[17pt]
		x_{i, j, s} &\geq 0 & \quad \forall&i, j, s \label{LPC:RwC-non-negative}
		\end{align}
	\end{minipage}\medskip
\fi	

In the above LP, $T$ is a trivial upper bound on the makespan of any reasonable schedule ($T = \sum_{j}\max_{i: p_{i, j} \neq \infty}p_{i, j}$ suffices). $i, j,s$ and $t$ are restricted to elements in $M, J, \{0, 1, 2, \cdots, T-1\}$ and $[T]$ respectively. $x_{i, j, s}$ indicates whether job $j$ is processed on machine $i$ with starting time $s$.  The objective to minimize is the weighted completion time $\sum_{j}w_j\sum_{i, s}x_{i,j,s}(s+p_{i,j})$. Constraint~\eqref{LPC:RwC-job-j-scheduled} requires every job $j$ to be scheduled. Constraint~\eqref{LPC:RwC-congestion} says that on every machine $i$ at any time point $t$, only one job is being processed. Constraint~\eqref{LPC:RwC-job-length} says that if job $j$ is scheduled on $i$, then it can not be started after $T-p_{i,j}$.  Constraint~\eqref{LPC:RwC-non-negative} requires all variables to be nonnegative.

	\begin{theorem}
		\label{theorem:main-RwC}
		The LP relaxation \eqref{LP:RwC} for $R||\sum_j w_j C_j$ has an integrality gap of at most $1.5-c$, where $c = \frac{1}{6000}$.  Moreover, there is an algorithm that, given a valid fractional solution $x$ to \eqref{LP:RwC}, outputs a random valid schedule with expected cost at most $(1.5-c)\sum_j w_j\sum_{i,s}x_{i,j,s}(s+p_{i,j})$, in time polynomial in the number of non-zero variables of $x$.\footnote{We assume $x$ is given as a sequence of $(i, j, s, x_{i, j, s})$-tuples with non-zero $x_{i, j, s}$.}
	\end{theorem}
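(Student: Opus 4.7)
The plan is to transform a feasible $x$ into a job-to-machine assignment followed by a single-machine schedule, in the same two-stage style as previous algorithms for $R||\sum_j w_j C_j$ but exploiting the richer information carried by the time-indexed variables. From $x$ I would first extract marginals $y_{i,j} := \sum_s x_{i,j,s}$ and, for each $(i,j)$ with $y_{i,j}>0$, a fractional completion target $C^*_{i,j} := \frac{1}{y_{i,j}}\sum_s x_{i,j,s}(s+p_{i,j})$, so that the LP value rewrites as $\sum_j w_j \sum_i y_{i,j} C^*_{i,j}$. The first stage is then a rounding of the assignment marginals $(y_{i,j})_i$ for each job $j$ into a single machine $\sigma(j)$, designed to preserve marginals and to produce pairwise negatively correlated assignments on each machine, in the spirit of the scheme of Bansal--Srinivasan--Svensson but driven from $x$ rather than from a lifted SDP.

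For the second stage I would schedule the jobs landing on machine $i$ in an order determined by $x$, for instance by the non-decreasing order of the LP-induced fractional start time $\sum_s x_{i,j,s}\,s / y_{i,j}$, falling back to Smith's rule wherever that order is more convenient to analyse. Given a job $j$ with $\sigma(j)=i$, its completion time is bounded by $p_{i,j}$ plus the total size of the jobs that precede it on $i$, so the analysis reduces to bounding, in expectation, the sum $\sum_{j'\neq j} p_{i,j'}\Pr[\sigma(j')=i,\ j'\text{ before } j]$. Constraint \eqref{LPC:RwC-congestion} together with \eqref{LPC:RwC-job-j-scheduled} ensures that the fractional predecessors of $j$ on $i$ contribute only up to roughly $C^*_{i,j}-p_{i,j}/2$ on average, which recovers the classical $3/2$ bound once the marginals $y_{i,j}$ are matched.

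To go below $3/2$ I would split the contribution of each competing job $j'$ into two regimes. In the regime where several large jobs cluster on the same machine, the negative correlation provided by the dependent rounding reduces the second moment of the load on $i$ and gains a constant; in the complementary regime, where no such clustering occurs, the fractional start times already encoded in $x$ certify directly that $j$ finishes close to $C^*_{i,j}$, independently of the rounding. Balancing the two cases via a convex combination of the two bounds, and plugging in the worst case of \eqref{LPC:RwC-congestion}, should yield an improvement from $3/2$ to $3/2-1/6000$, with the explicit constant coming out of the optimization.

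I expect the main obstacle to be the interplay between the dependent rounding and the non-overlap constraint \eqref{LPC:RwC-congestion}. Negative correlation on its own saves only on the \emph{expected load} of a machine, whereas the completion-time bound for $j$ is sensitive to \emph{which} jobs precede $j$, and a naive combination loses the constant relative to the BSS analysis. The technical heart of the argument should be designing the rounding to be sensitive to the fractional start times, so that two jobs with strongly overlapping fractional supports on the same machine are rounded in a strongly negatively correlated way, and then quantifying this extra correlation precisely enough to extract the $1/6000$ slack.
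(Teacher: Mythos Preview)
Your high-level strategy matches the paper's: extract $y_{i,j}$ from $x$, feed a carefully chosen grouping into the BSS dependent rounding, and then sequence jobs on each machine using information from the time-indexed variables. You also correctly identify the crux --- the grouping must be driven by the fractional time structure so that jobs whose rectangles overlap heavily on $i$ are strongly negatively correlated --- and you flag this as the ``technical heart'' without supplying it. That is exactly the gap: the proposal is a plan, not a proof, and the missing piece is non-obvious.

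The paper fills this gap as follows. For each $(i,j)$ with $y_{i,j}>0$ it samples a random rectangle $R_{i,j,s_{i,j}}$ (with probability $x_{i,j,s}/y_{i,j}$) and a uniformly random point $\tau_{i,j}\in(s_{i,j},s_{i,j}+p_{i,j}]$, then defines a deterministic shift $\theta_{i,j}=0.2(s_{i,j}+\phi_{i,j})+0.4\,y_{i,j}p_{i,j}$ where $\phi_{i,j}$ is your fractional start time. Jobs on $i$ are ordered by $\tau_{i,j}+\theta_{i,j}$, \emph{not} by $\phi_{i,j}$ or by Smith's rule. The grouping uses geometric ``basic blocks'' $(2^a,2^{a+1}]$: a bad job $j$ (one with $\phi_{i,j}+y_{i,j}p_{i,j}<0.01\,p_{i,j}$) is assigned to the block containing $\tau_{i,j}$, provided the block sits inside $(10\phi_{i,j},p_{i,j}]$ and $s_{i,j}+\theta_{i,j}\le 2^a$; jobs landing in the same block form (after a mild repacking) a BSS group. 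Two things make this work that your sketch does not supply. First, the random $\tau$'s give a symmetry: conditioned on $j,j'$ being in the same group, each precedes the other with probability exactly $1/2$, so the $\zeta$-saving from BSS translates cleanly into a $\tfrac{\zeta}{2}\Pr[j\samegroup{i}j^*]\,y_{i,j}p_{i,j}$ term in the bound on $\E[\tilde C_{j^*}\mid j^*\!\to i]$. A deterministic order by $\phi_{i,j}$ would break this symmetry and force you to track which of the two bad jobs has the smaller $\phi$, which the configuration argument does not control. Second, the shift $\theta_{i,j}$ is what makes the ``no-clustering'' regime pay: it moves every rectangle right by an amount proportional to its own start and to $y_{i,j}p_{i,j}$, so either $j^*$ is good (and $1.4\phi_{i,j^*}+(1.5-0.1y_{i,j^*})p_{i,j^*}$ already beats $1.5$), or the rectangle covering $j^*$'s block must itself be bad and overlapping, which forces a large $\Pr[j\samegroup{i}j^*]$. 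The final accounting is done by decomposing the height-$\le 1$ profile on $i$ into a convex combination of ``configurations'' (maximal non-overlapping interval sets) and a three-way case split per configuration; this is where the constant $1/6000$ is extracted. None of these mechanisms --- the random-$\tau$ ordering, the shift, the dyadic-block grouping, or the configuration decomposition --- appears in your outline, and your proposed deterministic ordering would require a substantially different analysis to recover the same constant.
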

	
	The above algorithm leads to a $(1.5-c)$-approximation for $R||\sum_j w_j C_j$ immediately if $T$ is polynomially bounded. 
\ifdefined\CR
	The case when $T$ is super-polynomial will be handled in the full version of the paper. 
\else
	In Section~\ref{sec:superT}, we shall show how to handle the case when $T$ is super-polynomial. 
\fi

	\subsection{Our Techniques}
	
		A key technique in many of our results is the use of time-indexed LP relaxations. For the identical machine setting, we have variables $x_{j, t}$ indicating whether job $j$ is scheduled in the time-interval $(t - p_j, t]$; we can visualize $x_{j, t}$ as a rectangle of height $x_{j, t}$ with horizontal span $(t-p_j, t]$. With this visualization, it is straightforward to express the objective function, and formulate the machine-capacity constraints and the precedence constraints.   For the unrelated machine model, the LP we use is \eqref{LP:RwC}. (We used starting points to index intervals, as opposed to ending points; this is only for the simplicity of describing the algorithm.) Each $x_{i, j, s}$ can be viewed as a rectangle of height $x_{i, j, s}$ on machine $i$ with horizontal span $(s, s + p_{i, j}]$.
		The rectangle structures allow us to recover the previous state-of-art results, and furthermore to derive the improved approximation results by identifying the loose parts in these algorithms and analysis.
	
	
	\paragraph{$P|\tprec|\sum_j w_jC_j$} Let us first consider the scheduling problem on identical machines with job precedence constraints. The 4-approximation algorithm of Munier, Queyranne, and Schulz \cite{MQS98, QS06} used a convex programming that only contains the completion time variables $\{C_j\}_{j \in J}$.  After obtaining the vector $C$, we run the job-driven list scheduling algorithm, by considering jobs $j$ in increasing order of $C_j - p_j/2$.  To analyze the expected completion time of ${j^*}$ in the output schedule, focus on the schedule $\calS$ constructed by the algorithm at the time $j^*$ was inserted. Then, we consider the total length of busy and idle slots in $\calS$ before the completion of ${j^*}$ separately.  The length of busy slots can be bounded by $2C_{j^*}$, using the $m$-machine constraint. The length of idle slots can also be bounded by $2C_{j^*}$, by identifying a chain of jobs that resulted in the idle slots.  More generally, they showed that if jobs are considered in increasing order of $C_j - (1-\theta )p_j$ for $\theta \in [0, 1/2]$ in the list scheduling algorithm, the factor for idle slots can be improved to $1/(1-\theta)$ but the factor for busy slots will be increased to $1/\theta$. Thus, $\theta = 1/2$ gives the best trade-off. 
	
	The rectangle structure allows us to exam the tightness of the above factors more closely: though the $1/\theta$ factor for busy slots is tight for every individual $\theta \in [0, 1/2]$, it can not be tight for every such $\theta$.  Roughly speaking, the $1/\theta$ factor is tight for a job ${j^*}$ only when ${j^*}$ has small $p_{j^*}$, and all the other jobs $j$ considered before ${j^*}$ in the list scheduling algorithm has large $p_{j}$ and $C_{j} - \theta p_{j}$ is just smaller than $C_{j^*} - \theta p_{j^*}$.  However in this case, if we decrease $\theta$ slightly, these jobs $j$ will be considered after ${j^*}$ and thus the bound can not be tight for all $\theta \in [0, 1/2]$.  We show that even if we choose $\theta$ uniformly at random from $[0, 1/2]$, the factor for busy time slots remains $2$, as opposed to $\int_{\theta = 0}^{1/2} \frac{2}{\theta}\sfd\theta = \infty$. On the other hand, this decreases the factor for idle slots to $\int_{\theta = 0}^{1/2}\frac{2}{1-\theta}\sfd\theta = 2\ln2$, thus improving the approximation factor to $2+2\ln 2$. The idea of choosing a random point for each job $j$ and using them to decide the order in the list-scheduling algorithm has been studied before under the name ``$\alpha$-points'' 
	\cite{Goe97, HSS97, PSW98, CMN01}. The novelty of our result is the use of the rectangle structure to relate different $\theta$ values. 
	In contrast, solutions to the convex programming of \cite{MQS98} and the weak time-indexed LP relaxation of \cite{SS02} lack such a structure.  
	
	\paragraph{$P|\tprec, p_j=1|\sum_j w_jC_j$} When jobs have uniform length, the approximation ratio of the algorithm of \cite{MQS98} improves to $3$. In this case, the $\theta$ parameter in the above algorithm becomes useless since all jobs have the same length.  Taking the advantage of the uniform job length, the factor for idle time slots improves 1, while the factor for busy slots remains 2. This gives an approximation factor of $3$ for the special case. 
	
	To improve the factor of $3$, we use another randomized procedure to decide the order of jobs in the list scheduling algorithm.  For every $\theta \in [0, 1]$, let $M^\theta_j$ be the first time when we scheduled $\theta$ fraction of job $j$ in the fractional solution. Then we randomly choose $\theta \in [0, 1]$ and consider jobs the increasing order of $M^\theta_j$ in the list-scheduling algorithm.  This algorithm can recover the factor of 1 for total length of idle slots and 2 for total length of busy slots.
	
	We again use the rectangle structure to discover the loose part in the analysis. With uniform job size, the idle slots before a job $j$ are caused only by the precedence constraints: if the total length of idle slots before the completion time of $j$ is $a$, then there is a precedence-chain of $a$ jobs ending at $j$; in other words, $j$ is at depth at least $a$ in the precedence graph. In order for the factor 1 for idle slots to be tight, we need to have $a \approx C_j$.  We show that if this happens, the factor for busy time slots shall be much better than $2$. Roughly speaking, the factor of 2 for busy time slots is tight only if $j$ is scheduled evenly among $[0, 2C_j]$.  However, if $j$ is at depth-$a$ in the dependence graph, it can not be scheduled before time $a \approx C_j$ with any positive fraction.  A quantification of this argument allows us to derive the improved approximation ratio $1+\sqrt{2}$ for this special case.
	
	\paragraph{$Q|\tprec|\sum_j w_j C_j$} Our $O(\log m/\log \log m)$-approximation for related machine scheduling is a simple one. As mentioned earlier, by losing a constant factor in the approximation ratio, we can convert the problem of minimizing the weighted completion time to that of minimizing the makespan, i.e, the problem $Q|\tprec|C_{\max}$.
	To minimize the makespan, the algorithm of Chudak and Shmoys \cite{CS97} partitions machines into $O(\log m)$ groups according to their speeds.  Based on their LP solution, we assign each job $j$ to a group of machines. Then we run the machine-driven list-scheduling algorithm, subject to the precedence constraint, and the constraint that each job can only be scheduled to a machine in its assigned group. The final approximation ratio is the sum of two factors: one from grouping machines with different speeds into the same group, which is $O(1)$ in \cite{CS97}, and the other from the number of different groups, which is $O(\log m)$ in \cite{CS97}. To improve the ratio, we make the speed difference between machines in the same group as large as $\Theta(\log m/\log\log m)$, so that we only have $O(\log m/\log \log m)$ groups. Then, both factors become $O(\log m/\log\log m)$, leading to an $O(\log m/\log\log m)$-approximation for the problem.  One remark is that in the algorithm of \cite{CS97},  the machines in the same group can be assumed to have the same speed, since their original speeds only differ by a factor of 2. 
	In our algorithm, we have to keep the original speeds of machines, to avoid a multiplication of the two factors in the approximation ratio.
	
		
	\paragraph{$R||\sum_j w_jC_j$} Then we sketch how we use our time-indexed LP to recover the $(1.5-c)$-approximation of \cite{BSS16} (with much better constant $c$), for the scheduling problem on unrelated machines to minimize total weighted completion time, namely $R||\sum_{j} w_j C_j$. 
	
	The dependence rounding procedure of \cite{BSS16} is the key component leading to a better than 1.5 approximation for $R||\sum_{j} w_j C_j$. It takes as input a grouping scheme: for each machine $i$, the jobs are partitioned into groups with total fractional assignment on $i$ being at most 1. The jobs in the same group for $i$ will have strong negative correlation towards being assigned to $i$.  To apply the theorem, they first solve the lift-and-project SDP relaxation for the problem,  and construct a grouping scheme based on the optimum solution to the SDP relaxation.  For each machine $i$, the grouping algorithm will put jobs with similar Smith-ratios in the same group, as the 1.5-approximation ratio is caused by conflicts between these jobs.  With the strong negative correlation, the approximation ratio can be improved to $(1.5 - c)$ for a tiny constant $c = 1/(108\times 20000)$.	
		
	We show that the natural time-indexed relaxation \eqref{LP:RwC} for the problem suffices to give a $(1.5-c)$-approximation. 
	To apply the dependence rounding procedure, we need to construct a grouping for every machine $i$.  In our recovered 1.5-approximation algorithm for the problem using \eqref{LP:RwC}, the expected completion time of $j$ is at most $\sum_{i,s}x_{i,j, s}(s+1.5p_{i,j})$, i.e, the average starting time of $j$ plus 1.5 times the average length of $j$ in the LP solution. This suggests that a job $j$ is bad only when its average starting time is very small compared to its average length in the LP solution.  Thus, for each machine $i$, the bad jobs are those with a large weight of scheduling intervals near the beginning of the time horizon.  If these bad intervals for two bad jobs $j$ and $j'$ have large overlap, then they are likely to be put into the same group for $i$.  To achieve this, we construct a set of disjoint basic blocks $\{(2^a, 2^{a+1}]: a \geq -2\}$ in the time horizon. A bad job will be assigned to a random basic block contained in its scheduling interval and two bad jobs assigned to the same basic block will likely to be grouped together.  Besides the improved approximation ratio, we believe the use of \eqref{LP:RwC} will shed light on getting an approximation ratio for the problem that is considerably better than 1.5, as it is simpler than the lift-and-project SDP of \cite{BSS16}.  Another useful property of our algorithm is that the rounding procedure is oblivious to the weights of the jobs; this may be useful when we consider some variants of the problem. 
	\medskip
		
	Finally,  we remark that Theorems~\ref{theorem:main-PprecwC},~\ref{theorem:main-PprecwCuniform} and~\ref{theorem:main-QprecwC} can be easily extended to handle job arrival times.  However, to deliver the key ideas more efficiently, we chose not to consider arrival times. 
	
	\subsection{Other Related Work} There is a vast literature on approximating algorithms for scheduling problems to minimize the total weighted completion time. Here we only discuss the ones that are most relevant to our results; we refer readers to \cite{CK04} for a more comprehensive overview.   When there are no precedence constraints, the problems of minimizing total weighted completion time on identical and related machines ($P||\sum_j w_jC_j$ and $Q||\sum_j w_jC_j$) admit PTASes (\cite{SW99c, CK01}). 
	For the problem of scheduling jobs on unrelated machines with job arrival times to minimize weighted completion time ($R|r_j|\sum_j w_j C_j$), many classic results give $2$-approximation algorithms (\cite{Sku01, SS02, KMP08}); recently Im and Li  \cite{IL16} gave a 1.8687-approximation for the problem, solving a long-standing open problem.  Skutella \cite{Sku16} gave a $\sqrt{e}/(\sqrt{e}-1)\approx 2.542$-approximation algorithm for the single-machine scheduling problem with precedence constraints and job release times, improving upon the previous $e \approx 2.718$-approximation \cite{SS97a}.
	
	Makespan is an objective closely related to weighted completion time.  As we mentioned, for $P|\tprec|C_{\max}$, the Graham's list scheduling algorithm gives a $2$-approximation, which is the best possible under a stronger version of UGC \cite{BK09, Sve10}. For the special case of the problem $Pm|\tprec, p_j = 1|C_{\max}$ where there are constant number of machines and all jobs have unit size, the recent breakthrough result of Levey and Rothvoss \cite{LR16} gave a $(1+\epsilon)$-approximation with running time $\exp\left(\exp\left(O_{m, \epsilon}(\log^2\log n)\right)\right)$, via the LP hierarchy of the natural LP relaxation for the problem. On the negative side, it is not even known whether $Pm|\tprec, p_j = 1|C_{\max}$ is NP-hard or not. For the problem $R||C_{\max}$, i.e, the scheduling of jobs on unrelated machines to minimize the makespan, the classic result of Lenstra, Shmoys and Tardos \cite{LST90} gives a $2$-approximation, which remains the best algorithm for the problem.  Some efforts have been put on a special case of the problem, where each job $j$ has a size $p_j$ and $p_{i, j} \in \{p_j, \infty\}$ for every $i \in M$ (the model is called restricted assignment model.) \cite{EKS08, Sve11, CKL15, KL17}.   
	
\ifdefined\CR
\else
	When jobs have arrival times, the flow time of a job, which is its completion time minus its arrival time, is a more suitable measurement of quality of service. There is a vast literature on scheduling algorithms with flow time related objectives \cite{LR97, CK02, GK07, Sit09, BP10, BK15}; since they are much harder to approximate than completion time related objectives,  most of these works can not handle precedence constraints and need to allow preemptions of jobs.
\fi

	\paragraph{Organization} 
\ifdefined\CR
	The proofs of Theorems~\ref{theorem:main-PprecwC} to~\ref{theorem:main-QprecwC} are given in Sections~\ref{sec:PwC} to~\ref{sec:QwC} respectively.   Due to the space limit, the proof of Theorem~\ref{theorem:main-RwC} is deferred to the full version of the paper. 
\else
	The proofs of Theorems~\ref{theorem:main-PprecwC} to~\ref{theorem:main-RwC} are given in Sections~\ref{sec:PwC} to~\ref{sec:RwC} respectively.  
\fi
	Throughout this paper, we assume the weights, lengths of jobs are integers. Let $T$ be the maximum makespan of any ``reasonable'' schedule.  For problems \PprecwC and $R||\sum_jw_jC_j$, we {\ifdefined \CR \else first \fi} assume $T$ is polynomial in $n$. By losing a $1+\epsilon$ factor in the approximation ratio, we can handle the case where $T$ is super-polynomial. 
\ifdefined \CR
	This and the other omitted proofs can be found in the full version of the paper.
\else
	This is shown in Section~\ref{sec:superT}.
\fi
			
%
%
%
%
%
%
%
%
%
%



\section{Scheduling on Identical Machines with Job Precedence Constraints}
\label{sec:PwC}

	In this section we give our $(2 + 2\ln 2 + \epsilon)$-approximation for the problem of scheduling precedence-constrained jobs on identical machines, namely $P|\tprec|\sum_{j} w_j C_j$.  We solve \eqref{LP:PprecWc} and run the job-driven list-scheduling algorithm of \cite{MQS98} with a random order of jobs. 


	\subsection{Time-Indexed LP Relaxation for $P|\tprec|\sum_j w_jC_j$} In the identical machine setting,  we do not need to specify which machine each job is assigned to; it suffices to specify a scheduling interval $(t-p_j, t]$ for every job $j$. A folklore result says that a set of intervals can be scheduled on $m$ machines if and only if their \emph{congestion} is at most $m$: i.e, the number of intervals covering any time point is at most $m$. Given such a set of intervals, there is a simple greedy algorithm to produce the assignment of intervals to machines.  Thus, in our LP relaxation and in the list-scheduling algorithm, we  focus on finding a set of intervals with congestion at most $m$.
	
	We use \eqref{LP:PprecWc} for both \PprecwC and \PprecupwC. Let $T = \sum_j p_j$ be a trivial upper bound on the makespan of any reasonable schedule. In the LP relaxation, we have a variable $x_{j, t}$ indicating whether job $j$ is scheduled in $(t-p_j, t]$, for every $j \in J$ and $t \in [T]$. Throughout this and the next section, $t$ and $t'$ are restricted to be integers in $[T]$, and $j$, $j'$ and $j^*$ are restricted to be jobs in $J$.
	
\ifdefined\CR
	\begin{equation}
	\min \qquad \sum_{j}w_j\sum_{t}x_{j, t}t \quad \text{s.t.} \tag{$\text{LP}_{\text{P}|\text{prec}|\text{wC}}$} \label{LP:PprecWc}
	\end{equation}
	\vspace*{-28pt}
	
		\begin{align}
			\sum_t x_{j, t} &= 1 & \quad \forall& j \label{LPC:PwC-job-scheduled} \\[1pt]
			\sum_{j, t \in [t', t'+p_j)} x_{j,t} &\leq m & \quad \forall& t' \label{LPC:PwC-congestion} \\
			\sum_{t < t' +p_{j'}} x_{j', t} &\leq \sum_{t < t'}x_{j,t} &\quad  \forall & j, j',t': j \prec j' \label{LPC:PwC-precedence}\\
			x_{j, t} & = 0 & \quad \forall &j, t < p_j\label{LPC:PwC-length} \\
			x_{j, t} &\geq 0 & \quad \forall& j, t \label{LPC:PwC-non-negative}
		\end{align}
\else
	\begin{equation}
		\min \qquad \sum_{j}w_j\sum_{t}x_{j, t}t \qquad \qquad \text{s.t.} \tag{$\text{LP}_{\text{P}|\text{prec}|\text{wC}}$} \label{LP:PprecWc}
	\end{equation}
	\vspace*{-28pt}
	
	\noindent
	\begin{minipage}{0.37\textwidth}
		\vspace*{8pt}
		\begin{align}
			\sum_t x_{j, t} &= 1 & \quad \forall& j \label{LPC:PwC-job-scheduled} \\[1pt]
			\sum_{j, t \in [t', t'+p_j)} x_{j,t} &\leq m & \quad \forall& t' \label{LPC:PwC-congestion} \\
			\nonumber
		\end{align}
	\end{minipage}\hspace*{0.06\textwidth}
	\begin{minipage}{0.56\textwidth}
		\begin{align}
			\sum_{t < t' +p_{j'}} x_{j', t} &\leq \sum_{t < t'}x_{j,t} &\quad  \forall & j, j',t': j \prec j' \label{LPC:PwC-precedence}\\
			x_{j, t} & = 0 & \quad \forall &j, t < p_j\label{LPC:PwC-length} \\
			x_{j, t} &\geq 0 & \quad \forall& j, t \label{LPC:PwC-non-negative}
		\end{align}
	\end{minipage}
\fi
	
	The objective function is $\sum_{j}w_j \sum_t x_{j, t}t$, i.e, the total weighted completion time over all jobs. Constraint~\eqref{LPC:PwC-job-scheduled} requires every job $j$ to be scheduled. Constraint~\eqref{LPC:PwC-congestion} requires that at every time point $t'$, at most $m$ jobs are being processed. 
	Constraint~\eqref{LPC:PwC-precedence} requires that for every $j \prec j'$ and $t'$, $j'$ completes before $t' + p_{j'}$ only if $j$ completes before time $t'$. A job $j$ can not complete before $p_j$ (Constraint~\eqref{LPC:PwC-length}) and all variables are non-negative (Constraint~\eqref{LPC:PwC-non-negative}).  

	We solve \eqref{LP:PprecWc} to obtain $x \in [0, 1]^{J \times [T]}$.  Let $C_j = \sum_{t}x_{j, t}t$ be the completion time of $j$ in the LP solution. Thus, the value of the LP is $\sum_{j} w_j C_j$. For every $\theta \in [0, 1/2]$, we define $M^\theta_j = C_j - (1-\theta)p_j$. Our algorithm is simply the following:  choose $\theta$ uniformly at random from $(0, 1/2]$, and output the schedule returned by job-driven-list-scheduling($M^\theta$) (described in Algorithm~\ref{alg:PwC-list-scheduling}).

	\begin{algorithm}
		\caption{job-driven-list-scheduling$\left(M\right)$} \label{alg:PwC-list-scheduling}
		\textbf{Input}:  a vector $M \in \R_{\geq 0}^J$ used to decide the order of scheduling, s.t. if $j \prec j'$, then $M_j < M_{j'}$  \\
		\textbf{Output}: starting and completion time vectors ${\tilde S}, \tilde C \in \R_{\geq 0}^J$
		\begin{algorithmic}[1]
			\STATE \bffor every $j \in J$ in non-decreasing order of $M_j$, breaking ties arbitrarily
			\STATE \hspace*{\algorithmicindent} let $t \gets \max_{j' \prec j}\tilde C_{j'}$, or $t \gets 0$ if $\{j' \prec j\} = \emptyset$
			\STATE \hspace*{\algorithmicindent} find the minimum $s \geq t$ such that we can schedule $j$ in interval $(s, s + p_j]$, without increasing the congestion of the schedule to $m+1$
			\STATE \hspace*{\algorithmicindent} ${\tilde S}_j \gets s, \tilde C_j \gets s + p_j$, and schedule $j$ in $({\tilde S}_j, \tilde C_j]$
			\STATE \bfreturn $({\tilde S}, \tilde C)$
		\end{algorithmic}
	\end{algorithm}

	We first make a simple observation regarding the $C$ vector, which follows from the constraints in the LP.
	\begin{claim}
		\label{claim:difference-for-C}
		For every pair of jobs $j, j'$ such that $j \prec j'$, we have $C_j + p_{j'} \leq C_{j'}$.
	\end{claim}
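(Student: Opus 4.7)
The plan is to compare $C_j$ and $C_{j'}$ via tail sums and then invoke constraint~\eqref{LPC:PwC-precedence} term by term after a shift by $p_{j'}$. Defining the cumulative distribution $F_j(s) := \sum_{t < s} x_{j,t}$, I will rewrite each completion time as the tail sum
$$C_j \;=\; \sum_{s \geq 1}\bigl(1 - F_j(s)\bigr), \qquad C_{j'} \;=\; \sum_{s \geq 1}\bigl(1 - F_{j'}(s)\bigr),$$
which is valid because $\sum_t x_{j,t} = 1$ by~\eqref{LPC:PwC-job-scheduled} and the $x$-values are supported on $t \in [T]$, so both sums are finite.

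Next I will split the tail sum for $C_{j'}$ at $s = p_{j'}$. The first $p_{j'}$ summands each equal $1$: constraint~\eqref{LPC:PwC-length} forces $x_{j',t} = 0$ for $t < p_{j'}$, so $F_{j'}(s) = 0$ for every $s \leq p_{j'}$, contributing exactly $p_{j'}$ in total. For $s > p_{j'}$, I will apply~\eqref{LPC:PwC-precedence} with $t' = s - p_{j'}$, which gives $F_{j'}(s) \leq F_j(s - p_{j'})$. Reindexing by $s' = s - p_{j'}$ then yields
$$\sum_{s > p_{j'}}\bigl(1 - F_{j'}(s)\bigr) \;\geq\; \sum_{s' \geq 1}\bigl(1 - F_j(s')\bigr) \;=\; C_j,$$
and adding the two pieces produces $C_{j'} \geq p_{j'} + C_j$, as claimed.

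There is no real obstacle here; the claim is essentially a direct consequence of~\eqref{LPC:PwC-precedence} once one passes from distributions to expectations via the tail-sum identity. The only thing to check carefully is that the indexing works out: for the shift $t' = s - p_{j'}$ to be admissible in~\eqref{LPC:PwC-precedence}, we need the precedence constraint to be available at every integer $t' \geq 1$, which it is, and for the initial $p_{j'}$ block to collapse correctly, we rely on~\eqref{LPC:PwC-length}. Both are routine bookkeeping.
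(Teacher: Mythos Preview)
Your proof is correct and is essentially the same argument as the paper's: both rewrite $C_j$ and $C_{j'}$ as tail sums $\sum_s(1-F(s))$, apply Constraint~\eqref{LPC:PwC-precedence} termwise to compare $F_{j'}(s)$ with $F_j(s-p_{j'})$, and use Constraint~\eqref{LPC:PwC-length} to collapse the first $p_{j'}$ terms. The paper carries out the same chain of equalities inline without naming $F_j$, but the structure is identical.
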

\ifdefined\CR
\else
	\vspace*{-30pt}
	\begin{flalign*}
	Proof. && & \quad C_j + p_{j'} = \sum_{t'}x_{j,t'} t'  +p_{j'} =  \sum_{t',t\leq t'} x_{j, t'} + p_{j'}= \sum_{t}\left(1-\sum_{t'< t}x_{j,t'}\right) +p_{j'}\\
	&& &\leq \sum_t\left(1-\sum_{t' <  t + p_{j'}}x_{j', t'}\right) + p_{j'} = \sum_{t,t'\geq t+p_{j'}}x_{j',t'} + p_{j'} = \sum_{t'} x_{{j'}, t'} \left|\set{t:t \leq t' - p_{j'}}\right| +p_{j'} \\
	&&  &= \sum_{t' \geq p_{j'}}(t'-p_{j'})x_{j',t'} + p_{j'} = \sum_{t'}t'x_{j',t'} - p_{j'} + p_{j'}= C_{j'}. &
	\end{flalign*}
	The inequality used Constraint~\eqref{LPC:PwC-precedence}; some of the equalities used Constraint~\eqref{LPC:PwC-job-scheduled} and~\eqref{LPC:PwC-length} and the definitions of $C_j$ and $C_{j'}$. \hfill \qed 
\fi	
	
	Indeed, our analysis does not use the full power of Constraint~\eqref{LPC:PwC-precedence}, except for the above claim which is implied by the constraint.  Thus, we could simply use $C_j + p_{j'} \leq C_{j'}$ (along with the definitions of $C_j$'s) to replace Constraint~\eqref{LPC:PwC-precedence} in the LP. However, in the algorithm for the problem with unit job lengths (described in Section~\ref{sec:PprecwCUniform}), we do need Constraint~\eqref{LPC:PwC-precedence}. To have a unified LP for both problems, we chose to use Constraint~\eqref{LPC:PwC-precedence}.  Our algorithm does not use $x$-variables, but we need them in the analysis.
	

	\subsection{Analysis}  
	Our analysis is very similar to that in \cite{MQS98}.  	We fix a job $j^*$ from now on and we shall upper bound $\frac{\E[\tilde C_{j^*}]}{C_{j^*}}$.   Notice that once $j^*$ is scheduled by the algorithm, $\tilde C_{j^*}$ is determined and will not be changed later.  Thus, we call the schedule at the moment the algorithm just scheduled $j^*$ the \emph{final schedule}. 	
	
	We can then define idle and busy points and slots w.r.t this final schedule. We say a time point $\tau \in (0, T]$ is \emph{busy} if the congestion of the intervals at $\tau$ is $m$ in the schedule (in other words, all the $m$ machines are being used at $\tau$ in the schedule); we say $\tau$ is \emph{idle} otherwise.  We say a left-open-right-closed interval (or slot) $(\tau, \tau']$ (it is possible that $\tau = \tau'$, in which case the interval is empty) is idle (busy, resp.) if all time points in  $(\tau, \tau']$ are idle (busy, resp.). 
	
	Then we analyze the total length of busy and idle time slots before $\tilde C_{j^*}$ respectively, w.r.t the final schedule.  For a specific $\theta \in (0, 1/2]$, the techniques in \cite{MQS98} can bound the total length of idle slots by $\frac{C_{j^*}}{1-\theta}$ and the total length of busy slots by $\frac{C_{j^*}}{\theta}$. Thus choosing $\theta = 1/2$ gives the best $4$-approximation, which is the best using this analysis.  Our improvement comes from the bound on the total length of busy time slots. We show that the expected length of busy slots before $\tilde C_{j^*}$ is at most $2 C_{j^*}$, which is much better than the bound $\E_{\theta\sim_R (0, 1/2]}\frac{C_{j^*}}{\theta} = \infty$ given by directly applying the bound for every $\theta$. We remark that the $\frac{C_{j^*}}{\theta}$ bound for each individual $\theta$ is tight and thus can not be improved; our improvement comes from considering all possible $\theta$'s together.


	\paragraph{Bounding the Expected Length of Idle Slots} 
		We first bound the total length of idle slots before $\tilde C_{j^*}$, the completion time of job $j^*$ in the schedule produced by the algorithm. Lemma~\ref{lemma:PwC-previous-j} and~\ref{lemma:PwC-bound-idle} are established in \cite{MQS98} and 
\ifdefined\CR
		their proofs can be found in the full version of the paper for completeness.
\else
		we include their proofs for completeness. 
\fi
		\begin{lemma}
			\label{lemma:PwC-previous-j}  Let $j \in J$ be a job in the final schedule with ${\tilde S}_j > 0$. Then we can find a job $j'$ such that 
			\begin{itemize}[topsep=3pt,itemsep=0pt]
				\item either $j' \prec j$ and $(\tilde C_{j'}, {\tilde S}_j]$ is busy,
				\item or $M_{j'} \leq M_j, S_{j'} < S_{j}$ and $(S_{j'}, {\tilde S}_j]$ is busy.
			\end{itemize}
		\end{lemma}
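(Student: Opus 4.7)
The plan is to work at the moment the algorithm inserts $j$ and trace backward in time from $\tilde S_j$ through a maximal busy block, so as to pin down either a precedence constraint or a capacity conflict that forced $j$ to start no earlier. Set $t := \max_{j'' \prec j}\tilde C_{j''}$ (and $t := 0$ if $j$ has no predecessor); the algorithm chooses $\tilde S_j$ as the smallest integer $s \geq t$ for which placing $j$ in $(s, s+p_j]$ would not push any slot's congestion above $m$ in the snapshot schedule at that moment. A useful monotonicity is that any slot already at congestion $m$ in the snapshot remains at congestion $m$ (hence busy) in the final schedule, since later insertions cannot exceed the $m$-machine cap.

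If $\tilde S_j = t$, then $\tilde S_j > 0$ forces $t > 0$, so some predecessor $j'$ satisfies $\tilde C_{j'} = \tilde S_j$; the interval $(\tilde C_{j'}, \tilde S_j]$ is empty and hence vacuously busy, giving the first clause. Otherwise $\tilde S_j > t$, and minimality of $\tilde S_j$ forces the slot $(\tilde S_j - 1, \tilde S_j]$ to already be busy in the snapshot (else $j$ would have fit at start time $\tilde S_j - 1$). Let $\tau \geq 0$ be the smallest integer such that $(\tau, \tilde S_j]$ is entirely busy in the snapshot.

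If $\tau \leq t$ (which forces $t > 0$), then any predecessor $j'$ with $\tilde C_{j'} = t \geq \tau$ has $(\tilde C_{j'}, \tilde S_j] \subseteq (\tau, \tilde S_j]$, which is busy in the final schedule by the monotonicity, so the first clause holds. If instead $\tau > t$, then by minimality of $\tau$ the slot $(\tau - 1, \tau]$ has congestion strictly less than $m$ in the snapshot, while $(\tau, \tau + 1]$ has congestion exactly $m$; so at least one job $j'$ in the snapshot must have $\tilde S_{j'} = \tau$. Because $j'$ was inserted before $j$ we have $M_{j'} \leq M_j$, and $\tilde S_{j'} = \tau < \tilde S_j$ with $(\tilde S_{j'}, \tilde S_j]$ busy in the final schedule; identifying the actual start times $S_{j'} := \tilde S_{j'}$ and $S_j := \tilde S_j$ (the only start-time notion the algorithm produces), this matches the second clause.

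The main obstacle is reconciling the two notions of ``busy'': the forcing argument for $\tilde S_j$ is inherently about the snapshot at the insertion of $j$, whereas the lemma is stated against the final schedule. The monotonicity argument above bridges the two in one direction, which is all we need since our chosen busy interval is certified in the snapshot. A minor care point is the boundary $\tau = 0$, which the same case analysis handles uniformly once one allows $t = 0$.
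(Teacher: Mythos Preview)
Your approach is correct and in fact a bit cleaner than the paper's. The paper works directly in the final schedule: it takes the maximal busy block $(\tau,\tau']$ there, and when no predecessor of $j$ finishes at or after $\tau$, it builds a chain of jobs inside that block to locate some $j'$ with $\tau \le \tilde S_{j'} < \tilde S_j \le \tilde C_{j'}$, then argues $M_{j'}\le M_j$ by contradiction (if $j'$ were inserted after $j$, then at $j$'s iteration the interval $(\tilde S_{j'},\tilde C_j]$ would have been available and $j$ would have started earlier). You instead take the maximal busy block in the \emph{snapshot} at $j$'s iteration; any job you pick there was inserted before $j$, so $M_{j'}\le M_j$ is automatic, and your monotonicity remark lifts busyness from the snapshot to the final schedule. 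This trades the paper's chain-plus-contradiction step for a one-line monotonicity observation.

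There is, however, a small hole in your case split. The parenthetical ``(which forces $t>0$)'' in the branch $\tau\le t$ is unjustified: if $j$ has no predecessors then $t=0$ by convention, and the snapshot busy block can certainly extend back to time $0$, giving $\tau=0=t$. In that situation neither branch applies as written---the first-clause branch needs an actual predecessor with $\tilde C_{j'}=t$, and the second-clause branch needs $\tau>t$ so that the slot $(\tau-1,\tau]$ exists and the congestion-jump argument makes sense. The fix is immediate: when $\tau=0$, the slot $(0,1]$ already has congestion $m$ in the snapshot, so some $j'$ there has $\tilde S_{j'}=0<\tilde S_j$, and the second clause holds with $(\tilde S_{j'},\tilde S_j]=(0,\tilde S_j]$ busy. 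Your final paragraph gestures at this boundary but does not actually dispatch it.
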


\ifdefined\CR
\else	
	\begin{proof}
		Recall that busy and idle slots are defined w.r.t the final schedule, i.e, the schedule at the moment we just scheduled $j^*$.  We first assume $({\tilde S}_j -1, {\tilde S}_j]$ is idle. Then before the iteration for $j$,  the interval $({\tilde S}_j - 1, {\tilde S}_j +p_j]$ is available for scheduling and thus scheduling $j$ at $({\tilde S}_j - 1, {\tilde S}_j - 1 + p_j]$ will not violate the congestion constraint. Therefore, it must be the case that ${\tilde S}_j = \max_{j' \prec j}\tilde C_{j'}$. So, there is a job $j' \prec j$ such that $\tilde C_{j'} = {\tilde S}_j$. Since $(\tilde C_{j'}, {\tilde S}_j] = \emptyset$, the first property holds.
		
		So we can assume $({\tilde S}_j -1, {\tilde S}_j]$ is busy.  Let $(\tau, \tau']$ be the maximal busy slot that contains $({\tilde S}_j - 1, {\tilde S}_j]$. If there is a job $j' \prec j$ such that $\tilde C_{j'} \geq \tau$, then the first property holds since $(\tilde C_{j'}, {\tilde S}_j]$ is busy.  So, we can assume that no such job $j'$ exists. 
		
		As $\tau$ is the starting point of a maximal busy slot,  there is a job $j_1$ with $S_{j_1} = \tau < {\tilde S}_j$. If $\tilde C_{j_1} < {\tilde S}_j$, then there is a job $j_2$ with $\tilde C_{j_1} = S_{j_2} < {\tilde S}_j$, as $(C_{j_1}, C_{j_1} + 1]$ is busy.  If $\tilde C_{j_2} < {\tilde S}_j$ then there is job $j_3$ with $\tilde C_{j_2} = \tilde S_{j_3} < {\tilde S}_j$. We can repeat this process to find a job $j_k$ such that $\tau \leq S_{j_k} < {\tilde S}_j \leq \tilde C_{j_k}$.  Let $j' = j_k$; we claim that $M_{j'} \leq M_j$. Otherwise, $j$ is considered before $j'$ in the list scheduling algorithm. At the iteration for $j$, the interval $(\tilde S_{j'}, \tilde C_j]$ is available. Since we assumed that there are no jobs $j'' \prec j$ such that $\tilde C_{j''} \geq \tau$, $j$ should be started at $S_{j'}$,  a contradiction.  Thus, $M_{j'} \leq M_j$ and $j'$ satisfies the second property of the lemma. 
	\end{proof}
\fi	
	Applying Lemma~\ref{lemma:PwC-previous-j} repeatedly, we can identify a chain of jobs whose scheduling intervals cover all the idle slots before $\tilde C_{j^*}$, which can be used to bound the total length of these slots. This leads to the following lemma from \cite{MQS98}:
		
	\begin{lemma}
		\label{lemma:PwC-bound-idle}
		The total length of idle time slots before $\tilde C_{j^*}$ is at most  $\frac{C_{j^*}}{1-\theta}$.
	\end{lemma}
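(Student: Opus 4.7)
The plan is to iterate Lemma~\ref{lemma:PwC-previous-j} to build a chain of jobs $j^* = j_0, j_1, \ldots, j_k$, where each step produces a certified busy slot, and then bound the idle length by what these slots fail to cover. Setting $j_0 = j^*$, as long as $\tilde S_{j_{i-1}} > 0$ I apply the lemma to obtain $j_i$ together with a busy slot ending at $\tilde S_{j_{i-1}}$; call its left endpoint $b_i$, so that $b_i = \tilde C_{j_i}$ in case~(a) (with $j_i \prec j_{i-1}$) and $b_i = \tilde S_{j_i}$ in case~(b) (with $M_{j_i} \le M_{j_{i-1}}$). In either case $\tilde S_{j_i} < \tilde S_{j_{i-1}}$, so the chain eventually reaches some $j_k$ with $\tilde S_{j_k} = 0$.

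The next step is to upper-bound the total length of idle slots before $\tilde C_{j^*}$ by $p_{j^*} + \sum_{i:\text{case (a)}} p_{j_i}$. Since $\tilde S_{j_i} \le b_i \le \tilde S_{j_{i-1}}$, the slots $(b_i, \tilde S_{j_{i-1}}]$ coming from successive iterations are pairwise disjoint busy subsets of $(0, \tilde S_{j^*}]$. Using the identity $\tilde S_{j_{i-1}} - \tilde S_{j_i} = (\tilde S_{j_{i-1}} - b_i) + p_{j_i}\cdot \mathbf{1}[\text{case (a) at step }i]$ and telescoping from $i=1$ to $k$ (noting $\tilde S_{j_k}=0$), the total chain-busy length works out to $\sum_i(\tilde S_{j_{i-1}}-b_i) = \tilde S_{j^*} - \sum_{i:\text{case (a)}} p_{j_i}$. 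Subtracting from $\tilde C_{j^*} = \tilde S_{j^*} + p_{j^*}$ and noting that chain-busy slots only undercount the true busy portion yields the desired idle-length upper bound.

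Finally, I would convert this into the target $C_{j^*}/(1-\theta)$ bound via the $M$ values. By Claim~\ref{claim:difference-for-C}, case~(a) gives $C_{j_i} + p_{j_{i-1}} \le C_{j_{i-1}}$, from which a direct calculation yields $M_{j_{i-1}} - M_{j_i} \ge \theta p_{j_{i-1}} + (1-\theta)p_{j_i} \ge (1-\theta) p_{j_i}$, while case~(b) gives $M_{j_{i-1}} - M_{j_i} \ge 0$ by definition of $M$. Summing along the chain and using $M_{j_k} \ge \theta p_{j_k} \ge 0$ (which follows from Constraint~\eqref{LPC:PwC-length}), I obtain $M_{j^*} \ge (1-\theta)\sum_{i:\text{case (a)}} p_{j_i}$. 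Substituting $M_{j^*} = C_{j^*}-(1-\theta)p_{j^*}$ and rearranging gives $p_{j^*} + \sum_{i:\text{case (a)}} p_{j_i} \le C_{j^*}/(1-\theta)$, as required. The main obstacle is the telescoping/disjointness bookkeeping needed in the middle step; once that is in place, the $M$-based estimate is a routine deduction from Claim~\ref{claim:difference-for-C}.
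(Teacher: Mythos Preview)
Your proof is correct and follows essentially the same approach as the paper: build the chain via Lemma~\ref{lemma:PwC-previous-j}, observe that the idle slots in $(0,\tilde S_{j^*}]$ are covered by the processing intervals of the case~(a) jobs, and then telescope the $M^\theta$ differences using Claim~\ref{claim:difference-for-C} to get $\sum_{\text{case (a)}} p_{j_i} \le M^\theta_{j^*}/(1-\theta) = C_{j^*}/(1-\theta) - p_{j^*}$. The only cosmetic difference is that the paper reverses the index so that $j_0$ is the job with $\tilde S_{j_0}=0$ and $j_k = j^*$, and phrases the idle bound as ``idle slots are contained in $\bigcup_{\ell\in L}(\tilde S_{j_\ell},\tilde C_{j_\ell}]$'' rather than via your explicit telescoping of the chain-busy length; the arithmetic is identical.
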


\ifdefined\CR
\else
	\begin{proof}
		Let  $j_0 = j^*$; if $S_{j_0} > 0$, we apply Lemma~\ref{lemma:PwC-previous-j} for $j = j_0$ to find a job $j'$ and let $j_1 = j'$. If $S_{j_1} > 0$, then we apply the lemma again for $j = j_1$ to find a job $j'$ and let $j_2 = j'$. We can repeat the process until we reach a job $j_k$ with $S_{j_k} = 0$. For convenience, we shall \emph{revert the sequence} so that $j^* = j_k$ and $S_{j_0} = 0$.  Thus, we have found a sequence $j_0, j_1, \cdots, j_k = j^*$ of jobs such that $S_{j_0} = 0$, and for each $\ell = 0, 1, 2, \cdots, k-1$, either (i) $j_{\ell} \prec j_{\ell+1}$ and $\left(\tilde C_{j_{\ell}}, S_{j_{\ell+1}}\right]$ is busy, or (ii) $M_{j_{\ell}} \leq M_{j_{\ell+1}}$, $S_{j_{\ell}} < S_{j_{\ell+1}}$ and $(S_{j_{\ell}}, S_{j_{\ell+1}}]$ is busy. 
		
		We say $\ell$ is of type-1 if $\ell$ satisfies (i); otherwise, we say $\ell$ is of type-2 ($\ell$ must satisfy (ii)). The interval $(0, S_{j^*}]$ can be broken into $k$ intervals: $(S_{j_0}, S_{j_1}], (S_{j_1}, S_{j_2}], \cdots, (S_{j_{k-1}}, S_{j_k}]$.  If some $\ell$ is of type-2, then $(S_{j_\ell}, S_{j_{\ell+1}}]$ is busy; if $\ell$ is of type-1, then $(\tilde C_{j_\ell}, S_{j_{\ell+1}}]$ is busy.  Let $L$ be the set of type-1 indices $\ell \in [0, k-1]$. Then, all the idle slots in $(0, S_{j^*}]$ are contained in $\displaystyle \union_{\ell \in L}(S_{j_\ell}, \tilde C_{j_\ell}]$.  With this observation, we can bound the total length of idle slots in $(0, S_{j^*}]$ by
		\begin{align*}
			&\  \sum_{\ell \in L} p_{j_\ell}\leq \sum_{\ell \in L} \frac{1}{1-\theta} (M^\theta_{j_{\ell+1}} - M^\theta_{j_{\ell}})
			 \leq \frac{1}{1-\theta}\sum_{\ell = 0}^{k-1} (M^\theta_{j_{\ell+1}} - M^\theta_{j_{\ell}}) \leq \frac{M^\theta_{j^*}}{1-\theta} = \frac{C_{j^*}}{1-\theta} - p_{j^*}.
		\end{align*}
		The first inequality holds since $p_{j_\ell } = \frac{1}{1-\theta}\left(C_{j_\ell} - M^\theta_{j_\ell}\right) \leq \frac{1}{1-\theta}\left(C_{j_{\ell+1}} - p_{j_{\ell+1}} - M^\theta_{j_\ell}\right) \leq \frac{1}{1-\theta}\big(M^\theta_{j_{\ell+1}} - M^\theta_{j_\ell}\big)$, due to Claim~\ref{claim:difference-for-C}.  The second inequality is by the fact that $M_{j_\ell} \leq M_{j_{\ell+1}}$ for every $\ell \in [0, k-1]$ and the third inequality is by $j_k = j^*$ and $M^\theta_{j_0} \geq 0$. The equality is by the definition of $M^\theta_{j^*}$. Thus, the total length of idle slots in $(0, \tilde C_{j^*}]$ is at most $\frac{C_{j^*}}{1-\theta}$.
	\end{proof}
\fi	

	Thus, the expected length of idle slots before $\tilde C_{j^*}$, over all choices of $\theta$, is at most
	\begin{align}
		\int_{\theta = 0}^{1/2} \frac{C_{j^*}}{1-\theta} 2\sfd\theta = \left(2\ln\frac{1}{1-\theta}\Big|_{\theta = 0}^{1/2}\right)C_{j^*} = (2\ln 2)C_{j^*}. \label{inequ:PwC-bound-idle-integral}
	\end{align}

	\paragraph{Bounding the Expected Length of Busy Slots}
	We now proceed to bound the total length of busy slots before $\tilde C_{j^*}$.  This is the key to our improved approximation ratio. For every $\theta \in [0, 1/2]$, let  $J_\theta = \{j: M^\theta_j \leq \tilde C_{j^*}\}$. Thus, if $\theta < \theta'$, we have $J_\theta \supseteq J_{\theta'}$. For every $\theta \in [0, 1/2]$ and $j \in J_0$, define $\theta_j = \sup\set{\theta \in [0,1/2]: j \in J_\theta}$; this is well-defined since $j \in J_0$.  For any subset $J' \subseteq J$ of jobs, we define $p(J') = \sum_{j \in J'} p_j$ to be the total length of all jobs in $J'$. 
	\begin{lemma}
		\label{lemma:PwC-bound-busy}
		For a fixed $\theta \in (0, 1/2]$, the total length of busy slots before $\tilde C_{j^*}$ is at most $\frac{1}{m}p(J_\theta)$.
	\end{lemma}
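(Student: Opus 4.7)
My plan is to charge the busy time $B$ to the total processing volume of the jobs already present in the final schedule (the state just after $j^*$ is placed by Algorithm~\ref{alg:PwC-list-scheduling}), and then to identify that volume with $p(J_\theta)$. By the ordering used in the algorithm, the final schedule consists of $j^*$ together with the set $\calJ := \{j \neq j^* : M^\theta_j \leq M^\theta_{j^*}\}$ of jobs considered before $j^*$.

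The first step is a straightforward counting argument: every busy time point $\tau \in (0,\tilde C_{j^*}]$ has all $m$ machines occupied, and every unit of processing performed there comes from a job in $\calJ \cup \{j^*\}$ whose scheduling interval intersects $(0,\tilde C_{j^*}]$, in particular one with $\tilde S_j < \tilde C_{j^*}$. Integrating over all busy time and bounding each job's contribution by $p_j$ yields
\[
mB \;\le\; \sum_{j\in \calJ\cup\{j^*\},\ \tilde S_j < \tilde C_{j^*}} p_j.
\]
The second step is to show that every job appearing in this sum lies in $J_\theta$, i.e., satisfies $M^\theta_j \le \tilde C_{j^*}$. For $j\in\calJ$ this reduces, via $M^\theta_j\le M^\theta_{j^*}$, to verifying the single inequality $M^\theta_{j^*}\le \tilde C_{j^*}$, which also handles $j=j^*$ itself. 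Unpacking definitions, this is $C_{j^*}\le \tilde C_{j^*} + (1-\theta)p_{j^*}$, and I would derive it from LP optimality: the algorithm's integer schedule is itself a feasible LP solution in which $j^*$ completes at $\tilde C_{j^*}$, and the $\theta p_{j^*}$-slack built into the definition of $M^\theta$ is enough to absorb any advantage the LP optimum takes over that feasible solution on $j^*$.

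Putting the two steps together gives $mB \le \sum_{j\in J_\theta}p_j = p(J_\theta)$, which is the claimed bound. The main obstacle is the second step: establishing $M^\theta_{j^*}\le \tilde C_{j^*}$ uniformly, since the LP is only optimal in aggregate and a single $C_{j^*}$ could conceivably exceed $\tilde C_{j^*}$ by more than $(1-\theta)p_{j^*}$. In such a pathological regime I would fall back on the rectangle viewpoint of the LP together with the congestion constraint~\eqref{LPC:PwC-congestion}: the rectangles of jobs in $J_\theta$ then carry enough total mass inside $(0,\tilde C_{j^*}]$ that a direct volumetric argument gives $p(J_\theta) \ge m\tilde C_{j^*} \ge mB$, closing the remaining gap.
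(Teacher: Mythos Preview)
Your counting step is exactly what the paper does: in the final schedule only the jobs with $M^\theta_j\le M^\theta_{j^*}$ are present, and each busy time point consumes $m$ units of their processing, so $m$ times the busy length is at most $\sum_{j:M^\theta_j\le M^\theta_{j^*}}p_j$.

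The gap is in your second step. The inequality $M^\theta_{j^*}\le\tilde C_{j^*}$ that you isolate is not true in general. Take $m=1$ and a single job $j^*$ with $p_{j^*}=1$ and $w_{j^*}=0$: the LP is free to set $C_{j^*}=T$, giving $M^\theta_{j^*}=T-(1-\theta)$, while the algorithm completes $j^*$ at $\tilde C_{j^*}=1$. Your LP-optimality argument cannot repair this, because optimality only controls the weighted aggregate $\sum_j w_j C_j$, not any individual $C_{j^*}$. Your volumetric fallback also fails in the same example: reading $J_\theta=\{j:M^\theta_j\le\tilde C_{j^*}\}$ gives $J_\theta=\emptyset$, hence $p(J_\theta)=0$, yet the busy length before $\tilde C_{j^*}$ is $1$; under that reading the lemma itself is false.

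The resolution is that $J_\theta$ is meant to be $\{j:M^\theta_j\le C_{j^*}\}$, with the \emph{LP} completion time $C_{j^*}$ rather than $\tilde C_{j^*}$: the paper's proof of this lemma and of Lemma~\ref{lemma:PwC-bound-busy-integral} both take $C_{j^*}$ as the threshold, and the stated monotonicity $J_\theta\supseteq J_{\theta'}$ for $\theta<\theta'$ also requires a threshold that does not vary with $\theta$. With that definition your second step collapses to the trivial bound $M^\theta_{j^*}=C_{j^*}-(1-\theta)p_{j^*}\le C_{j^*}$, and the lemma follows immediately from your first step.
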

	\begin{proof}
		The total length of busy time slots in $(0, \tilde C_{j^*}]$ is at most  $\frac1m$ times the total length of jobs scheduled so far, which is at most \begin{flalign*}
			&& \frac1m\sum_{j \in J: M^\theta_j \leq M^\theta_{j^*}}p_j \leq \frac1m\sum_{j \in J: M^\theta_j \leq  C_{j^*}}p_j = \frac1m p(J_\theta).&& \qedhere
		\end{flalign*}
	\end{proof}

	The key lemma for our improved approximation ratio is an upper bound on the above quantity when $\theta$ is uniformly selected from  $(0, 1/2]$:	
	\begin{lemma}
		\label{lemma:PwC-bound-busy-integral}
		$\displaystyle \int_{\theta=0}^{1/2}p(J_\theta)\sfd \theta \leq m  C_{j^*}$.
	\end{lemma}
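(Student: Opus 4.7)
The plan is to use Fubini to swap the integral and sum, and then to bound the resulting sum via an LP-congestion charging argument.

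Since $M^\theta_j = C_j-(1-\theta)p_j$ is strictly increasing in $\theta$, for each $j\in J_0$ the set $\{\theta\in[0,1/2]: j\in J_\theta\}$ is an interval $[0,\theta_j]$ with $\theta_j\le 1/2$. Fubini then gives
\[
\int_0^{1/2} p(J_\theta)\,\mathrm{d}\theta \;=\; \sum_{j\in J_0} p_j\,\theta_j,
\]
and solving the defining inequality for $\theta_j$ yields $p_j\theta_j = \min(p_j/2,\, p_j+C_{j^*}-C_j)$. Using $C_j\ge p_j$ (from~\eqref{LPC:PwC-length}), a short case analysis shows $p_j\theta_j \le C_{j^*}$ in both the ``class~1'' case ($C_j \le C_{j^*}+p_j/2$, forcing $p_j\le 2C_{j^*}$) and the ``class~2'' case ($p_j\theta_j = p_j+C_{j^*}-C_j \le C_{j^*}$).

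To finish, I would associate to each $j\in J_0$ a virtual interval $R_j := (C_j-p_j,\,C_j-p_j+p_j\theta_j] \subseteq (0,C_{j^*}]$ of length $p_j\theta_j$ and construct a fractional witness density $y_{j,\tau}$ on $(0,C_{j^*}]$ with $\int y_{j,\tau}\,\mathrm{d}\tau \ge p_j\theta_j$ per job and $\sum_j y_{j,\tau}\le m$ pointwise; the lemma would then follow by integrating the pointwise bound over $(0,C_{j^*}]$. The natural candidate is, for each fractional rectangle $x_{j,t}$, to place density $x_{j,t}$ on the clipped left-$\theta_j$-fraction $(t-p_j,\,t-p_j+\theta_j p_j]\cap(0,C_{j^*}]$: the pointwise bound $\sum_j y_{j,\tau}\le m$ then follows directly from~\eqref{LPC:PwC-congestion}, and the per-job integral equals $\sum_t x_{j,t}\min(\theta_j p_j, (C_{j^*}+p_j-t)^+)$.

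The main obstacle is that this per-job quantity can be strictly less than $p_j\theta_j$: a job whose LP mass is concentrated on rectangles with $t > C_{j^*}+\theta_j p_j$ contributes zero clipped density, yet its $\sigma_j$ may still be $p_j/2$. A direct Jensen inequality on the concave $\min$ points the wrong way, so a naive per-job charging does not close. I plan to resolve this by exploiting the equation $M^{\theta_j}_j = C_{j^*}$ (or $\theta_j=1/2$) together with the rectangle structure emphasized in the introduction: the late rectangles responsible for the ``cheating'' are necessarily balanced in the LP by early rectangles, and shifting $\theta$ slightly causes such a cheating job to leave $J_\theta$, so the averaging over $\theta\in(0,1/2]$ is exactly what recovers the missing virtual weight. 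Concretely, I would cover the per-job deficit by redirecting a controlled fraction of each late rectangle's weight into the left portion of $j$'s virtual interval, financed by residual LP congestion in a buffer just after $C_{j^*}$ that the constraint $C_j\le C_{j^*}+(1-\theta_j)p_j$ leaves unused; formalizing this transport is what I expect to be the technical core of the proof.
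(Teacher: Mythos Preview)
Your Fubini reduction to $\sum_{j\in J_0}\theta_j p_j$ and your formula $p_j\theta_j=\min(p_j/2,\,p_j+C_{j^*}-C_j)$ are correct and match the paper. The gap you identify, however, is real and your proposed fix does not close it. Clipping the left-$\theta_j$-fraction of each rectangle to $(0,C_{j^*}]$ gives a per-job mass that is \emph{at most} $\theta_j p_j$, so the congestion bound $\sum_j(\text{clipped area})\le mC_{j^*}$ points the wrong way. Your ``buffer just after $C_{j^*}$'' idea cannot help: any region you charge into beyond $C_{j^*}$ enlarges the target area past $mC_{j^*}$, and there is no per-job balancing that recovers the deficit while keeping the pointwise density at most $m$ inside $(0,C_{j^*}]$. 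A job with all its LP mass at a single late $t$ (say $t=C_{j^*}+p_j-1$, so $\theta_j\approx 1/p_j$ and $\theta_j p_j\approx 1$) contributes essentially zero clipped density, and there are no ``early rectangles'' for that job to redirect from.

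The paper avoids the clipping issue entirely with a mass-center argument. Instead of intervals of length $\theta_j p_j$ confined to $(0,C_{j^*}]$, place for each $(j,t)$ an \emph{unclipped} rectangle of height $x_{j,t}$ and span $(t-p_j,\,t-p_j+2\theta_j p_j]$. Since $2\theta_j\le 1$, this span is contained in $(t-p_j,t]$, so the total height at any point is still $\le m$ by Constraint~\eqref{LPC:PwC-congestion}. The (horizontal) mass center of all rectangles for $j$ is $\sum_t x_{j,t}(t-p_j+\theta_j p_j)=C_j-(1-\theta_j)p_j=M^{\theta_j}_j\le C_{j^*}$, hence the overall mass center is $\le C_{j^*}$. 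Now use the elementary fact that a nonnegative density on $[0,\infty)$ with pointwise height $\le m$ and mass center $\le C_{j^*}$ has total mass $\le 2mC_{j^*}$ (the extremal configuration is height $m$ on $[0,A/m]$). Since the total area is $\sum_j 2\theta_j p_j$, this yields $\sum_j \theta_j p_j\le mC_{j^*}$. The doubling trick---rectangles of length $2\theta_j p_j$ paired with the bound $2mC_{j^*}$---is exactly what lets you drop the clipping and sidestep your obstacle.
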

	
	\begin{proof} 
		Notice that we have
\ifdefined\CR
		\begin{align*}
		&\qquad \int_{\theta=0}^{1/2}p(J_\theta)\sfd \theta
		= \int_{\theta = 0}^{1/2} \sum_{j \in J_{0}} p_j \mathbf 1_{j \in J_\theta} \sfd \theta  \\
		&= \sum_{j \in J_{0}} p_j \int_{\theta = 0}^{1/2}\mathbf 1_{j \in J_\theta} \sfd \theta  
		= \sum_{j \in J_{0}} \theta_j p_j.
		\end{align*}
\else
		\begin{align*}
			\int_{\theta=0}^{1/2}p(J_\theta)\sfd \theta
			= \int_{\theta = 0}^{1/2} \sum_{j \in J_{0}} p_j \mathbf 1_{j \in J_\theta} \sfd \theta  
			= \sum_{j \in J_{0}} p_j \int_{\theta = 0}^{1/2}\mathbf 1_{j \in J_\theta} \sfd \theta  
			= \sum_{j \in J_{0}} \theta_j p_j.
		\end{align*}
\fi
		Thus, it suffices to prove that $\sum_{j \in J_0} \theta_j p_j \leq m C_{j^*}$. To achieve this, we construct a set of axis-parallel rectangles. For each $j \in J_0$ and $t$ such that $x_{j, t} > 0$, we place a rectangle with height $x_{j,t}$ and horizontal span $(t-p_j, t-p_j+2\theta_j p_j]$.  The total area of all the rectangles for $j$ is exactly $2\theta_jp_j$.  Notice that $\theta_j \leq 1/2$ and thus $(t - p_j, t - p_j + 2\theta_j p_j] \subseteq (t-p_j, t]$. 
		
		Notice that $\sum_{t}x_{j, t}(t - p_j + \theta_jp_j) =  C_j - (1-\theta_j)p_j = M^{\theta_j}_{j} \leq  C_{j^*}, \sum_{t}x_{j,t} = 1$, and $t-p_j + \theta_j p_j$ is the mass center\footnote{Here, we use mass center for the horizontal coordinate of the mass center, since we are not concerned with the vertical positions of rectangles.} of the rectangle for $(j, t)$. Thus, the mass center of the union of all rectangles for $j$ is at most $ C_{j^*}$. This in turn implies that the mass center of the union of all rectangles over all $j \in J_0$ and $t$, is at most $ C_{j^*}$. Notice that for every $t \in (0, T]$, the total height of all rectangles covering $t$ is at most $m$, by Constraint~\eqref{LPC:PwC-congestion}, and the fact that $(t-p_j, t-p_j + 2\theta_j p_j] \subseteq (t-p_j, t]$ for every $j \in J_0$ and $t$. Therefore, the total area of rectangles for all $j \in J_0$ and $t$ is at most $2m C_{j^*}$ (otherwise, the mass center will be larger than $C_{j^*}$).  So, we have $\sum_{j \in J_0}2\theta_jp_j \leq 2m C_{j^*}$, which finishes the proof of the lemma. 
	\end{proof}
	
	Thus, by Lemma~\ref{lemma:PwC-bound-busy} and Lemma~\ref{lemma:PwC-bound-busy-integral}, the expected length of busy time slots before $\tilde C_{j^*}$ is at most 
	\begin{align}
		\label{inequ:PwC-bound-busy-integral}
		\int_{\theta=0}^{1/2}\frac{p(J_\theta)}{m}2\sfd \theta = \frac{2}{m} \int_{ \theta = 0}^{1/2}p(J_\theta)\sfd \theta \leq 2C_{j^*}.
	\end{align}

	Thus, by Inequalities~\eqref{inequ:PwC-bound-idle-integral} and \eqref{inequ:PwC-bound-busy-integral}, we have
	\begin{align*}
		\E\left[\tilde C_{j^*}\right] \leq (2\ln2)C_{j^*} + 2 C_{j^*}  = (2+ 2\ln 2) C_{j^*}.
	\end{align*}
	Thus, we have proved the $2+2\ln2 + \epsilon \leq (3.387 + \epsilon)$-approximation ratio for our algorithm, finishing the proof of Theorem~\ref{theorem:main-PprecwC}.
	
	\paragraph{Remarks} One might wonder if choosing a random $\theta$ from $[0, \theta^*]$ for a different $\theta^*$ can improve the approximation ratio.  For $\theta^* \leq 1/2$, the ratio we can obtain is $\frac{1}{\theta^*} + \frac{1}{\theta^*}\int_{\theta = 0}^{\theta^*}\frac{1}{1-\theta}\sfd\theta = \frac{1}{\theta^*}  + \frac{1}{\theta^*}\ln\frac{1}{\theta^*}$; that is, the first factor (for busy time slots) will be increased to $1/\theta^*$ and the second factor (for idle time slots) will be decreased to $\frac{1}{\theta^*}\ln\frac{1}{\theta^*}$. This ratio is minimized when $\theta^* = 1/2$.  If $\theta^* > 1/2$, however, the first factor does not improve to $1/\theta^*$, as the proof of Lemma~\ref{lemma:PwC-bound-busy-integral} used the fact that $\theta_j \leq 1/2$ for each $j$. Thus, using our analysis, the best ratio we can get is $2+2\ln 2$. 
		

	\section{Scheduling Unit-Length Jobs on Identical Machines with Job Precedence Constraints}
	\label{sec:PprecwCUniform}
	In this section,  we give our $(1+\sqrt{2})$-approximation algorithm for \PprecupwC. Again, we solve \eqref{LP:PprecWc} to obtain $x$; define $C_j = \sum_{t} x_{j, t}$ for every $j \in J$. For this special case, we define the random $M$-vector differently. In particular, it depends on the values of $x$ variables. For every $j \in J$ and $\theta \in (0, 1]$, define $M^\theta_j$ to be the minimum $t$ such that $\sum_{t'  = 1}^{t} x_{j, t'} \geq \theta$.  Notice that $C_j = \int_{\theta = 0}^{1} M^\theta_j\sfd \theta$. Our algorithm for \PprecupwC chooses $\theta$ uniformly at random from $(0, 1]$, and then call  job-driven-list-scheduling$(M^\theta)$ and output the returned schedule.
	
	For every $j \in J$, we define  $a_j$ to be the largest $a$ such that there exists a sequence of $a$ jobs $j_1 \prec j_2 \prec j_3 \prec \cdots \prec j_a = j$. Thus, $a_j$ is the ``depth'' of $j$ in the precedence graph. 
	\begin{claim}
		\label{claim:chain}
		For every $j \in J$ and $t < a_j$, we have $x_{j, t} = 0$.
	\end{claim}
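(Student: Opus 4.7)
My plan is to induct on the depth parameter $a$, proving the slightly stronger statement: for every integer $a \geq 1$ and every job $j$ with $a_j \geq a$, we have $x_{j,t} = 0$ for all $t < a$. Stating it this way avoids the awkward case where a predecessor in the chosen chain happens to have strictly larger depth than expected.

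For the base case $a = 1$, the claim is immediate from Constraint~\eqref{LPC:PwC-length}: since $p_j = 1$ for every job, $x_{j,0} = 0$.

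For the inductive step, suppose the statement holds for $a - 1$ and let $j$ be a job with $a_j \geq a$. By definition of $a_j$ there exists a chain $j_1 \prec j_2 \prec \cdots \prec j_{a-1} \prec j_a = j$. The prefix $j_1 \prec \cdots \prec j_{a-1}$ witnesses $a_{j_{a-1}} \geq a-1$, so the inductive hypothesis gives $x_{j_{a-1},t} = 0$ for all $t < a - 1$, and in particular $\sum_{t < a-1} x_{j_{a-1},t} = 0$. Now I would apply Constraint~\eqref{LPC:PwC-precedence} to the pair $j_{a-1} \prec j$ with $t' = a - 1$; since $p_j = 1$, this reads
\[
\sum_{t < a} x_{j,t} \;\leq\; \sum_{t < a-1} x_{j_{a-1},t} \;=\; 0,
\]
which together with nonnegativity (Constraint~\eqref{LPC:PwC-non-negative}) forces $x_{j,t} = 0$ for every $t < a$, completing the induction.

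I do not expect a real obstacle here; the only subtlety is the one already flagged, namely phrasing the induction as ``$a_j \geq a$'' rather than ``$a_j = a$'' so that the predecessor in the chain is covered by the inductive hypothesis regardless of its exact depth. The proof uses exactly the full strength of Constraint~\eqref{LPC:PwC-precedence} (not just the weaker consequence in Claim~\ref{claim:difference-for-C}), which is precisely the reason the authors remarked they kept the stronger precedence constraint in the shared LP.
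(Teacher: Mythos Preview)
Your proof is correct and uses the same core idea as the paper: walk along a maximal chain ending at $j$ and apply Constraint~\eqref{LPC:PwC-precedence} at each step. The only cosmetic difference is orientation---the paper argues by contradiction, assuming $x_{j,t}>0$ for some $t<a_j$ and tracing positive entries \emph{backward} along the chain to produce an impossible $x_{j_1,t_1}>0$ with $t_1\leq 0$, whereas you induct \emph{forward} on the depth, pushing the vanishing of $\sum_{t<a-1}x_{j_{a-1},t}$ up to $\sum_{t<a}x_{j,t}$. One tiny remark: since $t$ ranges over $[T]=\{1,\dots,T\}$, your base case $a=1$ is actually vacuous rather than a consequence of Constraint~\eqref{LPC:PwC-length}; this does not affect the argument.
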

\ifdefined\CR
\else
	\begin{proof}
		By the definition of $a_j$, there is a sequence of $a  := a_j$ jobs $j_1 \prec j_2 \prec \cdots \prec j_a$ such that $j_a = j$. Let $t_a = t$.  If $x_{j_a, t_a} > 0$, then by Constraint~\eqref{LPC:PwC-precedence}, there is an $t_{a-1} \leq t_a-1$ such that $x_{j_{a-1}, t_{a-1}} > 0$.  Then, there is an $t_{a-2} \leq t_{a-1}-1 \leq t_a-2$ such that $x_{j_{a-2}, t_{a-2}} > 0$. Repeating this process, there will be an $t_1 \leq t_a - (a-1)$ such that $x_{j_1, t_1} > 0$. This contradicts the fact that $t_a= t \leq a-1$. 
	\end{proof}
\fi	
	Again, we fix a job $j^*$ and focus on the schedule at the moment the algorithm just scheduled $j^*$; we call this schedule the final schedule.  We shall bound $\E[\tilde C_{j^*}]/C_{j^*}$ (recall that $\tilde C_{j^*}$ is the completion time of $j^*$ in the schedule we output), by bounding the total length of idle and busy slots in the final schedule before $\tilde C_{j^*}$ separately. Recall that a time point is busy if all the $m$ machines are processing some jobs at that time, and idle otherwise. The next lemma gives this bound for a fixed $\theta$. The first (resp. second) term on the right side bounds the total length of busy (resp. idle) slots before $\tilde C_{j^*}$. The clean bound $a_{j^*}$ on the total length of idle slots comes from the unit-job size property.
	
	\begin{lemma}
		\label{lemma:PwC-uniform-bound-C-jstar}
		$\displaystyle \tilde C_{j^*} \leq \frac{1}{m}\Big|\set{j:  M^\theta_{j} \leq M^\theta_{j^*} } \Big| + a_{j^*}.$
	\end{lemma}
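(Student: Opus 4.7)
My plan is to bound $\tilde C_{j^*}$ by writing it as the sum of the total length of busy time slots and the total length of idle time slots in $(0, \tilde C_{j^*}]$ (with respect to the final schedule), and to bound the two quantities separately by $\tfrac{1}{m}|\{j: M^\theta_j \leq M^\theta_{j^*}\}|$ and $a_{j^*}$.

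For the busy bound, the argument mirrors Lemma~\ref{lemma:PwC-bound-busy}: each busy time point contributes $m$ units of processing, and every job running in the final schedule must have been placed by the list-scheduling algorithm at or before $j^*$'s iteration, hence has $M^\theta_j \leq M^\theta_{j^*}$ by the list order. Since each $p_j = 1$, the total processing volume used inside $(0, \tilde C_{j^*}]$ is at most $|\{j: M^\theta_j \leq M^\theta_{j^*}\}|$, and dividing by $m$ yields the claim.

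For the idle bound, I will iteratively apply Lemma~\ref{lemma:PwC-previous-j} starting from $j^*$, exactly as in the proof of Lemma~\ref{lemma:PwC-bound-idle}: after reverting, this produces a sequence $j_0, \ldots, j_k = j^*$ with $\tilde S_{j_0}=0$, each step being either type-1 ($j_\ell \prec j_{\ell+1}$ with $(\tilde C_{j_\ell}, \tilde S_{j_{\ell+1}}]$ busy) or type-2 (the interval $(\tilde S_{j_\ell}, \tilde S_{j_{\ell+1}}]$ entirely busy). Because all jobs have unit length, the idle slots in $(0, \tilde S_{j^*}]$ are confined to the unit intervals $\bigcup_{\ell \in L}(\tilde S_{j_\ell}, \tilde C_{j_\ell}]$, where $L$ is the set of type-1 indices, giving an idle length of at most $|L|$ in $(0, \tilde S_{j^*}]$. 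Accounting for a possible extra idle slot in $(\tilde S_{j^*}, \tilde C_{j^*}]$, it suffices to show $|L| \leq a_{j^*} - 1$, which I will do by exhibiting a precedence chain of length $|L|+1$ ending at $j^*$ built from the type-1 links.

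The main obstacle is precisely this last step: consecutive type-1 transitions in the sequence may be separated by type-2 transitions that do not themselves encode precedence, so stringing the type-1 links together into one chain is non-trivial. I plan to address this by refining the choice made in case~(ii) of Lemma~\ref{lemma:PwC-previous-j} for the unit-length setting, so that the job returned in a type-2 step is taken on a precedence path back from the next type-1 predecessor (exploiting that the type-2 interval consists of fully busy, back-to-back unit jobs and therefore leaves flexibility in which job along the busy run to select); alternatively, one can trace predecessors backward directly through each type-1 segment and verify that the resulting chain is long enough. Either approach should yield a precedence chain of length at least $|L|+1$ terminating at $j^*$, hence $|L|+1 \leq a_{j^*}$, completing the idle bound and the lemma.
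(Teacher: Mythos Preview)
Your busy-slot bound is correct and identical to the paper's argument.

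For the idle-slot bound, your route through Lemma~\ref{lemma:PwC-previous-j} is needlessly indirect and leaves exactly the gap you flag: the type-1 links $j_\ell \prec j_{\ell+1}$ need not concatenate into a single precedence chain, because type-2 steps interleave them without any precedence relation. Your proposed fixes are only sketched, and the first one (refining the type-2 choice) does not obviously work, since the job produced in case~(ii) of Lemma~\ref{lemma:PwC-previous-j} is selected for its position in a busy run, not for any precedence tie to the next type-1 predecessor.

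The paper sidesteps this entirely by building the chain differently, exploiting unit sizes from the outset. Starting from $j_1 = j^*$, repeatedly set $j_{\ell+1}$ to be the predecessor of $j_\ell$ with the \emph{latest} completion time $\tilde C_{j_{\ell+1}}$; stop when no predecessor exists. This is automatically a precedence chain $j_a \prec \cdots \prec j_1 = j^*$, so $a \leq a_{j^*}$. One then argues directly that every idle unit slot $(t-1,t] \subseteq (0,\tilde C_{j^*}]$ must contain some $j_\ell$ from this chain: otherwise, take the largest $\ell$ with $\tilde C_{j_\ell} > t$; at the iteration for $j_\ell$ the slot $(t-1,t]$ was available, so $j_\ell$ being scheduled after $t$ forces some predecessor $j'\prec j_\ell$ with $\tilde C_{j'} \geq t$, contradicting the maximality in the choice of $j_{\ell+1}$. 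Hence the number of idle unit slots is at most $a \leq a_{j^*}$, with no need to stitch together type-1 links. Your second suggested fix (``trace predecessors backward directly'') is essentially this argument, but you should recognize it as a replacement for the Lemma~\ref{lemma:PwC-previous-j} construction, not a patch on top of it.
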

\ifdefined\CR
\else
	\begin{proof}
		The total length of busy slots in $(0, \tilde C_{j^*}]$ is at most $1/m$ times the total number of jobs scheduled so far, which is at most $\frac{1}{m}\Big|\set{j:  M^\theta_{j} \leq M^\theta_{j^*} } \Big|$.
		
		We now bound the total length of idle slots in $(0, \tilde C_{j^*}]$.  We start from $j_1 = j^*$. For every $\ell = 1, 2, 3 \cdots$, let $j_{\ell+1} \prec j_{\ell}$ be the job that is scheduled the last in the final schedule; if $j_{\ell+1}$ does not exists, then we let $a = \ell$ and break the loop. Thus, we constructed a chain $j_a \prec j_{a-1} \prec j_{a-2} \prec \cdots \prec j_1 = j^*$ of jobs.  By the definition of $a_{j^*}$, we have $a \leq a_{j^*}$.
		
		We shall show that for every idle unit slot $(t - 1, t]$ in $(0, \tilde C_{j^*}]$, some job in the chain is scheduled in $(t - 1, t]$. Assume otherwise; let $\ell \in [a]$ be the largest index such that $\tilde C_{j_\ell} > t$ ($\ell$ exists since $j_1 = j^*$ is scheduled in $(\tilde C_{j^*}-1, \tilde C_{j^*}]$ and $t < \tilde C_{j^*}$).  Then either $j_{\ell+1}$ does not exist, or $\tilde C_{j_{\ell+1}} \leq t-1$. Consider the iteration when $j_\ell$ is considered in the list scheduling algorithm. Before the iteration, $(t-1, t]$ is available for scheduling.  Since $j_\ell$ is scheduled after $t$, there must be a job $j' \prec j_{\ell}$ such that $\tilde C_{j'} \geq t$. Thus, we would have set $j_{\ell + 1} = j'$, a contradiction.
		
		Thus, the idle slots before $\tilde C_{j^*}$ are covered by the scheduling intervals of jobs in the chain. So, the total number of idle slots before $\tilde C_{j^*}$ is at most $a \leq a_{j^*}$. Overall, we have $\tilde C_{j^*} \leq \frac{1}{m}\Big|\set{j:  M^\theta_{j} \leq M^\theta_{j^*} } \Big| + a_{j^*}$.
	\end{proof}		
\fi

	We shall use $g(\theta) = M^\theta_{j^*}$ for every $\theta \in (0, 1]$. Notice that $C_{j^*} = \int_{\theta = 0}^{1}g(\theta) \sfd \theta$. For simplicity, let $g(0) = \lim_{\theta \to 0^+}g(\theta)$; so, $g(0)$ will be the smallest $t$ such that $x_{j^*, t} > 0$.   By Claim~\ref{claim:chain}, we have $g(0) \geq a_{j^*}$. For every $j \in J$ and $\theta \in [0, 1]$, define $\displaystyle h_j(\theta):= \sum_{t = 1}^{g(\theta)} x_{j, t}$. This is the total volume of job $j$ scheduled in $(0, g(\theta)]$. Thus, we have $\sum_{j \in J} h_j(\theta) \leq g(\theta)$. Noticing that $M^\theta_j \leq M^\theta_{j^*}$ if and only if $h_j(\theta) \geq \theta$. So, by Lemma~\ref{lemma:PwC-uniform-bound-C-jstar}, we have $\tilde C_{j^*} \leq g(0) + \frac1m\sum_{j \in J}\mathbf{1}_{h_j(\theta) \geq \theta}$. Thus, we can bound $\frac{\E\left[\tilde C_{j^*}\right]}{C_{j^*}}$ by the superior of 
	\begin{align}
		\frac{g(0) +  \frac1m\sum_{j \in J}\int_{\theta=0}^1\mathbf{1}_{h_j(\theta)\geq \theta} \sfd \theta}{\int_{\theta=0}^1g(\theta)\sfd\theta} \label{inequ:PwC-uniform-superior}
	\end{align} 
	subject to 
	\begin{itemize}
		\renewcommand{\theenumi}{\ref{inequ:PwC-uniform-superior}.\arabic{enumi}}
		\item $g:[0, 1] \to [1, \infty)$ is piecewise linear, left-continuous and non-decreasing, \refstepcounter{enumi} \hfill (\theenumi)\label{property:g-increasing}
		\item $\forall j \in J$, $h_j:[0,1]\to[0,1]$ is piecewise linear, left-continuous and non-decreasing, \refstepcounter{enumi} \hfill (\theenumi)\label{property:h-increasing}
		\item 	$\displaystyle	\sum_{j \in J} h_j(\theta) \leq  mg(\theta), \quad \forall \theta \in [0, 1]$.\refstepcounter{enumi} \hfill (\theenumi)\label{property:h-less-than-g}
	\end{itemize}
	
	To recover the 3-approximation ratio of \cite{MQS98}, we know that $\displaystyle g(0)\Big/\int_{\theta = 0}^1g(\theta) \sfd\theta \leq 1$; this corresponds to the fact $a_{j^*} \leq \tilde C_{j^*}$. It is not hard to show $\displaystyle \frac1m\sum_{j \in J}\int_{\theta=0}^1\mathbf{1}_{h_j(\theta)\geq \theta} \sfd \theta\Big/\left(\int_{\theta=0}^1g(\theta)\sfd\theta\right)\leq 2$. The tight factor $2$ can be achieved when $h_j(\theta) = \theta$ for every $j \in J$ and $\theta \in [0, 1]$ and $g(\theta) = n\theta/m$.  This corresponds to the following case: by the time $\theta$-fraction of job $j^*$ was completed, exactly $\theta$ faction of every job $j \in J$ was completed.  However, the two bounds can not be tight simultaneously: the first bound being tight requires $g$ to be a constant function, where the second bound being tight requires $g$ to be linear in $\theta$. This is where we obtain our improved approximation ratio. 

\ifdefined\CR
	Due to the page limit, we shall defer the proof that \eqref{inequ:PwC-uniform-superior} is at most $1+\sqrt{2}$ to the full version of the paper.  Here we only 
\else
	Before formally proving that \eqref{inequ:PwC-uniform-superior} is at most $1+\sqrt{2}$, we 
\fi
	give the combination of $g$ and $\set{h_j}_{j \in J}$ achieving the bound; the way we prove the upper bound is by showing that this combination is the worst possible. In the worst case, we have $h_j(\theta) = \max\set{\alpha, \theta}$ for every $\theta \in [0, 1]$, where $\alpha = \sqrt{2} - 1$. $g(\theta) = \frac{1}{m}\sum_{j \in J}h_j(\theta) = \frac{n}{m} \max\set{\alpha, \theta}$ for every $\theta \in [0, 1]$. In this case, the numerator of \eqref{inequ:PwC-uniform-superior} is $g(0) + \frac1m\sum_{j \in J}\int_{\theta = 0}^1\mathbf1_{h_j(\theta)\geq \theta}\sfd\theta  = (\alpha + 1)\frac{n}{m}$; the denominator is  $\int_{\theta  = 0}^1g(\theta)\sfd \theta = \frac{(1+\alpha^2)n}{2m}$. Thus, \eqref{inequ:PwC-uniform-superior} in this case is $\frac{\alpha + 1}{(1+\alpha^2)/2} = 1 + \sqrt{2}$. 

\ifdefined\CR
\else
	\paragraph{Computing the superior of \eqref{inequ:PwC-uniform-superior}} We now compute the superior of \eqref{inequ:PwC-uniform-superior} subject to Constraints~\eqref{property:g-increasing}, \eqref{property:h-increasing} and \eqref{property:h-less-than-g}. The functions $g$ and $h_j$'s defined may have other stronger properties (e.g, they are piecewise constant functions); however in the process, we only focus on the properties described above. In the following, $j$ in a summation is over all jobs in $J$. 
	
	First, we can add the following constraint without changing the superior of \eqref{inequ:PwC-uniform-superior}.
	\begin{align}
	\exists \theta^* \in [0, 1], \text{ such that } g(\theta) = g(0), \forall \theta \in [0, \theta^*], \text{ and } g(\theta) = \frac1m\sum_{j}h_j(\theta), \forall \theta \in (\theta^*, 1].
	\label{property:g-is-max-of-2}
	\end{align}
	
	To see this, we take any $g$ and $\{h_j\}_j$ satisfying Constraints~\eqref{property:g-increasing} to \eqref{property:h-less-than-g}. 
	%
	Let $g'(\theta) = \max\set{g(0), \frac1m\sum_{j}h_j(\theta)}$ for every $\theta \in [0, 1]$. Then $g':[0,1]\to [0, \infty)$ is piecewise-linear, left-continuous and non-decreasing because of Property~\eqref{property:h-increasing}; so $g'$ satisfies Property~\eqref{property:g-increasing}. Obviously $g'$ satisfies Property~\eqref{property:h-less-than-g}.  We define $\theta^* = \sup\set{\theta \in [0, 1]: g'(\theta) = g'(0)}$. Since $g'$ is left-continuous and monotone non-decreasing,  we have $g'(\theta) = g'(0) \geq \frac1m\sum_{j}h_j(\theta)$ for every $\theta \in [0, \theta^*]$ and $g'(\theta) = \frac1m\sum_{j} h_j(\theta) > g'(0)$ for every $\theta \in (\theta^*, 1]$. Thus, $g'$ satisfies Constraint~\eqref{property:g-is-max-of-2}. Changing $g$ to $g'$ will not decrease \eqref{inequ:PwC-uniform-superior}: the numerator does not change and the denominator can only decrease since $g(\theta) \geq \max\set{g(0),\frac{1}{m}\sum_{j}h_j{\theta}} = g'(\theta)$ for every $\theta \in [0, 1]$.  Thus, we can impose Constraint~\eqref{property:g-is-max-of-2}, without changing the superior of \eqref{inequ:PwC-uniform-superior}.
	
	
	Then, we can make the following constraint on $\{h_j\}_j$:
	\begin{align}
		\text{For every $j \in J$,  $h_j$ is a constant over $[0, \theta^*]$.}
	\end{align}
	
	For every $j$, we define $h'_j(\theta) = h_j(\theta^*)$ if $\theta \in [0, \theta^*]$ and $h'_j(\theta) = h_j(\theta) $ if $\theta \in (\theta^*, 1]$. Then each $h'_j$ satisfies Constraint~\eqref{property:h-increasing}.  Moreover, $\{h'_j\}_j$ satisfies Constraint ~\eqref{property:h-less-than-g} since $\sum_{j}h'_j(\theta) = \sum_{j} h_j(\theta^*) \leq mg(\theta^*) = mg(\theta)$ for every $\theta \in [0, \theta^*]$. Moreover, if we change $h_j$ to $h'_j$ for every $j \in J$, the denominator of \eqref{inequ:PwC-uniform-superior} does not change and the numerator can only increase.
	
	Finally, we can assume 
	\begin{align}
		\sum_{j}h_j(\theta) = mg(\theta), \quad \forall \theta \in [0, \theta^*].
	\end{align}
	Notice that the functions $\set{h_j}_{j}$ and $g$ are constant functions over $[0, \theta^*]$ and $\sum_{j}h_j(\theta) = mg(\theta)$ for $\theta \in (\theta^*, 1]$. If the constraint is not satisfied, we can find some $j$ such that $h_j$ is not right-continuous at $\theta^*$. Then, we increase $h_j(\theta)$ simultaneously for all $\theta \in [0, \theta^*]$ until the constraint is satisfied or $h_j$ becomes right-continuous at $\theta^*$. This process can be repeated until the constraint becomes satisfied. 
	
	Thus, with all the constraints, we have $g(\theta) = \frac1m \sum_j h_j(\theta)$ for every $\theta \in [0, 1]$. So, \eqref{inequ:PwC-uniform-superior} becomes 
	\begin{align*}
	\frac{\frac1m\sum_{j} h_j(0) + \frac1m\sum_j \int_{\theta = 0}^1 \mathbf{1}_{h_j(\theta)\geq \theta} \sfd \theta }{\frac1m\sum_{j}\int_{\theta  = 0 }^1 h_j(\theta)\sfd \theta} = \frac{\sum_{j} \left(h_j(0) + \int_{\theta = 0}^1 \mathbf{1}_{h_j(\theta)\geq \theta} \sfd \theta \right)}{\sum_{j}\int_{\theta  = 0 }^1 h_j(\theta)\sfd \theta}.
	\end{align*}
	
	To bound the quantity, it suffices to upper bound 
	\begin{align}
		\sup_h \frac{h(0) + \int_{\theta = 0}^1\mathbf{1}_{h(\theta) \geq \theta}\sfd \theta}{\int_{\theta=0}^{1}h(\theta)\sfd \theta }, \label{inequ:bound-in-terms-of-h}
	\end{align}
	where the superior is over all piecewise-linear, left-continuous and monotone non-decreasing functions $h:[0, 1] \to [0, 1]$.

	\begin{claim}
		Let $\alpha = h(0)$ and $\beta = \int_{\theta = 0}^{1}\mathbf1_{h(\theta) \geq \theta}\sfd \theta$. Then $0 \leq \alpha \leq \beta \leq 1$ and $\displaystyle \int_{\theta  = 0}^{1}h(\theta)\sfd \theta \geq \frac{\alpha^2}{2} + \beta - \frac{\beta^2}{2}$.
	\end{claim}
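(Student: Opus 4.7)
The three easy bounds $0 \le \alpha \le 1$ and $\beta \le 1$ follow immediately from $h:[0,1] \to [0,1]$ and $A := \{\theta \in [0,1] : h(\theta) \ge \theta\} \subseteq [0,1]$. For $\alpha \le \beta$, monotonicity gives $h(\theta) \ge h(0) = \alpha \ge \theta$ for every $\theta \in [0,\alpha]$, so $[0,\alpha] \subseteq A$ and hence $\beta = |A| \ge \alpha$.

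For the integral inequality, my plan is to decompose $A$ into its maximal intervals and bound $\int_0^1 h$ piece by piece. Since $h$ is piecewise linear, $A = \bigcup_{i=1}^k [l_i, r_i]$ is a disjoint union of closed intervals, ordered so that $0 = l_1 < r_1 < l_2 < \cdots < r_k \le 1$; moreover $r_1 \ge \alpha$ because $[0,\alpha]$ lies in the maximal component of $A$ containing $0$. Setting $l_{k+1} := 1$, monotonicity of $h$ combined with $h(r_i) \ge r_i$ for each $i$ yields the pointwise lower bounds $h(\theta) \ge \max(\alpha,\theta)$ on $[0, r_1]$, $h(\theta) \ge \theta$ on each $[l_i, r_i]$ with $i \ge 2$, and $h(\theta) \ge r_i$ on each gap $[r_i, l_{i+1}]$. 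Integrating these and grouping terms gives
\[
  \int_0^1 h(\theta)\,\sfd\theta \;\ge\; \frac{\alpha^2}{2} + \sum_{i=1}^{k}\left[\frac{r_i^2 - l_i^2}{2} + r_i(l_{i+1} - r_i)\right] \;=:\; \frac{\alpha^2}{2} + S.
\]

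The remaining task is to show $S \ge \beta - \beta^2/2$, where $\beta = \sum_i (r_i - l_i)$. I plan to use a merging argument: replacing two adjacent intervals $[l_j, r_j]$ and $[l_{j+1}, r_{j+1}]$ by the single interval $[l_j,\, r_j + r_{j+1} - l_{j+1}]$ preserves $\beta$ and, by a direct expansion of $S$, changes the sum by $(l_{j+2} - r_{j+1})(r_j - l_{j+1})$, which is $\le 0$ because $r_j \le l_{j+1}$ and $r_{j+1} \le l_{j+2}$. Iterating the merges reduces the decomposition to the single interval $[0,\beta]$, at which point $S$ equals exactly $\beta - \beta^2/2$; since each merge only weakly decreases $S$, the original value is at least $\beta - \beta^2/2$, and the claim follows. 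The main technical obstacle is verifying the algebraic identity for the merge (and in particular that the sign really is non-positive using only the interval ordering); once that is pinned down, the rest is routine bookkeeping on endpoints.
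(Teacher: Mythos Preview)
Your proof is correct, and it takes a genuinely different route from the paper's.

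The paper first \emph{modifies} $h$ without increasing $\int_0^1 h$ or changing $\beta$: set $h\equiv\alpha$ on $[0,\alpha]$, and truncate $h(\theta)$ down to $\theta$ wherever $h(\theta)>\theta$ on $[\alpha,1]$. After this reduction $h(\theta)\le\theta$ on $[\alpha,1]$, so one can write $\int_0^1 h=\tfrac12+\tfrac{\alpha^2}{2}-\int_\alpha^1(\theta-h(\theta))\,\sfd\theta$ and bound the deficit: on each maximal subinterval $[a,b]$ of $\{h<\theta\}$ one has $h(a)=a$, hence $\theta-h(\theta)\le\theta-a$ and $\int_a^b(\theta-h)\le\tfrac{(b-a)^2}{2}$; summing and using $\sum_i x_i^2\le(\sum_i x_i)^2$ with $\sum_i(b_i-a_i)=1-\beta$ gives $\int_\alpha^1(\theta-h)\le\tfrac{(1-\beta)^2}{2}$.

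You work instead with the original $h$, decompose $A=\{h\ge\theta\}$ rather than its complement, bound $h$ below pointwise on each component and gap, and then minimize the resulting sum $S$ by a merging/rearrangement argument. This avoids tampering with $h$ at the price of the algebraic check on the merge step (which is indeed $(r_j-l_{j+1})(l_{j+2}-r_{j+1})\le 0$); the paper's route trades that algebra for the preliminary reduction and the one-line inequality $\sum x_i^2\le(\sum x_i)^2$. Both arguments are short and elementary. One small remark: since $h$ is only left-continuous the components of $A$ need not be closed on the left, but left-continuity does force $h(r_i)\ge r_i$ at every right endpoint, so your gap bound $h\ge r_i$ and the integral computations are unaffected.
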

	
	\begin{proof}
		$\beta \geq \alpha$ because $h(\theta) \geq \theta$ holds for every $\theta \in [0, \alpha]$. To prove the second part, we can assume that $h(\theta) = \alpha$ if $\theta \in [0, \alpha]$: otherwise, we can change $h(\theta)$ to $\alpha$ for every $\theta \in (0, \alpha]$; this does not change $\int_{\theta = 0}^{1}\mathbf1_{h(\theta) \geq \theta}\sfd \theta$ and can only decrease $\int_{\theta = 0}^1h(\theta)\sfd \theta$. Also, we can assume that $h(\theta) \leq \theta$ for every $\theta \in [\alpha, 1]$. Otherwise, for every $\theta$ such that $h(\theta) > \theta$, we decrease $h(\theta)$ to $\theta$. This does not change $\int_{\theta = 0}^{1}\mathbf1_{h(\theta) \geq \theta}\sfd \theta$ and can only decrease $\int_{\theta = 0}^1h(\theta)\sfd \theta$. 
		\begin{align*}
			\int_{\theta = 0}^{1}h(\theta)\sfd\theta = \alpha^2 + \int_{\theta = \alpha}^1\big(\theta - (\theta - h(\theta))\big)\sfd\theta = \frac12 + \frac{\alpha^2}{2} - \int_{\theta = \alpha}^1(\theta - h(\theta))\sfd\theta.
		\end{align*}
		For interval $[a, b] \subseteq [\theta^*, 1]$ such that $h(a) = a$, we have that $\int_{\theta = a }^b(\theta - h(\theta))\sfd\theta \leq \frac{(b-a)^2}{2}$. Since $\int_{\theta = \alpha}^1\mathbf{1}_{h(\theta) < \theta} \leq 1-\beta$, we have that $\int_{\theta = \alpha}^1 (\theta - h(\theta)) \leq \frac{(1-\beta)^2}{2}$. Thus, the above quantity is at least $\frac12 + \frac{\alpha^2}{2} - \frac{(1-\beta)^2}{2}  = \frac{\alpha^2}{2}  + \beta - \frac{\beta^2}{2}$.
	\end{proof}
	
	Thus, \eqref{inequ:bound-in-terms-of-h} is at most $\sup_{0 \leq \alpha \leq \beta \leq 1}\frac{\alpha + \beta}{\frac{\alpha^2}{2} + \beta - \frac{\beta^2}{2}} = \sup_{0 \leq \alpha \leq \beta \leq 1}\frac{2(\alpha + \beta)}{2\beta - (\alpha+\beta)(\beta - \alpha)}$. Scaling both $\alpha$ and $\beta$ up can only increase $\frac{2(\alpha + \beta)}{2\beta - (\alpha+\beta)(\beta - \alpha)}$. Thus, we can assume $\beta = 1$ and \eqref{inequ:bound-in-terms-of-h} becomes $\sup_{\alpha \in [0, 1]}\frac{2(1+\alpha)}{1+\alpha^2}$. For $\alpha \in [0, 1]$, $\frac{2(1+\alpha)}{1+\alpha^2}$ is maximized at $\alpha^* = \sqrt{2}-1$ and the maximum value is $\frac{2(1+\sqrt{2}-1)}{1+(\sqrt{2}-1)^2} = \sqrt{2} + 1$. This finishes the proof of the $(\sqrt{2}+1)$-approximation for $P|\tprec,p_j=1|\sum_{j} w_j C_j$ (Theorem~\ref{theorem:main-PprecwCuniform}). 
\fi
	
	\section{Scheduling on Related Machines with Job Precedence Constraints}
\label{sec:QwC}

In this section, we give our $O(\log m /\log \log m)$-approximation for $Q|\tprec|C_{\max}$  and $Q|\tprec|\sum_{j}w_jC_j$, proving Theorem~\ref{theorem:main-QprecwC}. 
This slightly improves the previous best $O(\log m)$-approximation, due to Chudak and Shmoys \cite{CS97}. 
Our improvement comes from a better tradeoff between two contributing factors.

As in \cite{CS97}, we can convert the objective of minimizing  total weighted completion time to minimizing  makespan, losing a factor of 16. 
We now describe the LP used in \cite{CS97} and state the theorem for the reduction. Throughout this section, $i$ is restricted to machines in $M$, and $j$ and $j'$ are restricted to jobs in $J$. 

\ifdefined\CR
	\begin{equation}
	\min \qquad D \tag{$\text{LP}_{\text{Q}|\text{prec}|\text{Cmax}}$} \label{LP:QCmax}
	\end{equation}	\vspace*{-25pt}
	
		\begin{alignat}{2}
		\sum_{i}x_{i, j} &= 1  &\qquad &\forall j \label{LPC:QCmax-job-scheduled} \\
		p_j \sum_{i}\frac{x_{i, j}}{s_i} &\leq C_j &\qquad &\forall j \label{LPC:QCmax-completion-time} \\
		C_j + p_{j'}\sum_{i}\frac{x_{i, j'}}{s_{i}} &\leq C_{j'} &\qquad &\forall j, j', j \prec j' \label{LPC:QCmax-precedence} \\
		\frac{1}{s_i}\sum_{j} p_j x_{i, j} &\leq D &\qquad &\forall i\label{LPC:QCmax-machine-capacity} \\
		C_j &\leq D &\qquad &\forall j \label{LPC:QCmax-C-lt-D} \\
		x_{i, j}, C_j &\geq 0  &\qquad &\forall j, i \label{LPC:QCmax-non-negative} 
		\end{alignat}
\else
	\begin{equation}
		\min \qquad D \tag{$\text{LP}_{\text{Q}|\text{prec}|\text{Cmax}}$} \label{LP:QCmax}
	\end{equation}	\vspace*{-20pt}
	
	\noindent\begin{minipage}{0.53\textwidth}
		\begin{alignat}{2}
			\sum_{i}x_{i, j} &= 1  &\qquad &\forall j \label{LPC:QCmax-job-scheduled} \\
			p_j \sum_{i}\frac{x_{i, j}}{s_i} &\leq C_j &\qquad &\forall j  \label{LPC:QCmax-completion-time} \\
			C_j + p_{j'}\sum_{i}\frac{x_{i, j'}}{s_{i}} &\leq C_{j'} &\qquad &\forall j, j', j \prec j' \label{LPC:QCmax-precedence}
		\end{alignat}
	\end{minipage}
	\begin{minipage}{0.47\textwidth} \vspace*{3pt}
		\begin{alignat}{2}
			\frac{1}{s_i}\sum_{j} p_j x_{i, j} &\leq D &\qquad &\forall i \label{LPC:QCmax-machine-capacity} \\
			C_j &\leq D &\qquad &\forall j  \label{LPC:QCmax-C-lt-D} \\[13pt]
			x_{i, j}, C_j &\geq 0  &\qquad &\forall j, i \label{LPC:QCmax-non-negative} \\[3pt] \nonumber
		\end{alignat}
	\end{minipage} \medskip
\fi

	\eqref{LP:QCmax} is a valid LP relaxation for $Q|\tprec|C_{\max}$. In the LP, $x_{i, j}$ indicates whether job $j$ is scheduled on machine $i$.  $D$ is the makespan of the schedule, and $C_j$ is the completion time of $j$ in the schedule. Constraint~\eqref{LPC:QCmax-job-scheduled} requires every job $j$ to be scheduled. Constraint~\eqref{LPC:QCmax-completion-time} says that the completion time of $j$ is at least the processing time of $j$ on the machine it is assigned to. Constraint~\eqref{LPC:QCmax-precedence} says that if $j \prec j'$, then $C_{j'}$ is at least $C_j$ plus the processing time of $j'$ on the machine it is assigned to. Constraint~\eqref{LPC:QCmax-machine-capacity} says that the makespan $D$ is at least the total processing time of all jobs assigned to $i$, for every machine $i$. Constraint~\eqref{LPC:QCmax-C-lt-D} says that the makespan $D$ is at least the completion time of any job $j$.  Constraint~\eqref{LPC:QCmax-non-negative} requires the $x$ and $C$ variables to be non-negative. 
	
	The value of \eqref{LP:QCmax} provides a lower bound on the makespan of any valid schedule. However, even if we require each $x_{i, j} \in \set{0, 1}$, the optimum solution to the integer programming is not necessarily a valid solution to the scheduling problem, since it does not give a scheduling interval for each job $j$. Nevertheless, we can use the LP relaxation to obtain our $O(\log m/\log\log m)$-approximation for $Q|\tprec|C_{\max}$.  Using the following theorem from \cite{CS97}, we can extend the result to $Q|\tprec|\sum_{j}w_jC_j$: 
	\begin{theorem}[\cite{CS97}]
		\label{theorem:QwC-reduction}
		Suppose there is an efficient algorithm $\calA$ that can round a fractional solution to \eqref{LP:QCmax} to a valid solution to the correspondent $Q|\tprec|C_{\max}$ instance, losing only a factor of $\alpha$. Then there is a $16\alpha$-approximation for the problem $Q|\tprec|\sum_j w_j C_j$.
	\end{theorem}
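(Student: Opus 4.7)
The plan is to invoke the standard ``geometric grouping'' paradigm of Hall, Shmoys, Schulz and Wein (and of Queyranne-Sviridenko), adapted to the LP \eqref{LP:QCmax}. First, I would reformulate \eqref{LP:QCmax} for the weighted completion time instance by replacing its objective ``$\min D$'' with ``$\min \sum_j w_j C_j$'', keeping all other constraints so that $D$ remains a decision variable upper-bounding the makespan. Let $(x^*, C^*, D^*)$ denote an optimal fractional solution; then $\sum_j w_j C_j^* \le \mathrm{OPT}$.

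Next, I would partition jobs geometrically by LP completion time using a random shift: pick $\sigma \in [0,1]$ uniformly, set $T_k = 2^{k+\sigma}$ for each integer $k$, and define $J_k = \{j \in J : T_{k-1} < C_j^* \le T_k\}$. Constraint~\eqref{LPC:QCmax-precedence} forces $j \prec j' \Rightarrow C_{j'}^* > C_j^*$, so every precedence arc points from an earlier or equal bucket to a later or equal one. For each index $k$ in order, I would produce a feasible solution to \eqref{LP:QCmax} for the sub-instance $\calJ_k := \bigcup_{\ell \le k} J_\ell$ with LP makespan $D^{(k)} = O(T_k)$, feed it to $\calA$ to obtain an integer schedule $\calS_k$ of makespan $O(\alpha T_k)$, and concatenate these schedules phase-by-phase. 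By geometric summation, the end of phase $k$ occurs by time $O(\alpha T_k)$, so every $j \in J_k$ satisfies $\tilde C_j = O(\alpha T_k) = O(\alpha)\cdot C_j^*$. Summing $w_j \tilde C_j$ and taking expectation over $\sigma$ converts the hidden constants into the claimed ratio $16\alpha$.

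The main obstacle is producing the sub-instance LP solution with makespan $O(T_k)$. Restricting $x^*$ to $\calJ_k$ satisfies every constraint of \eqref{LP:QCmax} except possibly the machine-capacity bound \eqref{LPC:QCmax-machine-capacity}: a priori the restricted load $\sum_{j \in \calJ_k} p_j x^*_{i,j}/s_i$ on machine $i$ could exceed $T_k$. I would strengthen the reformulated LP with valid inequalities of the form ``for each machine $i$ and each threshold $T \in \{T_k\}$, the fractional load on $i$ from jobs with $C_j \le T$ is at most $T$'', linearized via per-threshold auxiliary indicator variables. Since every integral schedule satisfies these inequalities, the strengthened LP still lower-bounds $\mathrm{OPT}$; and once they are enforced, the restricted fractional solution has load at most $T_k$ on each machine, so the sub-instance LP is feasible with $D^{(k)} = O(T_k)$. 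The remaining step is a constant-chasing exercise: a factor $2$ from the random shift (so that $T_k \le 2 C_j^*$ for $j \in J_k$), a factor $2$ from slack in the feasibility argument, and a factor $4$ absorbed in the geometric-series accounting multiply to give the stated $16\alpha$.
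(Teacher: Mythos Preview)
The paper does not prove this theorem; it is cited from \cite{CS97} and used as a black box, so there is no in-paper argument to compare against. Your overall plan---geometric bucketing by LP completion time, one makespan call per bucket via $\calA$, and concatenation---is exactly the Chudak--Shmoys reduction, and the constant-chasing you sketch is how the $16\alpha$ arises.

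Two execution gaps need fixing, however. First, the proposed ``strengthening'' of \eqref{LP:QCmax} (with objective replaced by $\sum_j w_jC_j$) is not a linear program: the constraint ``load on $i$ from jobs with $C_j\le T$ is at most $T$'' conditions on the \emph{variable} $C_j$, and no fractional ``auxiliary indicator variables'' can enforce it without bilinear terms or integrality. What \cite{CS97} actually uses is an interval-indexed relaxation with variables $x_{i,j,\ell}$ (job $j$ on machine $i$, completing in the $\ell$-th dyadic interval), so that the prefix-load bounds $\sum_{j}\sum_{\ell'\le\ell} p_j x_{i,j,\ell'}/s_i \le T_\ell$ are linear from the outset; this is a different LP, not a strengthened version of \eqref{LP:QCmax} with a new objective. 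Second, the sub-instance fed to $\calA$ must be $J_k$, not the cumulative $\calJ_k=\bigcup_{\ell\le k}J_\ell$: if each $\calS_k$ schedules all of $\calJ_k$ and the $\calS_k$ are then concatenated, every job is scheduled once in each phase from its own bucket onward. The correct call restricts to $J_k$ alone (precedence arcs inside $J_k$ only; cross-bucket arcs are satisfied automatically by the phase order), after observing that the interval-indexed LP restricted to $J_k$ yields a feasible solution to \eqref{LP:QCmax} on $J_k$ with makespan $O(T_k)$.
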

	
	Thus, from now on, we focus on the objective of minimizing the makespan; our goal is to design an efficient rounding algorithm as stated in Theorem~\ref{theorem:QwC-reduction} with $\alpha = O(\log m/\log\log m)$.  We assume that $m$ is big enough. For the given instance of $Q|\tprec|C_{\max}$, we shall first pre-processing the instance as in \cite{CS97} so that it contains only a small number of groups.  In the first stage of the pre-processing step, we discard all the machines whose speed is at most $1/m$ times the speed of the fastest machine.  Since there are $m$ machines, the total speed for discarded machines is at most the speed of the fastest machine.  In essence, the fastest machine can do the work of all the discarded machines; this will increase the makespan by a factor of 2. Formally, let $i^*$ be the machine with the fastest speed. For every discarded machine $i$ and any job $j$ such that $x_{i, j} > 0$, we shall increase $x_{i^*, j}$ by $x_{i, j}$ and change this $x_{i, j}$ to $0$. By scaling $D$ by a factor of $2$, the LP solution remains feasible. To see this, notice that the modification to the fractional solution can only decrease $p_j\sum_i \frac{x_{i,j}}{s_i}$ for each $j$. The only constraint we need to check is Constraint~\eqref{LPC:QCmax-machine-capacity} for $i = i^*$. Since $\sum_{i\text{ discarded}, j}\frac{p_jx_{i,j}}{s_{i^*}}  \leq \sum_{i \text{ discarded},j} \frac{p_jx_{i,j}}{ms_i} \leq \sum_{i \text{ discarded}} \frac{D}{m} \leq D$, moving the scheduling of jobs from discarded machines to $i^*$ shall only increase the processing time of jobs on $i^*$ by $D$.  Thus, we assume all machines have speed larger than $1/m$ times the speed of the fastest machine.  By scaling speeds of machines uniformly, we assume all machines $i$ have speed $i \in [1, m)$, and $|M| \leq m$.
	
	In the second stage of the pre-processing step, we partition the machines into groups, where each group contains machines with similar speeds. Let $\gamma = \log m/\log\log m$.  Then group $M_k$ contains machines with speed in $[\gamma^{k-1}, \gamma^{k})$, where $k = 1, 2, \cdots, K := \ceil{\log _\gamma m} = O(\log m/\log\log m)$.   We remark that that unlike \cite{CS97}, we \emph{can not} round down the speed of each machine $i$ to the nearest power of $\gamma$. If we do so, we will lose a factor of $(\log m/\log\log m)$ and finally we can only obtain an $O((\log m/\log\log m)^2)$-approximation.  Instead, we keep the speeds of machines unchanged.

	We now define some useful notations. For a subset $M' \subseteq M$ of machines, we define $s(M') = \sum_{i \in M'}s_i$ to be the total speed of machines in $M'$; for $M' \subseteq M$ and $j \in J$, let $x_{M', j} = \sum_{i \in M'}x_{i, j}$ be the total fraction of job $j$ assigned to machines in $M'$. 
	
	For any job $j$, let $\ell_j$ be the largest integer $\ell$ such that $\sum_{k = \ell}^{K}x_{M_k, j} \geq 1/2$. That is, the largest $\ell$ such that at least $1/2$ fraction of $j$ is assigned to machines in groups $\ell$ to $K$.  Then, let $k_j$ be the index $k \in [\ell_j, K]$ that maximizes $s(M_k)$.  That is, $k_j$ is the index of the group in groups $\ell_j$ to $K$ with the largest total speed.  Later in the machine-driven list scheduling algorithm, we shall constrain that job $j$ can only be assigned to machines in group $k_j$.  The following claim says that the time of processing $j$ on any machine in $M_{k_j}$ is not too large, compared to processing time of $j$ in the LP solution. 
	
	\begin{claim}
		\label{claim:QwC-factor-for-idle}
		For every $j \in J$, and any machine $i \in M_{k_j}$, we have $\displaystyle \frac{p_j}{s_i} \leq 2\gamma \sum_{i' \in M}\frac{p_j  x_{i', j}}{s_{i'}}$.
	\end{claim}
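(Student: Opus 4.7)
The plan is to eliminate the common factor $p_j$ on both sides, reducing the claim to $1/s_i \leq 2\gamma \sum_{i' \in M} x_{i',j}/s_{i'}$, and then to obtain a lower bound on $\sum_{i'} x_{i',j}/s_{i'}$. For any $i \in M_{k_j}$ the definition of the groups gives $s_i \geq \gamma^{k_j - 1}$, and since $k_j \in [\ell_j, K]$ we also have $s_i \geq \gamma^{\ell_j - 1}$, so $1/s_i \leq \gamma^{1-\ell_j}$. Hence it suffices to show $\sum_{i' \in M} x_{i',j}/s_{i'} \geq 1/(2\gamma^{\ell_j})$; multiplying this by $2\gamma$ recovers $\gamma^{1-\ell_j}$, which dominates $1/s_i$.

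The key observation is a consequence of the definition of $\ell_j$ as the \emph{largest} $\ell$ with $\sum_{k \geq \ell} x_{M_k, j} \geq 1/2$: the strict inequality $\sum_{k \geq \ell_j + 1} x_{M_k, j} < 1/2$ must hold. Combined with $\sum_k x_{M_k, j} = 1$ from Constraint~\eqref{LPC:QCmax-job-scheduled}, this yields $\sum_{k \leq \ell_j} x_{M_k, j} > 1/2$: strictly more than half of the fractional assignment of $j$ sits on the slow groups $M_1, \ldots, M_{\ell_j}$, each of whose machines has speed less than $\gamma^{\ell_j}$. Using $s_{i'} < \gamma^k$ for $i' \in M_k$, I would then chain
\[
\sum_{i' \in M} \frac{x_{i',j}}{s_{i'}} \;\geq\; \sum_{k=1}^{\ell_j} \sum_{i' \in M_k} \frac{x_{i',j}}{s_{i'}} \;\geq\; \sum_{k=1}^{\ell_j} \frac{x_{M_k, j}}{\gamma^k} \;\geq\; \frac{1}{\gamma^{\ell_j}} \sum_{k=1}^{\ell_j} x_{M_k, j} \;>\; \frac{1}{2\gamma^{\ell_j}},
\]
and combine with the first paragraph to deduce $1/s_i \leq \gamma^{1-\ell_j} < 2\gamma \sum_{i'} x_{i',j}/s_{i'}$. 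Multiplying by $p_j$ completes the claim.

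The main conceptual obstacle is choosing the correct half of $j$'s mass to exploit. It is tempting to use the fast half (groups $\ell_j, \ldots, K$), since that is what motivates the choice of $k_j$, but those machines have large $s_{i'}$ and so contribute little to $\sum x_{i',j}/s_{i'}$; the useful half turns out to be the slow one. Once this is recognized, the rest is arithmetic. It is also worth noting that the argument only relies on $k_j \geq \ell_j$ and not on the max-speed property of $k_j$; that property must play its role elsewhere, presumably when bounding the aggregate load assigned to $M_{k_j}$ in the list-scheduling stage.
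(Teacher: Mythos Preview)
Your proof is correct and follows essentially the same route as the paper's: both use the maximality of $\ell_j$ to get $\sum_{k\le \ell_j} x_{M_k,j} > 1/2$, bound the slow-group contribution to $\sum_{i'} x_{i',j}/s_{i'}$ below by $\tfrac12\gamma^{-\ell_j}$ via $s_{i'} < \gamma^{\ell_j}$, and combine with $s_i \ge \gamma^{\ell_j-1}$ for $i\in M_{k_j}$. Your intermediate step through $\sum_k x_{M_k,j}/\gamma^k$ is slightly more granular than necessary (the paper uses the coarser bound $s_{i'} < \gamma^{\ell_j}$ for all $i'$ in groups $1,\dots,\ell_j$ at once), and your remark that only $k_j \ge \ell_j$ is used here---with the max-speed property deferred to the load bound (Claim~\ref{claim:QwC-factor-for-busy})---is exactly right.
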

	\begin{proof}
		Notice that $\displaystyle \sum_{k = \ell_j + 1}^K x_{M_k, j} < 1/2$ by our definition of $\ell_j$.  Thus, $\displaystyle \sum_{k = 1}^{\ell_j} x_{M_k, j} > 1/2$.  Then, $\displaystyle \sum_{i' \in M}\frac{x_{i', j}}{s_{i'}} \geq \sum_{i' \in \union_{k=1}^{\ell_j}M_k} \frac{x_{i', j}}{s_{i'}} \geq \frac{1}{2} \cdot \gamma^{-\ell_j}$. This is true since $\displaystyle \sum_{i' \in \union_{k=1}^{\ell_j}M_k} x_{i', j} \geq 1/2$ and every $i'$ in the sum has $\frac{1}{s_{i'}} \geq \gamma^{-\ell_j}$.
		
		Since $i$ is in group $k_j \geq \ell_j$,  $i'$ has speed at least $\gamma^{\ell_j - 1}$ and thus $\frac{1}{s_i} \leq \gamma ^{1-\ell_j}$.  Then the claim follows.
	\end{proof}
	
	\begin{claim}
		\label{claim:QwC-factor-for-busy}
		$\displaystyle \sum_{j \in J}\frac{p_j}{s(M_{k_j})} \leq 2KD$.
	\end{claim}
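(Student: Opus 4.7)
The plan is to bound $\sum_j p_j / s(M_{k_j})$ by combining the LP machine-capacity constraint~\eqref{LPC:QCmax-machine-capacity} (summed over each group) with the two defining properties of $k_j$: that $k_j \in [\ell_j,K]$ maximizes $s(M_k)$, and that at least half of $j$ is fractionally assigned to groups indexed $\ell_j,\ldots,K$.

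First, I would aggregate constraint~\eqref{LPC:QCmax-machine-capacity} over each group. Summing over $i \in M_k$ gives
\[
\sum_{j} p_j\, x_{M_k,j} \;=\; \sum_{i \in M_k}\sum_{j} p_j\, x_{i,j} \;\leq\; \sum_{i \in M_k} s_i D \;=\; s(M_k)\, D,
\]
so $\sum_{j} \frac{p_j\, x_{M_k,j}}{s(M_k)} \leq D$ for each $k$. Summing over all $K$ groups yields $\sum_{k}\sum_{j}\frac{p_j x_{M_k,j}}{s(M_k)} \leq K D$, which is our ``global'' budget.

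Next, for each job $j$ I would convert $\frac{p_j}{s(M_{k_j})}$ into a sum against this budget. Since $k_j$ is chosen to maximize $s(M_k)$ over $k \in [\ell_j, K]$, we have $s(M_k) \leq s(M_{k_j})$ for every such $k$, i.e.\ $\frac{1}{s(M_k)} \geq \frac{1}{s(M_{k_j})}$. Combining this with the definition of $\ell_j$, which gives $\sum_{k=\ell_j}^{K} x_{M_k,j} \geq 1/2$, one obtains
\[
\sum_{k=\ell_j}^{K} \frac{p_j\, x_{M_k,j}}{s(M_k)} \;\geq\; \frac{p_j}{s(M_{k_j})}\sum_{k=\ell_j}^{K} x_{M_k,j} \;\geq\; \frac{p_j}{2\, s(M_{k_j})}.
\]
Therefore $\frac{p_j}{s(M_{k_j})} \leq 2\sum_{k=\ell_j}^{K}\frac{p_j x_{M_k,j}}{s(M_k)} \leq 2\sum_{k=1}^{K}\frac{p_j x_{M_k,j}}{s(M_k)}$.

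Finally, summing over $j \in J$ and exchanging the order of summation:
\[
\sum_{j \in J}\frac{p_j}{s(M_{k_j})} \;\leq\; 2\sum_{k=1}^{K}\frac{1}{s(M_k)}\sum_{j}p_j\, x_{M_k,j} \;\leq\; 2\sum_{k=1}^{K} D \;=\; 2KD,
\]
which is the desired bound. There isn't really a hard step here; the only thing to watch out for is getting the inequality $\frac{1}{s(M_k)} \geq \frac{1}{s(M_{k_j})}$ in the right direction, which is exactly where the maximality in the definition of $k_j$ (rather than, say, the minimality) is essential.
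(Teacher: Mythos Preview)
Your proof is correct and follows essentially the same approach as the paper's: both use the per-job inequality $\sum_{k=\ell_j}^K \frac{x_{M_k,j}}{s(M_k)} \geq \frac{1}{2s(M_{k_j})}$ (from the maximality of $s(M_{k_j})$ and the half-mass condition defining $\ell_j$), then sum over $j$ and apply the aggregated group-capacity bound $\sum_j p_j x_{M_k,j} \leq s(M_k)D$. The only difference is presentational order---you establish the global $KD$ budget first, while the paper derives the per-job bound first---but the ingredients and logic are identical.
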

	
	\begin{proof}
		Focus on each job $j \in J$. Noticing that $\displaystyle \sum_{k=\ell_j}^K x_{M_k, j} \geq 1/2$, and $k_j$ is the index of the group with the maximum total speed, we have
		\begin{align*}
			\sum_{k = 1}^K \frac{x_{M_k, j}}{s(M_k)} \geq \sum_{k = \ell_j}^K\frac{x_{M_k, j}}{s(M_k)} \geq \frac{1}{2s(M_{k_j})}.
		\end{align*}
		Summing up the above inequality scaled by $2p_j$, over jobs $j$, we have
\ifdefined\CR
		\begin{align*}
			&\quad \sum_{j \in J} \frac{p_j}{s(M_{k_j})} \leq 2\sum_{j \in J}p_j \sum_{k = 1}^K\frac{x_{M_k, j}}{s(M_k)} \\
			&= 2\sum_{k = 1}^K\frac{1}{s(M_k)}\sum_{j \in J}p_jx_{M_k, j} \leq 2\sum_{k = 1}^K D = 2KD.
		\end{align*}
\else
		\begin{align*}
			\sum_{j \in J} \frac{p_j}{s(M_{k_j})} \leq 2\sum_{j \in J}p_j \sum_{k = 1}^K\frac{x_{M_k, j}}{s(M_k)} = 2\sum_{k = 1}^K\frac{1}{s(M_k)}\sum_{j \in J}p_jx_{M_k, j} \leq 2\sum_{k = 1}^K D = 2KD.
		\end{align*}
\fi
		To see the last inequality, we notice that $\sum_{j \in J}p_jx_{M_k, j}$ is the total size of jobs assigned to group $k$, $s(M_k)$ is the total speed of all machines in $M_k$ and $D$ is the makespan. Thus, we have $\sum_{j \in J}p_jx_{M_k, j} \leq s(M_k)D$. Formally, Constraint~\eqref{LPC:QCmax-machine-capacity} says $\sum_{j \in J} p_j x_{i, j} \leq s_iD$ for every $i \in M_k$. Summing up the inequalities over all $i \in M_k$ gives $\sum_{j \in J}p_jx_{M_k, j} \leq s(M_k)D$. 
	\end{proof}
	
	With the $k_j$ values, we can run the machine-driven list-scheduling algorithm in \cite{CS97}. The algorithm constructs the schedule in real time. Whenever a job completes (or at the beginning of the algorithm), for each idle machine $i$,  we attempt to schedule an unprocessed job $j$ on $i$ subject to two constraints: (i) machine $i$ can only pick a job $j$ if $i \in M_{k_j}$ and (ii) all the predecessors of $j$ are completed. If no such job $j$ exists, machine $i$ remains idle until a new job is competed.   We use $\calS$ to denote this final schedule; Let $i_j \in M_{k_j}$ be the machine that process $j$ in the schedule constructed by our algorithm. 
	
	The following simple observation is the key to prove our $O(\log m/\log\log m)$-approximation. Similar observations were made and used implicitly in \cite{CS97}, and in \cite{Gra69} for the problem on identical machines. However, we think stating the observation in our way makes the analysis cleaner and more intuitive.   We say a time point $t$ is critical, if some job starts or ends at $t$.  To avoid ambiguity, we exclude these critical time points from our analysis (we only have finite number of them).  At any non-critical time point $t$ in the schedule, we say a job $j$ is minimal if all its predecessors are completed but $j$ itself is not completed yet. 
	
	\begin{observation} 
		\label{obs:QwC-reduce-or-busy}
		At any non-critical time point $t$ in $\calS$, either all the minimal jobs $j$ are being processed, or there is a group $k$ such that all machines in $M_k$ are busy.
	\end{observation}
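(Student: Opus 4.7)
The plan is to argue by contradiction. Suppose there is a non-critical time $t$ at which some minimal job $j$ is not being processed and, moreover, every group has an idle machine; in particular, some machine $i \in M_{k_j}$ is idle at $t$. My goal is to derive a contradiction with the definition of $\calS$ as the output of the machine-driven list-scheduling algorithm.

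First I would pin down the right reference time. Let $\tau$ be the largest critical time with $\tau \leq t$; since $t$ is non-critical, $\tau < t$, and no state change happens in $(\tau, t]$. Thus $i$ is idle throughout $(\tau, t]$, and $j$ remains unstarted throughout $(\tau, t]$. Because $j$ is minimal at $t$, every predecessor of $j$ is already completed at $t$; since completions occur only at critical times, they all occurred at times $\leq \tau$. Hence $j$ is available at $\tau$, and since $i \in M_{k_j}$, $j$ is a valid candidate for $i$ at $\tau$.

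Next I would rule out the possibility that $j$ itself was assigned to some machine $i'$ during the scheduling step at $\tau$. Such an assignment would make $j$ complete at $\tau + p_j/s_{i'}$. If this time is $>t$, then $j$ would be in process at $t$, contradicting our assumption; if it is $\leq t$, it would be a critical time lying in $(\tau, t]$, contradicting the maximality of $\tau$. So $j$ remains unassigned throughout the algorithm's scheduling round at $\tau$. Consequently, when the greedy step reaches machine $i$, at least one compatible available job (namely $j$) is still up for grabs, so $i$ must be assigned some job $j^\dagger$ and becomes busy at $\tau$. The completion time $\tau + p_{j^\dagger}/s_i$ is then either $>t$, contradicting $i$ idle at $t$, or in $(\tau, t]$, again contradicting the maximality of $\tau$. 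Either way we reach a contradiction.

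The main subtle point I expect is the ``greedy-must-pick'' step: I need the semantics of the machine-driven list-scheduling algorithm to guarantee that whenever a compatible available job for $i$ remains at a critical time, $i$ actually gets some job. This is where the earlier description of the algorithm (``for each idle machine $i$, we attempt to schedule an unprocessed job $j$ on $i$\ldots\ if no such job exists, machine $i$ remains idle until a new job is completed'') is used in its strongest form; everything else is a careful bookkeeping argument around critical times.
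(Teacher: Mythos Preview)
Your proposal is correct and follows essentially the same approach as the paper: both argue the contrapositive that if some minimal job $j$ is not being processed at $t$, then all machines in $M_{k_j}$ must be busy. The paper's proof is just two sentences (``All the minimum jobs at $t$ are ready for processing. If some such job $j$ is not processed at $t$, it must be the case that all machines in $M_{k_j}$ are busy.''); your version supplies the critical-time bookkeeping that makes this rigorous, but the underlying idea is identical.
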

	\begin{proof}
		All the minimum jobs at $t$ are ready for processing. If some such job $j$ is not processed at $t$, it must be the case that all machines in $M_{k_j}$ are busy. 		
	\end{proof}
	
	As time goes in $\calS$, we maintain the precedence graph over $J'$, the set of jobs that are not completed yet: we have an edge from $j \in J'$ to $j' \in J'$ if $j \prec j'$.  At any time point, the weight of a job $j$ is the time needed to complete the rest of job $j$ on $i_j$, i.e, the size of the unprocessed part of job $j$, divided by $s_{i_j}$. If at $t$, all minimum jobs are being processed, then the weights of all minimal jobs are being decreased at a rate of 1. Thus, the length of the longest path of in the precedence graph is being decreased at a rate of $1$. The total length of the union of these time points is at most length of the longest path in the precedence graph at time $0$, which is at most 
	\begin{align*}
		\max_{H} \sum_{j \in H} \frac{p_j}{s_{i_j}} \leq \max_{H} 2\gamma \sum_{j \in H}\sum_{i \in M}\frac{p_jx_{i, j}}{s_i} \leq 2\gamma D,
	\end{align*}
	where $H$ is over all precedence chains of jobs.  The first inequality is by Claim~\ref{claim:QwC-factor-for-idle} and the second inequality is by Constraints~\eqref{LPC:QCmax-precedence} and \eqref{LPC:QCmax-C-lt-D} in the LP. 
	
	If not all the minimal jobs are being processed at time $t$, then there must be a group $k$ such that all machines in group $k$ are busy, by Observation~\ref{obs:QwC-reduce-or-busy}. The total length of the union of all these points is at most 
	\begin{align*}
		\sum_{k}\frac{\sum_{j: k_j = k}p_j}{s(M_k)} = \sum_{j \in J}\frac{p_j}{s(M_{k_j})} \leq 2KD,
	\end{align*}
	by Claim~\ref{claim:QwC-factor-for-busy}. 
	
	Thus, our schedule has makespan at most $2(\gamma + K)D = O(\log m/\log\log m)D$, leading to an $O(\log m/\log\log m)$-approximation for $Q|\tprec|C_{\max}$. Combining this with Theorem~\ref{theorem:QwC-reduction}, 
	we obtain an $O(\log m/\log \log m)$-approximation for $Q|\tprec|\sum_j w_jC_j$, finishing the proof of Theorem~\ref{theorem:main-QprecwC}. Indeed, as shown in \cite{CS97}, this factor is tight if we use \eqref{LP:QCmax}.

	\newcommand{\cone}{0.2}  
\newcommand{\ctwo}{0.4}   
\newcommand{\cthree}{1.2}  
\newcommand{\cfour}{1.4}    
\newcommand{\cfive}{0.1}     
\newcommand{\csix}{0.01}

\newcommand{\bone}{10}

\newcommand{\assignmachine}{{\to}}
\newcommand{\assignblock}[1]{{\overset{#1}{\leadsto}}}
\newcommand{\samegroup}[1]{{\overset{#1}{\sim}}}
\newcommand{\notsamegroup}[1]{{\overset{#1}{\not\sim}}}

\section{$(1.5-c)$-Approximation Algorithm for Unrelated Machine Scheduling Based on Time-Indexed LP}
\label{sec:RwC}

In this section, we prove Theorem~\ref{theorem:main-RwC}, 
using the dependence rounding scheme of \cite{BSS16} as a black-box.    We first describe the rounding algorithm and then give the analysis. 

\subsection{Rounding Algorithm}
Let $x$ be a feasible solution to ~\eqref{LP:RwC} as stated in Theorem~\ref{theorem:main-RwC}. Again, we use $C_j = \sum_{i, s}x_{i, j, s} (s+p_{i,j})$ to denote the completion time of $j$ in the LP solution; thus the value of the LP is $\sum_{j} w_jC_j$.  Let $y_{i, j} = \sum_{s}x_{i, j, s}$ for every pair $i, j$ to denote the fraction of job $j$ that is scheduled on machine $i$; so $\sum_{i} y_{i, j} = 1$ for every $j$.   It is convenient to define a rectangle $R_{i, j, s}$ for every $x_{i, j, s} > 0$: $R_{i, j, s}$ has height $x_{i, j, s}$, with horizontal span being $(s, s+p_{i,j}]$.  We say $R_{i, j, s}$ is the rectangle for $j$ on machine $i$ at time $s$. We say a rectangle covers a time point (resp. a time interval), if its horizontal span covers the time point (resp. the time interval).

We can recover the classic 1.5-approximation for $R||\sum_{j}w_jC_j$ using \eqref{LP:RwC}. For each job $j \in J$, randomly choose a rectangle for $j$: the probability of choosing $R_{i, j, s}$ is $x_{i, j, s}$, i.e, its height. We shall assign job $j$ to $i$ and let $\tau_j$ be a random number in $(s, s+p_{i,j}]$. Then, all jobs $j$ assigned to machine $i$ will be scheduled in increasing order of their $\tau_j$ values.  To see this is a 1.5-approximation, fix a job $j^* \in J$ and condition on the event that $j^* \assignmachine i$ (indicating that $j^*$ is assigned to machine $i$) and $\tau_{j^*}$. Let $\tilde C_{j^*}$ be the completion time of $j^*$ in the schedule returned by the algorithm. Notice that $\E\left[\tilde C_{j^*}|j^* \assignmachine i, \tau_{j^*}\right] \leq \E\left[\sum_{j \neq j^*: j \assignmachine i, \tau_j \leq \tau_{j^*}}p_{i,j}|\tau_{j^*}\right] + p_{i, j^*}$. If for some $j \neq j^*$, the selected rectangle for $j$ is $R_{i, j, s}$ and $ \tau_j \leq \tau_{j^*}$, then we say the  $p_{i, j}$ term in summation inside $\E[\cdot]$ on the right side is contributed by the rectangle $R_{i,j,s}$. We then consider the contribution of each rectangle $R_{i, j, s}$, $j \neq j^*$ to the expectation on the right side. The probability that we choose the rectangle $R_{i, j, s}$ for $j$ is $x_{i, j, s}$. Under this condition, the probability that $\tau_j \leq \tau_{j^*}$ is exactly fraction of the portion of $R_{i, j, s}$ that is before $\tau_{j^*}$. When this happens, the contribution made by this $R_{i, j, s}$ is exactly $p_{i, j}$. Thus, the contribution of $R_{i, j, s}$ to the expectation is exactly the area of the portion of $R_{i, j, s}$  before $\tau_{j^*}$. Since the total height of rectangles on $i$ covering any time point is at most 1, the total contribution from all rectangles on $i$ is at most $\tau_{j^*}$. Thus $\E\left[\tilde C_{j^*}|j^* \assignmachine i, \tau_{j^*}\right] \leq \E\left[\sum_{j \neq j^*: j \assignmachine i, \tau_j \leq \tau_{j^*}}p_{i,j}|\tau_{j^*}\right] + p_{i, j^*} \leq \tau_{j^*} + p_{i, j^*}$.  Notice that conditioned on choosing rectangle $R_{i, j^*, s}$ for $j^*$, the expected value of $\tau_{j^*}$ is $s + p_{i, j^*}/2$. Thus, $\E\left[\tilde C_{j^*}\right] \leq \sum_{i, s}x_{i, j^*, s}(s + p_{i, j^*}/2 + p_{i, j^*}) = \sum_{i, s}x_{i, j^*, s}(s + 1.5p_{i, j^*}) \leq 1.5 \sum_{i, s}x_{i, j^*, s} (s+p_{i,j^*}) =1.5C_{j^*}$. This recovers the $1.5$-approximation ratio.

As \cite{BSS16} already showed, we can not beat 1.5 if the assignments of jobs to machines are independent. This lower bound is irrespective of the LP we use: even if the fractional solution is a convex combination of integral solutions for the problem, independently assigning jobs to machines with probabilities $\set{y_{i, j}}_{i,j}$ can only lead to a 1.5-approximation. To overcome this barrier, \cite{BSS16} used an elegant dependence rounding scheme, that is formally stated in the following theorem, which we shall apply as a black-box:

\begin{theorem}[\cite{BSS16}]
	\label{thm:BSS}
	Let $\zeta = 1/108$. Consider a bipartite graph $G = (M \cup J, E)$ between the set $M$ of machines and the set $J$ of jobs. Let $y \in [0, 1]^E$ be fractional values on the edges satisfying $y(\delta(j)) = 1$ for every job $j \in J$. For each machine $i \in M$, select any family of disjoint $E^1_i, E^2_i, \cdots, E^{\kappa_i}_i \subseteq \delta(i)$ subsets of edges incident to $i$ such that $y(E^\ell_{i}) \leq 1$ for $\ell = 1,\cdots, \kappa_i$. 
	
	Then, there exists a randomized polynomial-time algorithm that outputs a random subset of the edges $E^* \subseteq E$ satisfying
	\begin{enumerate}[label=(\alph*)]
		\item  For every $j \in J$ , we have $|E^* \cap \delta(j)| = 1$ with probability 1; 
		\item  For every $e \in E$, $\Pr[e\in E^*]=y_e$;
		\item For every $i \in M$ and all $e \neq e' \in \delta(i)$: 
		\begin{align*}
			\Pr\left [e\in E^*, e' \in E^*\right] \leq 
			\begin{cases}
				(1-\zeta)·\cdot y_e y_{e'} & \text{ if } \exists\ell \in [\kappa_i], e,e' \in E^{\ell}_i,\\
				y_ey_{e'} & otherwise.
			\end{cases}
		\end{align*}
	\end{enumerate}
\end{theorem}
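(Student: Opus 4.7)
My plan is to construct the algorithm inside the iterative pipage/GKPS framework and then augment it with a dedicated ``correlation boosting'' step whose sole purpose is to produce the extra $(1-\zeta)$ factor for pairs sharing a common group $E^\ell_i$. I would maintain a fractional vector $y^{(t)}\in[0,1]^E$ with $y^{(t)}(\delta(j))=1$ for every $j$, starting at $y^{(0)}=y$. At each iteration, look at the bipartite support graph $G^{(t)}$ on fractional edges: if it contains a cycle $C$, push $+\alpha$ and $-\alpha$ alternately around $C$ for a random signed $\alpha$ with $\E[\alpha]=0$, chosen large enough that at least one coordinate becomes integral; if $G^{(t)}$ is a forest, do the analogous push along a maximal alternating path between two machine leaves. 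Because the cycle/path alternates at every job vertex, $y^{(t)}(\delta(j))$ stays at $1$ and property (a) holds at termination. Standard analysis of this move gives property (b) and the weak negative correlation $\Pr[e,e'\in E^*]\le y_ey_{e'}$ at every machine, covering the ``otherwise'' branch of property (c) for free.

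To earn the factor $(1-\zeta)$, I would invoke a correlation-boost subroutine with a fixed probability at each iteration. When triggered, it picks a machine $i$ and a group $E^\ell_i$ that currently contains at least two fractional edges $e,e'$; then it performs a simultaneous signed $\pm\delta$ push on both $e$ and $e'$ in the same direction, rebalanced by an opposite $\mp 2\delta$ distributed along fractional edges of $\delta(i)\setminus E^\ell_i$ and then propagated through $G^{(t)}$ along an alternating path so that every job's marginal is restored. The random sign is chosen with mean zero (preserving marginals) and the magnitudes are tuned so that the $[0,1]$ box is respected and at least one coordinate becomes integral. Because the boost moves $y_e$ and $y_{e'}$ in the same direction, it strictly lowers $\E[y^{(t+1)}_e y^{(t+1)}_{e'}\mid y^{(t)}]$ below $y^{(t)}_e y^{(t)}_{e'}$ by a quantifiable amount, while for any pair not sharing a group the alternating routing keeps the two coordinates moving in opposite directions, so negative correlation for such pairs is never destroyed.

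The main obstacle, and the real heart of the theorem, is the quantitative bound $\Pr[e,e'\in E^*]\le (1-\zeta)y_ey_{e'}$ uniformly over pairs inside a common group, which I would attack via a potential-function argument. The ordinary pipage step already gives $\E[y^{(t+1)}_e y^{(t+1)}_{e'}\mid y^{(t)}]\le y^{(t)}_e y^{(t)}_{e'}$, so telescoping yields the weak bound; what is needed is a positive drift of order $y^{(t)}_e y^{(t)}_{e'}$ per ``useful'' boost, accumulated over the lifetime of $(e,e')$ as fractional edges, totaling $\zeta\, y_e y_{e'}$. The delicate case is when $y(E^\ell_i)$ is close to $1$ but spread thinly across many edges, so that any single boost can only push by a tiny $\delta$; balancing this against the probability of triggering a boost, of selecting this pair inside the chosen group, and of surviving enough iterations before either $y_e$ or $y_{e'}$ is integrally fixed, is where the concrete $\zeta=1/108$ should arise as a worst-case product of three small constants. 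A secondary subtlety is verifying that the boost step never \emph{increases} $\Pr[e,e'\in E^*]$ for pairs not sharing a group, which forces the alternating rebalancing path to be chosen so that, viewed outside $E^\ell_i$, the move is expressible as a convex combination of ordinary pipage moves.
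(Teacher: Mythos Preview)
The paper does not prove this theorem at all: it is quoted verbatim from \cite{BSS16} and explicitly used as a black box (``we shall apply as a black-box''). There is therefore no ``paper's own proof'' to compare your proposal against; in this paper Theorem~\ref{thm:BSS} is an imported tool, not a result established here.

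As for your sketch on its own merits, the high-level architecture---randomized pipage/GKPS rounding on the bipartite support, preserving the job-side degree constraint exactly and the marginals in expectation, augmented by a correlated ``same-direction'' push inside a group $E^\ell_i$ to manufacture the extra negative correlation---is indeed the shape of the argument in \cite{BSS16}. However, your description of the boost step is where the real difficulty lies and your outline does not yet resolve it. Pushing $e$ and $e'$ in the same direction while rebalancing through $\delta(i)\setminus E^\ell_i$ and then propagating along alternating paths to restore all job marginals is not obviously well-defined: the rebalancing mass $\mp 2\delta$ must be absorbed by other edges at $i$, each of which touches a different job, and restoring those jobs' marginals forces further pushes that can feed back into $E^\ell_i$ or into other groups at other machines. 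Ensuring that this cascade terminates, stays in $[0,1]^E$, and never creates positive correlation for any out-of-group pair is the nontrivial combinatorial content of \cite{BSS16}, and your claim that the rebalancing can be written as a convex combination of ordinary pipage moves is asserted rather than argued. Likewise, the potential-function accounting you describe---accumulating $\Omega(y_e y_{e'})$ drift per boost over the fractional lifetime of the pair---needs a lower bound on the number of boosts that hit a given pair before either endpoint is rounded, and you have not supplied one; this is exactly where the constant $1/108$ is pinned down in \cite{BSS16}, and it does not fall out of a generic ``product of three small constants'' without a careful case analysis.
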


In the theorem, $\delta(u)$ is the set of edges incident to the vertex $u$ in $G$, and $y(E') = \sum_{e \in E'}y_e$ for every $E' \subseteq E$.  We shall apply the theorem with $G = (M \cup J, E)$, with $E = \set{(i, j): y_{i, j} > 0}$ and $y$ values being our $y$ values.  In the theorem, we can specify a grouping for edges incident to every machine $i \in M$ subject to the constraint that the total $y$-value of all edges in a group is at most 1. The theorem says that we can select a subset $E^* \subseteq E$ of edges respecting the marginal probabilities $\set{y_{i, j}}_{(i, j) \in E}$, and satisfying the property that exactly one edge incident to any job $j$ is selected, and the negative correlation.  The key to the improved approximation ratio in \cite{BSS16} is that for two distinct edges $e, e'$ in the same group for $i$, their correlation is ``sufficiently negative''. 

The key to apply Theorem~\ref{thm:BSS} is to define a grouping for each machine $i$.  Notice that our analysis for the independent rounding algorithm suggests that the expected completion time for $j^*$ is at most $\sum_{i, s}x_{i, j^*, s}(s + 1.5p_{i, j^*})$; that is, there is no 1.5-factor before $s$. Thus, for the 1.5-approximation ratio to be tight, for most rectangles $R_{i, j^*, s}$, $s$ should be very small compared to $p_{i, j^*}$. Intuitively,  we define our random groupings such that, if such rectangles for $j$ and for $j'$ on machine $i$ overlap a lot, then $j$ and $j'$ will have a decent probability to be grouped together in the grouping for $i$. 

\paragraph{Defining the Groupings for Machines}  Our groupings for the machines are random. We choose a $\tau_{i, j}$ value for every job $j$ on every machine $i$ such that $y_{i, j} > 0$ (as opposed to choosing only one $\tau_j$ value for a job $j$ as in the recovered 1.5-approximation): choose $s_{i, j}$ at random such that $\Pr[s_{i, j}= s]=x_{i, j, s}/y_{i, j}$; this is well-defined since $\sum_{s}x_{i, j, s} = y_{i, j}$. Then we let $\tau_{i, j}$ be a random real number in $(s_{i, j}, s_{i,j}+p_{i,j}]$.   

Recall that the expected completion time of $j$ in the schedule given by the independence rounding is $\sum_{i, s}x_{i, j, s}(s + 1.5p_{i,j})$. As there is no $1.5$-factor before $s$, we can afford to ``shift'' a rectangle $R_{i,j,s}$ to the right side by a small constant times $s$ and use the shifted rectangles to sample $\set{s_{i,j}}_{i,j}$ and $\set{\tau_{i,j}}$ values.  On one hand, after the shifting of rectangles for $j$, $\E[\tilde C_j]$ can still be bounded by $1.5C_j$. On the other hand, the shifting of rectangles for $j$ will benefit the other jobs.   Formally, for every machine $i$ and job $j$ with $y_{i, j} > 0$, we define 
\begin{align*}
	\phi_{i, j} = \frac{1}{y_{i,j}}\sum_{s}x_{i,j,s}s 
\end{align*}
the average starting time of rectangles for job $j$ on machine $i$, and
\begin{align*}
	\theta_{i, j} = \cone (s_{i,j} + \phi_{i, j}) + \ctwo y_{i, j}p_{i, j}.
\end{align*}
to be the shifting parameter for job $j$ on $i$. Namely, we shall use the values of $\set{\tau_{i, j} + \theta_{i, j}}_{i,j}$ to decide the order of scheduling jobs. 

We shall distinguish between good jobs and bad jobs. Informally, we say a job $j$ is good on a machine $i$, if using the independent rounding algorithm, we can already prove a better than 1.5-factor on the completion time of $j$, conditioned on that $j$ is assigned to $i$. Formally,
\begin{definition}
	Given a job $j$ and machine $i$ with $y_{i, j} > 0$, we say $j$ is good on $i$ if 
	\begin{align*}
		\phi_{i, j} + y_{i, j}p_{i, j} \geq \csix  p_{i, j}.
	\end{align*}
	Otherwise, we say job $j$ is bad on $i$. 
\end{definition}

Returning to the recovered 1.5-approximation algorithm, we can show that $\E[\tilde C_{j^*} | j^* \assignmachine i] \leq (1.5-\Omega(1))(\phi_{i,j^*} + p_{i,j^*})$, if $j^*$ is good on $i$.  Either $\phi_{i, j^*} \geq 0.005p_{i,j^*}$ or $y_{i, j^*} \geq 0.005$. In the former case, we have $\E[\tilde C_{j^*} | j^* \assignmachine i] \leq \phi_{i, j^*} + 1.5p_{i,j^*}$, which is at most $(1.5 - \Omega(1))(\phi_{i,j^*}+p_{i,j^*})$. In the latter case, we total area of the portions of rectangles $\set{R_{i, j, s}:j \neq j^*, s}$ before $\tau_{j^*}$ is smaller than $\tau_{j^*}$ by $y_{i,j^*}p_{i, j^*}/2$, in expectation over all $\tau_{j^*}$. This also can save a constant factor.   The formal argument will be made in the proof of  Lemma~\ref{lemma:RwC-bound-good}, where we shall not use the strong negative correlation of the dependence rounding scheme. 

Now we are ready to define the groupings $\set{E^\ell_i}_{i\in M, j \in [\kappa_i]}$ in Theorem~\ref{thm:BSS}. Till this end, we fix a machine $i$ and show how to construct the grouping for $i$. If a job $j$ is good on $i$, then $(i, j)$ is not in any group. It suffices to focus on bad jobs on $i$; keep in mind that for these jobs $j$, both $\phi_{i, j}/p_{i, j}$ and $y_{i, j}$ are tiny; thus $s_{i,j}/p_{i,j}$ and $\theta_{i,j}/p_{i,j}$ will also be tiny with high probability.

\begin{definition}
	A basic block is a time interval $(2^a, 2^{a+1}] \subseteq (0, T]$, where $a \geq -2$ is an integer.  
\end{definition}

\begin{definition}
	\label{def:RwC-assign-job-to-block}
	For a bad job $j$ on machine $i$ (thus $y_{i, j}>0$), we say the edge $(i, j)$ is assigned to a basic block $(2^a, 2^{a+1}]$, denoted as $j \assignblock{i}a$, if 
	\begin{enumerate}[itemsep=0pt, topsep=3pt,label=(\ref{def:RwC-assign-job-to-block}\alph*), labelwidth=*, leftmargin=*]
		\item $(2^a, 2^{a+1}] \subseteq (\bone\phi_{i, j}, p_{i, j}]$, and \label{property:RwC-job-2-block-sub-interval}
		\item $s_{i, j} + \theta_{i, j}  \leq 2^a$, and \label{property:RwC-job-2-block-stheta-small}
		\item $\tau_{i, j} \in (2^a, 2^{a+1}]$.  \label{property:RwC-job-2-block-tau-inside}
	\end{enumerate}
\end{definition}

Property~\ref{property:RwC-job-2-block-sub-interval} requires the block to be inside $(10 \phi_{i, j}, p_{i, j}]$ and Property~\ref{property:RwC-job-2-block-tau-inside} requires $\tau_{i, j}$ to be inside the block.  Property~\ref {property:RwC-job-2-block-stheta-small} requires that for the rectangle for $j$ starting at $s_{i, j}$, after we shift it by $\theta_{i, j}$ distance, it still contains $(2^a, 2^{a+1}]$.  With Property~\ref{property:RwC-job-2-block-sub-interval} and the definition of $\phi_{i,j}$, we can prove the following lemma:
\begin{lemma}
	\label{lemma:RwC-weight-to-a-block}
	For every machine $i$ and a basic block $(2^a, 2^{a+1}]$, we always have $\sum_{j \assignblock{i} a}y_{i, j} \leq 10/9$.
\end{lemma}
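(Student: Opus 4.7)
The plan is to combine a Markov-type inequality on $\phi_{i, j}$ with the LP's congestion constraint~\eqref{LPC:RwC-congestion} on machine $i$. Fix $i$ and a block index $a$; the only input I will need is Property~\ref{property:RwC-job-2-block-sub-interval}, which for any $j$ with $j \assignblock{i} a$ yields $p_{i, j} \ge 2^{a+1}$ and $\phi_{i, j} \le 2^a/10$. (Properties~\ref{property:RwC-job-2-block-stheta-small} and~\ref{property:RwC-job-2-block-tau-inside} are consequences of the random choices $s_{i,j}, \tau_{i,j}$ and are not needed for this ``always'' bound.)

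First, I would isolate the portion of each assigned job whose start time is early. Writing $\phi_{i, j}\, y_{i, j} = \sum_s s\, x_{i, j, s}$ and lower-bounding the right side by the tail $s > 2^a$, Markov gives $2^a \sum_{s > 2^a} x_{i, j, s} \le \phi_{i, j}\, y_{i, j} \le (2^a/10)\, y_{i, j}$, hence $\sum_{s \le 2^a} x_{i, j, s} \ge (9/10)\, y_{i, j}$ for every $j \assignblock{i} a$. Second, for any such pair $(j, s)$ with $s \le 2^a$ and $x_{i, j, s} > 0$, the rectangle $R_{i, j, s}$ has left endpoint $s \le 2^a$ and right endpoint $s + p_{i, j} \ge 2^{a+1}$, so it contains the whole block $(2^a, 2^{a+1}]$. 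Thus all of these rectangles simultaneously cover a single chosen point inside the block, so applying~\eqref{LPC:RwC-congestion} at that point gives
\[
  \sum_{j \assignblock{i} a} \sum_{s \le 2^a} x_{i, j, s} \;\le\; 1.
\]
Chaining the two bounds yields $\tfrac{9}{10}\sum_{j \assignblock{i} a} y_{i, j} \le 1$, i.e., the claimed $10/9$ upper bound.

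The only mildly delicate point I anticipate is matching the ``point inside the block'' against~\eqref{LPC:RwC-congestion}, which is stated for integer $t$. For $a \ge 0$ the block contains an integer time (e.g., $t = 2^a$ or $t = 2^{a+1}-1$) and the inequality above is immediate; the two edge values $a \in \{-2, -1\}$ can be handled by noting that any $j$ assigned to such a block has $p_{i, j} \ge 2^{a+1}$ and hence $p_{i, j} \ge 1$ as a positive integer, so the only possible early start is $s = 0$ and applying~\eqref{LPC:RwC-congestion} at $t = 1$ captures every relevant rectangle. This is a routine check; the main content is the two-step pairing of Markov on $\phi_{i, j}$ with machine capacity on $i$.
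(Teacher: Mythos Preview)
Your proof is correct and follows essentially the same two-step approach as the paper: a Markov inequality on the starting times (to show that at least a $9/10$-fraction of each assigned job's mass has early start) followed by the machine-capacity constraint~\eqref{LPC:RwC-congestion} (to bound the total height of rectangles covering the block by $1$). The only cosmetic difference is the Markov threshold: the paper cuts at $s\le 10\phi_{i,j}$ while you cut at $s\le 2^a$ (which is at least $10\phi_{i,j}$ by Property~\ref{property:RwC-job-2-block-sub-interval}), and you also make explicit the integer-time edge case for $a\in\{-2,-1\}$ that the paper glosses over.
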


\begin{proof}
	Focus on a job $j$ that is bad on $i$. We have $\sum_{s  \leq \bone\phi_{i, j}} x_{i, j,s} \geq 9y_{i, j}/10$, since otherwise we shall have $\sum_{s}x_{i, j, s} s > \bone\phi_{i, j} \cdot (y_{i, j}/10) = \phi_{i, j}y_{i,j}$, contradicting the definition of $\phi_{i, j}$.  Thus, $y_{i, j} \leq (10/9)\sum_{s \leq \bone\phi_{i, j}}x_{i, j, s}$.   If $j\assignblock{i}a$, then $(2^a, 2^{a+1}] \subseteq (\bone\phi_{i, j}, p_{i, j}]$ by Property~\ref{property:RwC-job-2-block-sub-interval}. Thus, $(2^a, 2^{a+1}]$ is covered by $(s, s+p_{i, j}]$ for every $s \leq \bone\phi_{i, j}$. Thus, we have $\sum_{j \assignblock{i} a} y_{i, j} \leq (10/9)\sum_{j \assignblock{i}a, s \leq \bone\phi_{i, j}}x_{i, j, s} \leq 10/9$.  The second inequality used the fact that the total height of rectangles covering $(2^a , 2^{a+1}]$ on machine $i$ is at most $1$. 
\end{proof}

For every basic block $(2^a, 2^{a+1}]$, we partition the set of edges assigned to $(2^a, 2^{a+1}]$ into at most 10 sets, each containing edges with total weight at most $1/8$. This is possible since the total weight of all edges assigned to $(2^a, 2^{a+1}]$ is at most $10/9$ by Lemma~\ref{lemma:RwC-weight-to-a-block}, and every bad job $j$ on $i$ has $y_{i, j} < \csix$: we can keep adding edges to a set until the total weight is at least $1/9$ and then we start constructing the next set; the number of sets we constructed is at most $(10/9)/(1/9) = 10$ and each set has a total weight of at most $1/9 + \csix  \leq 1/8$.  Now, we can randomly drop at most 2 sets so that we have at most $8$ sets remaining. Then we create a group for $i$ containing the edges in the remaining sets.   Thus, the total $y$-value of the edges in this group is at most $1$.

So, we have defined the grouping for $i$; recall that for each basic block $(2^a, 2^{a+1}] \subseteq (\bone\phi_{i, j}, p_{i,j}]$, we may create a group.  For a bad job $j$ on $i$, $(i, j)$ may not be assigned to any group. This may happen if $(i, j)$ is not assigned to any basic block, or if $(i, j)$ is assigned to a basic block, but we dropped the set containing $(i, j)$ when constructing the group for the basic block. 

\paragraph{Obtaining the Final Schedule} With the groupings for the machines, we can now apply Theorem~\ref{thm:BSS} to obtain a set $E^*$ of edges that satisfies the properties of the theorem.  If $(i, j) \in E^*$, then we assign $j$ to $i$; $j$ is assigned to exactly one machine since $E^*$ contains exactly one edge incident to $j$. For all the jobs $j$ assigned to $i$, we schedule them according to the increasing order of $\tau_{i, j} + \theta_{i, j}$; with probability 1, no two jobs $j$ have the same $\tau_{i, j} + \theta_{i, j}$ value.   Let $\tilde C_j$ be the completion time of $j$ in this final schedule.

\subsection{Analysis of Algorithm}


	\paragraph{Notations and Simple Observations}  From now on, we use $j \samegroup{i} j'$ to denote the event that $(i,j)$ and $(i, j')$ are assigned to the same group for $i$, in the grouping scheme of Theorem~\ref{thm:BSS}.  We use $j\assignmachine i $ to denote the event that $j$ is assigned to the machine $i$. Recall that $j \assignblock{i} a$ indicates the event that $(i, j)$ is assigned to the basic block $(2^a, 2^{a+1}]$. From the way we construct the groups, the following observation is immediate: 
	\begin{observation}
		\label{obs:RwC-prob-same-group}
		Let $(2^a, 2^{a+1}]$ be a basic block, $j \neq j'$ be bad jobs on $i$. We have 
		\begin{align*}
			\Pr\left[j \samegroup{i} j'\big|j\assignblock{i} a, j'\assignblock{i} a\right] \geq 1 - 2/10 = 0.8.
		\end{align*}
	\end{observation}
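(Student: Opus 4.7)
I would analyze the randomized group construction for block $(2^a, 2^{a+1}]$ directly. Conditional on $j \assignblock{i} a$ and $j' \assignblock{i} a$, both edges $(i, j)$ and $(i, j')$ are among the edges assigned to this block, so each lies in one of the at most $10$ partition sets built for the block. The group $E^\ell_i$ for the block is the union of the (at most $8$) partition sets that survive the random trimming, hence $j \samegroup{i} j'$ holds iff neither the partition set containing $(i,j)$ nor the one containing $(i, j')$ is among the dropped sets.

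The key quantitative step is that, because the trimming discards at most $2$ out of at most $10$ partition sets, any particular set is dropped with marginal probability at most $2/10$. If $(i, j)$ and $(i, j')$ lie in the same partition set (the typical regime, since both of their $\tau$-values fall in the common interval $(2^a, 2^{a+1}]$, and a weight-greedy partition tends to keep nearby edges together), then the event $j \samegroup{i} j'$ collapses to the survival of this single shared set, which has probability at least $1 - 2/10 = 0.8$, yielding the bound directly.

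The main obstacle is the case when $(i, j)$ and $(i, j')$ fall into distinct partition sets: a naive union bound on their two individual drop events gives only $1 - 4/10 = 0.6$, short of the claimed $0.8$. I would handle this by coupling the trimming so that the randomness is organized around a single atomic choice (for instance, sampling which at-most-$2$ sets to drop as one correlated event), so that the failure event ``either of two specific sets is dropped'' still has marginal probability at most $2/10$ rather than $4/10$. With this coupling, combining the two cases yields $\Pr\bigl[j \samegroup{i} j' \mid j, j' \assignblock{i} a\bigr] \geq 1 - 2/10 = 0.8$, as claimed.
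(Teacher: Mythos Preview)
Your analysis of the same-set case is correct, and you correctly isolate the different-set case as the real issue. The gap is that the coupling you propose cannot exist.

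Suppose the block yields exactly $10$ partition sets, which the greedy construction permits. Any admissible trimming must then drop at least $2$ sets so that the surviving total $y$-weight is at most $1$. Write $X_s$ for the indicator that set $s$ is dropped. If each $\Pr[X_s]\le 2/10$ while $\sum_s X_s\ge 2$ almost surely, then $\sum_s\Pr[X_s]=2$ and $\sum_s X_s=2$ almost surely. Summing $\Pr[X_s\cup X_t]$ over the $\binom{10}{2}=45$ unordered pairs gives
\[
\sum_{s<t}\Pr[X_s\cup X_t]\;=\;9\sum_s\Pr[X_s]-\E\!\left[\tbinom{\sum_s X_s}{2}\right]\;=\;18-1\;=\;17,
\]
so the average of $\Pr[X_s\cup X_t]$ over pairs is $17/45>2/10$. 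Hence no trimming rule can make ``either of two specified sets is dropped'' have probability at most $2/10$ for every pair of sets. Since the grouping is fixed before the dependent rounding and the analysis must apply to every pair $(j,j^*)$ that lands in the block, you cannot tailor the coupling to the particular pair under consideration. Your proposed fix therefore fails exactly in the case you identified as problematic.

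The paper itself offers no argument beyond calling the observation ``immediate,'' and in fact the constant $0.8$ appears optimistic for the described procedure when $(i,j)$ and $(i,j')$ land in distinct partition sets; the honest bound from a union bound is $1-4/10=0.6$. This does not threaten the qualitative result: replacing $0.8$ by $0.6$ in the downstream computation still yields $\Pr[j\samegroup{i}j^*]\ge 0.6\cdot 0.21\cdot 0.22\,p_{i,j^*}/p_{i,j}$ and hence a $(1.5-c')$-approximation for a slightly smaller positive $c'$. But the specific value $0.8$ cannot be rescued by the kind of coupling you sketch, and you should not claim it can.
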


	The following observation will be used in the analysis:
	\begin{observation}
		\label{obs:RwC-half}
		Let $j \neq j'$ be bad jobs on $i$. Then
		\begin{align*}
			\Pr\left[\tau_{i, j} + \theta_{i, j} < \tau_{i, j'} + \theta_{i, j'} \big| j \samegroup{i} j'\right] = 1/2.
		\end{align*}
	\end{observation}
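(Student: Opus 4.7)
The plan is to establish this identity via a symmetry (exchangeability) argument on the conditional joint distribution of $(\tau_{i,j}, \tau_{i,j'})$. The core intuition: conditional on $j \samegroup{i} j'$, both random times are forced to lie in a common basic block, inside which they are independent uniform; neither ordering of $\tau+\theta$ should be preferred.

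First, I would unpack the conditioning. The event $j \samegroup{i} j'$ entails that there is a unique basic block $(2^a, 2^{a+1}]$ such that $j \assignblock{i} a$ and $j' \assignblock{i} a$, and both $(i,j)$ and $(i,j')$ survive the random dropping that forms the group. Properties~\ref{property:RwC-job-2-block-sub-interval} and~\ref{property:RwC-job-2-block-stheta-small} from Definition~\ref{def:RwC-assign-job-to-block} imply that $(2^a, 2^{a+1}] \subseteq (s_{i,j}, s_{i,j}+p_{i,j}]$ for every realization of $s_{i,j}$ consistent with the conditioning, and analogously for $s_{i,j'}$.

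Next, I would condition further on the block index $a$, on $s_{i,j}$ and $s_{i,j'}$, on the set partition of block-$a$ edges, and on the dropping outcome. Under this extended conditioning, the values $\theta_{i,j}$ and $\theta_{i,j'}$ are deterministic, and $\tau_{i,j}, \tau_{i,j'}$ are independent uniform on $(2^a, 2^{a+1}]$: unconditionally $\tau_{i,j}$ is uniform on $(s_{i,j}, s_{i,j}+p_{i,j}] \supseteq (2^a, 2^{a+1}]$, so conditioning on $\tau_{i,j} \in (2^a, 2^{a+1}]$ (Property~\ref{property:RwC-job-2-block-tau-inside}) makes it uniform on the block; independence across $j$ and $j'$ comes from the independence of the randomness associated with distinct jobs, and the set/drop structure depends on the $\tau$'s only through the indicator of block membership.

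Finally, I would exhibit a measure-preserving involution on the conditional sample space that swaps the events $\{\tau_{i,j}+\theta_{i,j} < \tau_{i,j'}+\theta_{i,j'}\}$ and $\{\tau_{i,j}+\theta_{i,j} > \tau_{i,j'}+\theta_{i,j'}\}$, so that each has conditional probability $1/2$ (the tie set having measure zero). The main obstacle, and what I expect to be the heart of the proof, is selecting the correct involution despite the asymmetry of the deterministic shifts $\theta_{i,j}$ and $\theta_{i,j'}$; a natural candidate is to combine the swap $(\tau_{i,j}, \tau_{i,j'}) \mapsto (\tau_{i,j'}, \tau_{i,j})$ with a simultaneous exchange of the roles of the two jobs (swapping $s$-values or equivalently pairing each outcome with one in which the indices $j$ and $j'$ play dual roles) in a way that preserves both the iid uniform measure of the $\tau$'s and the conditioning event $j \samegroup{i} j'$ while flipping the strict inequality. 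Marginalizing over the conditioning then gives the desired identity.
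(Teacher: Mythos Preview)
Your conditioning set-up is fine, and you correctly deduce that, once you fix the block index $a$, the values $s_{i,j},s_{i,j'}$, and the partition/drop outcome, the variables $\tau_{i,j}$ and $\tau_{i,j'}$ are independent and uniform on $(2^a,2^{a+1}]$. The gap is in the final step. With $\theta_{i,j}$ and $\theta_{i,j'}$ fixed (and in general unequal), the swap $(\tau_{i,j},\tau_{i,j'})\mapsto(\tau_{i,j'},\tau_{i,j})$ is measure-preserving but sends the event $\{\tau_{i,j}+\theta_{i,j}<\tau_{i,j'}+\theta_{i,j'}\}$ to $\{\tau_{i,j'}+\theta_{i,j}<\tau_{i,j}+\theta_{i,j'}\}$, which is \emph{not} the complementary event. ``Exchanging the roles of $j$ and $j'$'' does not help either: $\theta_{i,j}$ depends on the job-specific parameters $\phi_{i,j},y_{i,j},p_{i,j}$, so swapping $s$-values does not swap the shifts, and swapping the labels $j,j'$ wholesale is just a relabelling that presupposes the exchangeability you are trying to prove. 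Concretely, at this level of conditioning the quantity $\tau_{i,j}-\tau_{i,j'}$ is symmetric about $0$, but you are asking for the probability that it lies below the \emph{nonzero} threshold $\theta_{i,j'}-\theta_{i,j}$; there is no involution on the $\tau$-space alone that flips this inequality.

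The paper sidesteps the asymmetry by analysing the \emph{shifted} variable $\tau_{i,j}+\theta_{i,j}$ rather than $\tau_{i,j}$. Fixing any $s_{i,j}$ with $s_{i,j}+\theta_{i,j}\le 2^a$ (Property~\ref{property:RwC-job-2-block-stheta-small}), the unconditioned $\tau_{i,j}+\theta_{i,j}$ is uniform on $(s_{i,j}+\theta_{i,j},\,s_{i,j}+\theta_{i,j}+p_{i,j}]\supseteq(2^a,2^{a+1}]$; the paper then argues that after the block conditioning, $\tau_{i,j}+\theta_{i,j}$ is uniform on $(2^a,2^{a+1}]$, a statement that no longer depends on $s_{i,j}$ and hence holds after integrating $s_{i,j}$ out. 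With both $\tau_{i,j}+\theta_{i,j}$ and $\tau_{i,j'}+\theta_{i,j'}$ i.i.d.\ uniform on the \emph{same} interval (and the group/drop randomness independent of them given block membership), the probability $1/2$ is immediate. The idea you are missing is to shift first and look at the distribution of $\tau+\theta$ on the block, rather than trying to symmetrize the $\tau$'s after the fact.
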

	\begin{proof}
		Fix a basic block $(2^a, 2^{a+1}]$ such that $(2^a, 2^{a+1}] \in (\bone \phi_{i, j}, p_{i, j}]$ and $(2^a, 2^{a+1}] \in (\bone \phi_{i, j'}, p_{i, j'}]$ (i.e, Property~\ref{property:RwC-job-2-block-sub-interval} for both $j$ and $j'$). 
		
		Fix a $s_{i, j}$ such that $s_{i, j} + \theta_{i, j} \leq 2^a$ (i.e, Property~\ref{property:RwC-job-2-block-stheta-small}). Condition on this $s_{i, j}$, $\tau_{i, j}$ is uniformly distributed in $(s_{i, j}, s_{i, j} + p_{i, j}]$, and thus $\tau_{i, j} + \theta_{i, j}$ is uniformly distributed in $(s_{i, j} + \theta_{i, j}, s_{i, j} + p_{i,j} + \theta_{i, j}] \supseteq (2^a, 2^{a+1}]$. Thus, conditioned on $s_{i, j}$ and $j \assignblock{i} a$,  $\tau_{i, j} + \theta_{i, j}$ is uniformly distributed in $(2^a, 2^{a+1}]$. This holds even if  we only condition on $j \assignblock{i} a$,  since the statement holds for any $s_{i, j}$ satisfying $s_{i, j} + \theta_{i, j} \leq 2^a$.
		
		The same holds for $(i, j')$.  For simplicity, denote by $e$ the event $\tau_{i, j} + \theta_{i, j} < \tau_{i, j'} + \theta_{i, j'}$. Thus, $\Pr\left[e\big|j \assignblock{i} a, j' \assignblock{i} a\right] = 1/2$ (notice that the events $j \assignblock{i} a$ and $j' \assignblock{i} a$ are independent).  Conditioned on $j \assignblock{i} a, j' \assignblock{i} a$,  the event $j \samegroup{i} j'$ is independent of the event $e$. Thus, $\Pr\left[e\big| j \samegroup{i} j', j \assignblock{i} a, j' \assignblock{i} a\right] = 1/2$.  This holds for every $a$; thus, $\Pr\left[e\big| j \samegroup{i} j'\right] = 1/2$.
	\end{proof}
	
	We define $h_{i,j}(\tau) = \sum_{s \in [\tau - p_{i,j}, \tau)} x_{i, j, s}$ to be the total height of rectangles for $j$ on $i$ that cover $\tau$.  It is easy to see that $\frac{h_{i, j}}{y_{i, j}p_{i,j}}$ is the probability density function for $\tau_{i, j}$.  Let  $A_{i, J'}(\tau) = \sum_{j \in J'}\int_{\tau' = 0}^\tau h_{i,j}(\tau') \sfd \tau'$ be the total area of the parts of rectangles $\set{R_{i, j, s}}_{j \in J', s}$ that are before $\tau$; we simply use $A_{i,j}(\tau)$ for $A_{i, \{j\}}(\tau)$. Notice that $A_{i, J}(\tau) \leq \tau$ for every $i$ and $\tau \in [0, T]$. 
	
	It is also convenient to define a set of ``shifted'' rectangles.  For every $i, j, s$, we define $R'_{i, j,s}$ to be the rectangle $R_{i, j, s}$ shifted by $\cone(s + \phi_{i,j}) + \ctwo y_{i, j} p_{i, j}$ units of time to the right. That is, $R'_{i, j, s}$ is the rectangle of height $x_{i, j, s}$ with horizontal span $(\cthree s + \cone\phi_{i, j} + \ctwo y_{i, j}p_{i,j}, \cthree s + \cone\phi_{i, j} +\ctwo y_{i,j}p_{i,j} + p_{i, j}]$.\footnote{We slightly increase $T$ so that even after shifting, all rectangles are inside the interval $(0, T]$.} Notice that $\cone (s + \phi_{i, j}) + \ctwo y_{i, j}p_{i, j}$ is the definition of $\theta_{i,j}$ if $s_{i, j} = s$.  To distinguish these rectangles from the original rectangles, we call the new rectangles $R'$-rectangles and the original ones $R$-rectangles. 
	
	Similarly, we define $h'_{i, j}(\tau)$ to be the total height of $R'$-rectangles for $j$ on $i$ that covers $\tau$. Thus, $\frac{h'_{i,j}}{y_{i, j}p_{i, j}}$ is the PDF for the random variable $\tau_{i, j} + \theta_{i, j}$. Let $A'_{i, j} (\tau) = \int_{\tau' = 0}^\tau h'_{i, j}(\tau')\sfd \tau'$ to be the total area of the parts of rectangles $\set{R'_{i, j, s}}_{j \in J', s}$ that are before $\tau$. Notice that for every $i \in M$ and $\tau \in [0, T]$, $A'_{i, J}(\tau) \leq A_{i, J}(\tau) \leq \tau$ since we only shift rectangles to the right. 
	
	For two functions $f:[0, T] \to \R_{\geq 0}$ and $F:[0, T] \to \R_{\geq 0}$, define
	\begin{align*}
	f \otimes F = \int_{\tau = 0}^T f(\tau) F(\tau) \sfd \tau.
	\end{align*}
	
	\paragraph{Bounding Expected Completion Time Job by Job} To analyze the approximation ratio of the algorithm, we fix a job $j^*$ and a machine $i$ such that $y_{i, j^*} > 0$.  We shall bound $\E\left[\tilde C_{j^*}|j^* \assignmachine i\right]$. It suffices to bound it by $\left(1.5 -\frac{1}{6000}\right)\sum_{s}\frac{x_{i, j^*, s}}{y_{i,j^*}}(s + p_{i, j^*}) = \left(1.5 -\frac{1}{6000}\right) (\phi_{i, j^*} + p_{i, j^*})$.  The following lemma gives a comprehensive upper bound that takes all parameters into account: 
	\begin{lemma} Let $I:[0,T] \to [0,T]$ be the identity function. Then $\E\left[\tilde C_{j^*} \big| j^* \assignmachine i\right]$ is at most
		\label{lemma:RwC-Bound-on-tildeC}
		\begin{align}
			\cfour \phi_{i, j^*} + \left(1.5- \cfive  y_{i,j^*}\right)p_{i,j^*} - \frac{h'_{i, j^*}}{y_{i,j^*}p_{i,j^*}}\otimes(I - A'_{i, J}) - \frac{\zeta}{2} \sum_{j \neq j^*}\Pr\left[ {j \samegroup{i} j^*}\right]y_{i,j}p_{i, j}. \label{equ:RwC-Bound-on-tildeC}
		\end{align}	
	\end{lemma}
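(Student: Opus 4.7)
The plan is to bound $\tilde C_{j^*}$ job-by-job. Since jobs on machine $i$ are processed in increasing order of $\tau_{i,j}+\theta_{i,j}$, I would start from $\tilde C_{j^*} \leq p_{i,j^*} + \sum_{j\neq j^*} p_{i,j}\mathbf 1[j\assignmachine i,\ \tau_{i,j}+\theta_{i,j} < \tau_{i,j^*}+\theta_{i,j^*}]$. Multiplying by $\mathbf 1[j^*\assignmachine i]$, taking expectation, and dividing by $y_{i,j^*}$ reduces the task to bounding $\tfrac{1}{y_{i,j^*}}\sum_{j\neq j^*} p_{i,j}\Pr[j\assignmachine i,\ j^*\assignmachine i,\ \tau_{i,j}+\theta_{i,j} < \tau_{i,j^*}+\theta_{i,j^*}]$.

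Next I would bound each joint probability using Theorem~\ref{thm:BSS}. The sampling of $(s_{i,j'},\tau_{i,j'})_{j'}$ together with the random set-dropping determines the grouping; conditioning on these choices, part (c) of the theorem gives $\Pr[j\assignmachine i, j^*\assignmachine i \mid \text{grouping}] \leq y_{i,j}y_{i,j^*}(1 - \zeta\mathbf 1[j\samegroup{i}j^*])$. Hence the joint probability breaks into
\[
y_{i,j}y_{i,j^*}\Pr[\tau_{i,j}+\theta_{i,j} < \tau_{i,j^*}+\theta_{i,j^*}] \;-\; \zeta y_{i,j}y_{i,j^*}\Pr[j\samegroup{i}j^*,\ \tau_{i,j}+\theta_{i,j} < \tau_{i,j^*}+\theta_{i,j^*}].
\]
Using that $\tau_{i,j}+\theta_{i,j}$ has density $h'_{i,j}/(y_{i,j}p_{i,j})$, the first factor equals $\frac{1}{y_{i,j}y_{i,j^*}p_{i,j}p_{i,j^*}}\int_0^T h'_{i,j^*}(\tau)A'_{i,j}(\tau)\sfd\tau$.

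Summing the first piece over $j\neq j^*$ with weights $p_{i,j}$ and using $\sum_{j\neq j^*}A'_{i,j} = A'_{i,J} - A'_{i,j^*}$, I would obtain $\tfrac{1}{y_{i,j^*}p_{i,j^*}}\,h'_{i,j^*}\otimes(A'_{i,J}-A'_{i,j^*})$. The decomposition $A'_{i,J}-A'_{i,j^*} = (I - A'_{i,j^*}) - (I - A'_{i,J})$ isolates the target slack term $-\tfrac{h'_{i,j^*}}{y_{i,j^*}p_{i,j^*}}\otimes(I-A'_{i,J})$, leaving $\tfrac{h'_{i,j^*}}{y_{i,j^*}p_{i,j^*}}\otimes(I-A'_{i,j^*})$. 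Because $A'_{i,j^*}$ is the antiderivative of $h'_{i,j^*}$ with terminal value $y_{i,j^*}p_{i,j^*}$, the identity $h'_{i,j^*}\otimes A'_{i,j^*} = \tfrac12(y_{i,j^*}p_{i,j^*})^2$ reduces this leftover to $\E[\tau_{i,j^*}+\theta_{i,j^*}] - \tfrac12 y_{i,j^*}p_{i,j^*}$. A direct calculation, using $\E[s_{i,j^*}]=\phi_{i,j^*}$ and the conditional uniform distribution of $\tau_{i,j^*}$ on $(s_{i,j^*},s_{i,j^*}+p_{i,j^*}]$, gives $\E[\tau_{i,j^*}+\theta_{i,j^*}] = (1+2\cone)\phi_{i,j^*} + (\tfrac12 + \ctwo y_{i,j^*})p_{i,j^*}$. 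Combining with the $+p_{i,j^*}$ and $-\tfrac12 y_{i,j^*}p_{i,j^*}$ terms and substituting $\cone = 0.2$, $\ctwo = 0.4$ produces exactly $\cfour\phi_{i,j^*}+(1.5-\cfive y_{i,j^*})p_{i,j^*}$.

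For the second (negative-correlation) piece I would invoke Observation~\ref{obs:RwC-half}: whenever $j\samegroup{i}j^*$ (which forces both $j,j^*$ to be bad on $i$), the conditional probability $\Pr[\tau_{i,j}+\theta_{i,j} < \tau_{i,j^*}+\theta_{i,j^*}\mid j\samegroup{i}j^*]$ is exactly $1/2$; good jobs on $i$ lie in no group, so they contribute nothing. Summing and dividing by $y_{i,j^*}$ yields the final $-\tfrac{\zeta}{2}\sum_{j\neq j^*}\Pr[j\samegroup{i}j^*]y_{i,j}p_{i,j}$ term. The hard part I anticipate is the careful bookkeeping of conditioning: the grouping event $j\samegroup{i}j^*$ and the ordering event $\tau_{i,j}+\theta_{i,j}<\tau_{i,j^*}+\theta_{i,j^*}$ are both measurable with respect to the $(s,\tau)$ sampling plus the independent set-dropping, which is precisely the randomness that comes before the Theorem~\ref{thm:BSS} rounding — and it is this separation that makes the conditional-probability bound decompose cleanly as above.
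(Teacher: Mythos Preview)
Your proposal is correct and follows essentially the same route as the paper: the same starting identity $\tilde C_{j^*}=\sum_{j\assignmachine i,\ e_j}p_{i,j}$, the same conditional use of Theorem~\ref{thm:BSS}(c) to split off the $\zeta$-savings, the same reduction via Observation~\ref{obs:RwC-half} to $\tfrac12\Pr[j\samegroup{i}j^*]$, and the same algebraic decomposition of $A'_{i,J\setminus j^*}$ together with $h'_{i,j^*}\otimes A'_{i,j^*}=\tfrac12(y_{i,j^*}p_{i,j^*})^2$. The only differences are cosmetic---you write $A'_{i,J}-A'_{i,j^*}=(I-A'_{i,j^*})-(I-A'_{i,J})$ while the paper writes $A'_{i,J\setminus j^*}=I-(I-A'_{i,J})-A'_{i,j^*}$, and you phrase $\frac{h'_{i,j^*}}{y_{i,j^*}p_{i,j^*}}\otimes I$ as $\E[\tau_{i,j^*}+\theta_{i,j^*}]$ rather than computing the integral directly---but the logic and the conditioning argument are identical.
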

	
	If we throw away all the negative terms, then we get an $\cfour  \phi_{i, j^*} + 1.5p_{i, j^*}$ upper bound on $\E\left[\tilde C_{j^*} \big| j^* \assignmachine i\right]$, which is at most $1.5(\phi_{i, j^*} + p_{i, j^*})$. If either $\phi_{i, j^*}$ or $y_{i, j^*}$ is large, then we can prove a better than 1.5 factor; this coincides with our definition of good jobs. For a bad job $j^*$ on $i$, we shall show that the absolute value of the third and fourth term in \eqref{equ:RwC-Bound-on-tildeC} is large. The third term is large if many rectangles are shifted by a large amount. The fourth term is where we use the strong negative correlation of Theorem~\ref{thm:BSS} from \cite{BSS16} to reduce the final approximation ratio.  
	
	\begin{proof}[Proof of Lemma~\ref{lemma:RwC-Bound-on-tildeC}]
		For notational convenience, let $e_j$ denote the event that $\tau_{i, j} + \theta_{i, j} \leq \tau_{i, j^*} + \theta_{i, j^*}$, for every $j \in J$.
	
		\begin{align}
			 \E\left[\tilde C_{j^*}\big|j^* \assignmachine i\right] &= \frac{1}{\Pr[j^* \assignmachine i]}\E\left[\mathbf{1}_{j^* \assignmachine i}\times \tilde C_{j^*}\right] \quad=\quad \frac{1}{y_{i, j^*}}\sum_{j} \Pr\left[j^* \assignmachine i, j \assignmachine i, e_j  \right]\times p_{i, j} \nonumber\\
			&= \frac{1}{y_{i, j^*}}\sum_{j \neq j^*}\Big(\Pr\left[e_j, {j \samegroup{i} j^*}\right]\times 
			\Pr\left[ j^* \assignmachine i, j \assignmachine i\big | e_j, {j \samegroup{i} j^*}\right] \nonumber \\
			&\hspace*{0.15\textwidth}+ \quad \Pr\left[e_j, {j \notsamegroup{i} j^*}\right]\times 
			\Pr\left[ j^* \assignmachine i, j \assignmachine i\big | e_j, j\notsamegroup{i} j^* \right]\Big) \times p_{i, j} + p_{i, j^*} \nonumber\\
			&\leq \frac{1}{y_{i, j^*}}\sum_{j \neq j^*} \Big(\Pr\left[e_j, {j \samegroup{i} j^*}\right](1-\zeta)y_{i, j^*}y_{i,j} + \Pr\left[e_j, {j \notsamegroup{i} j^*}\right]y_{i, j^*}y_{i,j}\Big)p_{i, j} + p_{i, j^*}\nonumber\\
			&= \sum_{j \neq j^*}\Pr\left[e_j\right]y_{i,j}p_{i,j} - \zeta \sum_{j \neq j^*}\Pr\left[e_j, {j \samegroup{i} j^*}\right]y_{i,j}p_{i, j} + p_{i, j^*}\nonumber\\
			&= \sum_{j \neq j^*}\Pr\left[e_j\right]y_{i,j}p_{i,j} - \frac{\zeta}{2} \sum_{j \neq j^*}\Pr\left[ {j \samegroup{i} j^*}\right]y_{i,j}p_{i, j} + p_{i, j^*}. \label{equ:RwC-break-bound}
		\end{align}
		The second equality is due to the fact that $\tilde C_{j^*} = \sum_{j \assignmachine i: e_j} p_{i, j}$, conditioned on $j^* \assignmachine i$. The only inequality is due to the third property of $E^*$ in Theorem~\ref{thm:BSS}: conditioned on $j \samegroup{i} j^*$ ($j \notsamegroup{i} j^*$ resp.), the probability that $j^* \assignmachine i, j \assignmachine i$ is at most $(1-\zeta)y_{i, j^*}y_{i, j}$  ($y_{i, j^*}y_{i, j}$ resp.), independent of the $\tau$ and $\theta$ values (thus, independent of $e_j$). The last equality is due to Observation~\ref{obs:RwC-half}.
		
		We focus on the first term of \eqref{equ:RwC-break-bound}:
		\begin{align}
			\quad \sum_{j \neq j^*}\Pr[e_j]y_{i, j}p_{i,j} &= \sum_{j \neq j^*} \Pr\left[\tau_{i, j} + \theta_{i,j} \leq \tau_{i, j^*} + \theta_{i, j^*}\right] y_{i, j} p_{i, j}
			= \frac{h'_{i, j^*}}{y_{i, j^*}p_{i, j^*}} \otimes A'_{i, J \setminus j^*}  \nonumber \\
			&= \frac{h'_{i,j^*}}{y_{i,j^*}p_{i,j^*}} \otimes I - \frac{h'_{i,j^*}}{y_{i,j^*}p_{i,j^*}} \otimes (I - A'_{i, J})  - \frac{h'_{i, j^*}}{y_{i,j^*}p_{i, j^*}} \otimes A'_{i, j^*}.  \label{equ:RwC-break-again}
		\end{align}
		To see the second equality, we notice that $\tau_{i, j^*} + \theta_{i, j^*}$ has PDF $\frac{h'_{i, j^*}}{y_{i, j^*}p_{i, j^*}}$, $\tau_{i, j} + \theta_{i, j}$ has PDF $\frac{h'_{i, j}}{y_{i, j^*}p_{i, j}}$ and the two random quantities are independent if $j \neq j^*$. Thus, for a fixed $\tau_{i, j^*} + \theta_{i, j^*} = \tau$, the probability that $\tau_{i, j} + \theta_{i,j}\leq\tau$ is exactly $\frac{A'_{i,j}(\tau)}{y_{i,j}p_{i,j}}$; thus contribution of $j$ is exactly $A'_{i, j}(\tau)$. Summing up over all $j \neq j^*$ gives the equality. The third equality is by $A'_{i, J \setminus {j^*}} \equiv I - (I - A'_{i, J}) - A'_{i, j^*}$. 
	
		The first term of \eqref{equ:RwC-break-again} is
		\begin{align}
			&\quad \frac{h'_{i, j^*}}{y_{i,j^*}p_{i, j^*}} \otimes I =  \sum_{s}\frac{x_{i, j^*, s}}{y_{i, j^*}}\int_{\tau = s}^{s+p_{i,j^*}} \frac{(\tau+\cone(s+\phi_{i,j^*}) + \ctwo y_{i,j^*}p_{i, j^*} ) \sfd\tau}{p_{i,j^*}}  \nonumber \\
			&= \sum_{s}\frac{x_{i, j^*, s}}{y_{i, j^*}}(s + 0.5 p_{i,j^*} + \cone(s+\phi_{i,j^*}) + \ctwo y_{i, j^*}p_{i, j^*}) = \cfour \phi_{i, j^*} + 0.5p_{i,j^*} + \ctwo y_{i, j^*}p_{i, j^*}. \nonumber
		\end{align}
			
		Now focus on the third term of \eqref{equ:RwC-break-again} :
		\begin{align}
			\frac{h'_{i, j^*}}{y_{i,j^*}p_{i,j^*}} \otimes A'_{i, j^*} &= y_{i,j^*}p_{i,j^*}\int_{\tau = 0}^T\frac{h'_{i, j^*}(\tau)}{y_{i,j^*}p_{i,j^*}}\int_{\tau' = 0}^T\frac{h'_{i, j^*}(\tau')}{y_{i,j^*}p_{i,j^*}} \mathbf{1}_{\tau' < \tau} \sfd\tau\sfd\tau'
			=\frac{y_{i,j^*}p_{i,j^*}}{2}.
		\end{align} 
		
		The first equality holds since $A'_{i, j^*}$ is the integral of $h'_{i, j^*}$. The second equality holds since the probability that $\tau' < \tau$, where $\tau$ and $\tau'$ are i.i.d random variables and are not equal almost surely, is 1/2. 
		
		Thus, by applying the above two equalities to \eqref{equ:RwC-break-again}, we have
		\begin{align*}
			\sum_{j \neq j^*}\Pr\left[e_j\right]y_{i, j}p_{i,j} = \cfour \phi_{i, j^*} + 0.5p_{i,j^*} - \frac{h'_{i, j^*}}{y_{i,j^*}p_{i,j^*}}\otimes(I - A'_{i, J}) - \cfive  y_{i, j^*}p_{i, j^*}.
		\end{align*}
		
		Applying the above equality to \eqref{equ:RwC-break-bound}, we that $\E\left[\tilde C_{j^*}|j^* \assignmachine i\right]$ is at most
		\begin{align*}
			\cfour \phi_{i, j^*} + \left(1.5-\cfive  y_{i,j^*}\right)p_{i,j^*} - \frac{h'_{i, j^*}}{y_{i,j^*}p_{i,j^*}}\otimes(I - A'_{i, J}) - \frac{\zeta}{2} \sum_{j \neq j^*}\Pr\left[ {j \samegroup{i} j^*}\right]y_{i,j}p_{i, j}.
		\end{align*}
		This is exactly \eqref{equ:RwC-Bound-on-tildeC}.
	\end{proof}
	
	With the lemma, we shall analyze good jobs and bad jobs separately. For good jobs, the bound follows directly from the definition:
	
	\begin{lemma}
		\label{lemma:RwC-bound-good}
		If $j^*$ is good on $i$, then $\displaystyle \E\left[\tilde C_{j^*}\big| j^* \assignmachine i\right]\leq 1.4991(\phi_{i, j^*} + p_{i, j^*})$.
	\end{lemma}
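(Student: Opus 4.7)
The plan is to invoke Lemma~\ref{lemma:RwC-Bound-on-tildeC} and simply discard the two negative terms, then show the remaining linear bound is dominated by $1.4991(\phi_{i,j^*} + p_{i,j^*})$ using the goodness condition. Concretely, I would first observe that both subtracted quantities in \eqref{equ:RwC-Bound-on-tildeC} are non-negative: the integrand of $\frac{h'_{i,j^*}}{y_{i,j^*}p_{i,j^*}}\otimes(I - A'_{i, J})$ is non-negative since $A'_{i, J}(\tau) \leq A_{i, J}(\tau) \leq \tau$ for all $\tau$ and $h'_{i,j^*} \geq 0$, and the second sum is manifestly non-negative. Dropping both terms yields
\begin{align*}
\E\left[\tilde C_{j^*}\big| j^* \assignmachine i\right] \leq \cfour \phi_{i, j^*} + \left(1.5- \cfive  y_{i,j^*}\right)p_{i,j^*}.
\end{align*}

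Next, it remains to establish the inequality
\begin{align*}
\cfour \phi_{i, j^*} + \left(1.5 - \cfive y_{i, j^*}\right)p_{i,j^*} \leq 1.4991\bigl(\phi_{i, j^*} + p_{i, j^*}\bigr),
\end{align*}
which, after rearrangement using $\cfour = 1.4$ and $\cfive = 0.1$, is equivalent to
\begin{align*}
0.0991\,\phi_{i,j^*} + 0.1\, y_{i,j^*}\,p_{i,j^*} \geq 0.0009\, p_{i,j^*}.
\end{align*}
This is where the goodness hypothesis $\phi_{i,j^*} + y_{i,j^*}p_{i,j^*} \geq \csix p_{i,j^*} = 0.01\, p_{i,j^*}$ enters: multiplying both sides by $0.0991$ and using $0.1 \geq 0.0991$ on the $y_{i,j^*}p_{i,j^*}$ coefficient gives
\begin{align*}
0.0991\,\phi_{i,j^*} + 0.1\, y_{i,j^*}p_{i,j^*} \geq 0.0991(\phi_{i,j^*} + y_{i,j^*}p_{i,j^*}) \geq 0.0991 \cdot 0.01 \, p_{i,j^*} > 0.0009\, p_{i,j^*},
\end{align*}
closing the argument.

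There is no real obstacle here: the content is a straightforward two-step verification, first dropping negative terms from the master bound in Lemma~\ref{lemma:RwC-Bound-on-tildeC}, then using the goodness threshold $\csix$ as a linear slack against the mismatch between $\cfour = 1.4$ and the target $1.4991$. The only thing worth double-checking is that the chosen numerical constants $\cone,\ctwo,\cfour,\cfive,\csix$ are consistent so that $(1.4991 - \cfour)\,\phi_{i,j^*} + \cfive\, y_{i,j^*}p_{i,j^*} \geq (1.5 - 1.4991)\,p_{i,j^*}$ is implied by the goodness bound; the calculation above confirms it with a little room to spare, so the approximation constant $1.4991$ is not tight and can be tuned later if needed to match the other case.
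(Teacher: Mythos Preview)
Your proposal is correct and follows essentially the same approach as the paper: drop the two non-negative subtracted terms from Lemma~\ref{lemma:RwC-Bound-on-tildeC} and then use the goodness inequality $\phi_{i,j^*} + y_{i,j^*}p_{i,j^*} \geq \csix\, p_{i,j^*}$ to absorb the small numerical slack. The only cosmetic difference is in how the arithmetic is arranged; the paper splits off $0.01\phi_{i,j^*}$ and applies goodness to the remaining $0.09(\phi_{i,j^*}+y_{i,j^*}p_{i,j^*})$, while you multiply the goodness bound by $0.0991$ directly---both yield the same conclusion.
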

	\begin{proof}
		We have $\displaystyle \E\left[\tilde C_{j^*}\big|j^* \assignmachine i\right] \leq \cfour \phi_{i, j^*} + \left(1.5-\cfive y_{i,j^*}\right)p_{i,j^*}$, by throwing the two negative terms in \eqref{equ:RwC-Bound-on-tildeC}.  Since $j^*$ is good on $i$,  we have $\phi_{i, j^*} + y_{i, j^*}p_{i, j^*} \geq \csix  p_{i, j^*}$.
		\begin{flalign*}
			&& \quad \E\left[\tilde C_{j^*}\big|j^* \assignmachine i\right]  &\leq 1.5 (\phi_{i, j^*} + p_{i, j^*}) - (\cfive \phi_{i, j} + \cfive  y_{i, j^*}p_{i, j^*}) \\ 
			&& &\leq 1.5(\phi_{i, j^*} + p_{i, j^*}) - 0.01\phi_{i, j^*} - 0.09(\phi_{i, j^*}+y_{i, j^*}p_{i,j^*})\\
			&& &\leq 1.5(\phi_{i, j^*} + p_{i, j^*}) - 0.01\phi_{i, j^*} - 0.09 \times \csix p_{i, j^*} \leq 1.4991(\phi_{i, j^*} + p_{i, j^*}). && \qedhere
		\end{flalign*}
	\end{proof}
	
	Thus, it remains to consider the case where $j^*$ is bad on $i$. The rest of the section is devoted to the proof of the following lemma:
	\begin{lemma}
		\label{lemma:RwC-bound-for-bad}
		If $j^*$ is bad on $i$, then $\E\left[\tilde C_{j^*} \big| j^* \assignmachine i \right] \leq \left(1.5-\frac{1}{6000}\right) (\phi_{i,j^*} + p_{i, j^*})$.
	\end{lemma}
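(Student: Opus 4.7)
\textbf{Proof plan for Lemma~\ref{lemma:RwC-bound-for-bad}.}

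The starting point is Lemma~\ref{lemma:RwC-Bound-on-tildeC}. Writing $P = \phi_{i,j^*} + p_{i,j^*}$ and comparing the bound there to the target $(1.5 - 1/6000)P$, it suffices to prove
\[
 \cfive\,\phi_{i,j^*} + \cfive\, y_{i,j^*}p_{i,j^*} + T_3 + T_4 \;\geq\; P/6000,
\]
where $T_3 = \frac{h'_{i,j^*}}{y_{i,j^*}p_{i,j^*}}\otimes(I - A'_{i,J})$ and $T_4 = \frac{\zeta}{2}\sum_{j\neq j^*}\Pr[j \samegroup{i} j^*]\,y_{i,j}p_{i,j}$. Because $j^*$ is bad on $i$, $\phi_{i,j^*} + y_{i,j^*}p_{i,j^*} < \csix\, p_{i,j^*}$, so $P$ and $p_{i,j^*}$ differ by at most a factor $1.01$, and the target gap is essentially $p_{i,j^*}/6000$.

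The first step is to dispose of the easy regime: if $\phi_{i,j^*} + y_{i,j^*}p_{i,j^*} \geq p_{i,j^*}/600$, then the linear savings $\cfive\phi_{i,j^*} + \cfive y_{i,j^*}p_{i,j^*} \geq p_{i,j^*}/6000$ already suffice, regardless of $T_3$ and $T_4$. From now on, assume both $\phi_{i,j^*}/p_{i,j^*}$ and $y_{i,j^*}$ are tiny; our task is to extract the missing $\Theta(p_{i,j^*})/6000$ from $T_3$ and $T_4$. The rough picture is that the PDF $h'_{i,j^*}/(y_{i,j^*}p_{i,j^*})$ of $\tau_{i,j^*}+\theta_{i,j^*}$ is supported essentially on $(\bone\phi_{i,j^*}, p_{i,j^*}]$, i.e.\ it spreads across many basic blocks sitting well inside this range, and on each such block the point $\tau_{i,j^*}+\theta_{i,j^*}$ has a constant chance of landing so that $j^* \assignblock{i} a$.

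The heart of the argument is a dichotomy on each relevant basic block $(2^a, 2^{a+1}]\subseteq (\bone\phi_{i,j^*}, p_{i,j^*}]$. Condition on $j^* \assignblock{i} a$ and consider the set $\mathcal{B}_a$ of other bad jobs $j$ on $i$ with $(2^a,2^{a+1}] \subseteq (\bone\phi_{i,j}, p_{i,j}]$ whose shifted rectangles cover $(2^a,2^{a+1}]$. By Definition~\ref{def:RwC-assign-job-to-block} each $j \in \mathcal{B}_a$ has a constant probability (roughly $1/2$ from Property~\ref{property:RwC-job-2-block-tau-inside}, times a $1-o(1)$ factor from Property~\ref{property:RwC-job-2-block-stheta-small}) of being assigned to block $a$, and conditional on that Observation~\ref{obs:RwC-prob-same-group} gives probability $\geq 0.8$ that $j \samegroup{i} j^*$. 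Either (i) the total $y_{i,j}p_{i,j}$-mass of jobs $j \in \mathcal{B}_a$ is at least a small absolute constant $c_0$ times $2^a$, in which case $T_4$ contributes $\Omega(\zeta\,c_0\,2^a)$; or (ii) that mass is below $c_0 \cdot 2^a$, so Constraint~\eqref{LPC:RwC-congestion} applied to the $R'$-rectangles covering the block forces $(I - A'_{i,J})$ to integrate to at least $(1-c_0-o(1))\,2^a$ over $(2^a,2^{a+1}]$, which pumps $T_3$ by the mass of $h'_{i,j^*}/(y_{i,j^*}p_{i,j^*})$ sitting on this block. Summing the resulting block-by-block contributions, weighted by the probability that $j^*$ lands in each block, yields $T_3 + T_4 = \Omega\!\bigl(\zeta\cdot p_{i,j^*}\bigr)$, which with $\zeta = 1/108$ comfortably exceeds $p_{i,j^*}/6000$.

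The main obstacle is calibrating the dichotomy cleanly enough so that the loss in the third term from \emph{shifted} rectangles of good jobs on $i$ (which are not in $\mathcal{B}_a$ but can also fill the block after shifting) is controlled, and keeping the probabilities in Property~\ref{property:RwC-job-2-block-stheta-small} close to $1$ uniformly over the relevant blocks; this is why we need $\bone=10$ in the definition and the very small $\cone, \ctwo, \csix$ — so that for any bad $j$, the event $s_{i,j}+\theta_{i,j} \leq 2^a$ already has probability $\geq 1-O(1/\bone)$ on the blocks of interest. Once these probabilities are pinned down, the final inequality is a short arithmetic check combining the $\Omega(\zeta)$ savings from the dichotomy with the linear $\cfive$ savings from Step~1.
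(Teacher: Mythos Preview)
Your reduction to \eqref{equ:RwC-Bound-on-tildeC} and the high-level dichotomy (either there is enough mass to trigger the negative-correlation saving $T_4$, or the shifting creates a gap that $T_3$ picks up) match the paper's intuition. However, the block-by-block argument you sketch has a genuine gap in case~(ii), and the paper avoids it by a different decomposition.

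The problem is your claim that if the $\mathcal{B}_a$-mass is small, then ``Constraint~\eqref{LPC:RwC-congestion} applied to the $R'$-rectangles covering the block forces $(I-A'_{i,J})$ to integrate to at least $(1-c_0-o(1))2^a$ over $(2^a,2^{a+1}]$''. The quantity $\tau-A'_{i,J}(\tau)$ is \emph{global}: it measures total area pushed past $\tau$ by all shifts, plus the original slack $\tau-A_{i,J}(\tau)$. It is not determined by what covers $(2^a,2^{a+1}]$. Concretely, the rectangles covering the block can come from a bad job $j$ with $p_{i,j}$ just below $2^{a+1}$, tiny $\phi_{i,j}$, tiny $y_{i,j}$, and tiny $s$; such $j$ is \emph{not} in $\mathcal{B}_a$ (since $p_{i,j}<2^{a+1}$), contributes nothing to your case~(i) threshold, yet its shift $\cone(s+\phi_{i,j})+\ctwo y_{i,j}p_{i,j}$ is negligible. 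Your ``main obstacle'' paragraph names good jobs but not this case, and in any event you do not show how the obstacle is handled. Also, your case~(i) threshold is on $\sum y_{i,j}p_{i,j}$, but since $\Pr[j\assignblock{i}a]\asymp 2^a/p_{i,j}$, the $T_4$ contribution from block $a$ scales with $\sum y_{i,j}$, not $\sum y_{i,j}p_{i,j}$; the two are not interchangeable.

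The paper proceeds differently. It fixes a \emph{single} block (the largest $a$ with $2^{a+1}\le 0.9p_{i,j^*}$) and, crucially, decomposes the height-$\le 1$ stack of $R$-rectangles on $i$ into a convex combination of \emph{configurations} $U$ --- each a set of pairwise disjoint intervals --- with weights $z_U$ summing to $1$. Writing $A'_{i,J}=\sum_U z_U\ell_U$, both $T_3$ and $T_4$ split linearly over configurations, and one shows $D_U\ge z_Up_{i,j^*}/6000$ for every $U$. The point of the decomposition is that in a configuration at most \emph{one} interval can cover $(2^a,0.92p_{i,j^*}]$; the case analysis is then: either no interval covers it (Observation~\ref{obs:RwC-bound-ell-2} gives the $T_3$ gap), or one interval $(s,s+p_{i,j}]$ covers it with non-negligible shift (Observation~\ref{obs:RwC-bound-ell-3} gives the $T_3$ gap), or that single covering interval has tiny shift --- and then this particular $j$ is forced to be bad and to land in $\mathcal{B}_a$ with constant probability, yielding the $T_4$ saving. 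The configuration layer is exactly what localizes the otherwise global quantity $I-A'_{i,J}$ and removes the ambiguity your case~(ii) runs into.
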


	It suffices to give a lower bound on the sum of the absolute values of negative terms in \eqref{equ:RwC-Bound-on-tildeC}: 
	\begin{align}
		\cfive  y_{i, j^*}p_{i, j^*} + \frac{h'_{i, j^*}}{y_{i,j^*}p_{i,j^*}}\otimes(I- A'_{i, J}) + \frac{\zeta}{2} \sum_{j \neq j^*}\Pr\left[ {j \samegroup{i} j^*}\right]y_{i,j}p_{i, j} \geq \frac{p_{i, j^*}}{6000}. \label{inequ:RwC-lower-bound-neg}
	\end{align}
	If the above inequality holds, then we have $\E\left[\tilde C_{j^*} \big| j^* \assignmachine i \right] \leq \cfour  \phi_{i, j^*} + \left(1.5-\frac{1}{6000}\right)p_{i, j^*} \leq \left(1.5-\frac{1}{6000}\right) \allowbreak (\phi_{i, j^*} + p_{i, j^*})$, implying Lemma~\ref{lemma:RwC-bound-for-bad}.

	To prove \eqref{inequ:RwC-lower-bound-neg}, we construct a set of \emph{configurations} with total weight 1, and lower bound the left-side configuration by configuration. We define a configuration $U$ to be a set of pairs in $J \times \set{0, 1, 2, \cdots, T-1}$ such that for every two distinct pairs $(j, s), (j', s') \in U$, the two intervals $(s, s+p_{i,j}]$ and $(s', s'+p_{i, j'}]$ are disjoint.\footnote{Notice that this configuration does not necessarily correspond to a valid scheduling on $i$, since it may contain two pairs with the same $j$.} For the sake of the description, we also view $(j, s)$ as the interval $(s, s+p_{i, j}]$ associated with the job $j$.
	
	Recall that the total height of all $R$-rectangles on $i$ covering any time point is at most $1$.   It is a folklore result that we can find a set of configurations, each configuration $U$ with a $z_U > 0$, such that $\sum_{U}z_U = 1$ and $\sum_{U \ni (s, j)}z_U = x_{i, j, s}$ for every $j$ and $s$.
	
	
	With the decomposition of the $R$-rectangles on $i$ into a convex combination of configurations, we can now analyze the contribution of each configuration to the left of \eqref{inequ:RwC-lower-bound-neg}.  For any configuration $U$, we define a function $\ell_U: [0, T] \to \R_{\geq 0}$ as follows:
	
	\begin{align*}
		\tilde \ell_U(\tau) = \sum_{(j, s) \in U: \cthree s + \cone\phi_{i,j} + \ctwo y_{i, j} p_{i, j} \leq \tau} \min\set{\tau - (\cthree s + \cone\phi_{i, j} + \ctwo y_{i, j}p_{i,j}), p_{i, j}}.
	\end{align*}
	
	The definition comes from the following process. Focus on the intervals $\set{(s, s+p_{i,j}): (j, s) \in U}$. We then shift each interval $(s, s + p_{i, j}]$ to the right by $\cone s + \cone\phi_{i, j} + \ctwo y_{i,j}p_{i,j}$; notice that this is exactly the definition of $\theta_{i, j}$ when $s_{i, j} = s$.  Then $\ell_U(\tau)$ is exactly the total length of the sub-intervals of the shifted intervals before time point $\tau$. Recalling that $A'_{i, J}(\tau)$ is the total area of the parts of the $R'$-rectangles on $i$ before time point $\tau$, and each $R'_{i, j, s}$ is obtained by shifting $R_{i, j, s}$ by $\cone s + \cone\phi_{i, j} + \ctwo y_{i, j}p_{i,j}$ to the right, the following holds: 
	\begin{align}
		A'_{i, J} \equiv \sum_{U}z_U \ell_U. \label{equ:RwC-break-Ap}
	\end{align}
	
	Let $c_U(j) = \big|\{s: (j, s) \in U\}\big|$ be the number of pairs in $U$ for the job $j$.  Now, we can define the contribution of $U$ to the bound to be
	\begin{align*}
		D_U := z_U\left(\cfive p_{i, j^*}c_U(j^*) + \frac{h'_{i, j^*}}{y_{i, j^*}p_{i, j^*}}\otimes \big(I  - \ell_U\big) + \frac{\zeta}{2}\sum_{j \neq j^*}\Pr[j \samegroup{i} j^*]p_{i, j}c_U(j) \right).
	\end{align*}
	
	\begin{claim}
		\label{claim:RwC-break-contribution}
		The left side of \eqref{inequ:RwC-lower-bound-neg} is exactly $\sum_{U}D_U$.
	\end{claim}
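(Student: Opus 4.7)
\textbf{Proof plan for Claim~\ref{claim:RwC-break-contribution}.} The claim is essentially a bookkeeping identity that follows by linearity from the decomposition of the $R$-rectangles on $i$ into configurations and from equation~\eqref{equ:RwC-break-Ap}. The plan is to match each of the three summands of the left-hand side of~\eqref{inequ:RwC-lower-bound-neg} with the corresponding summand inside $D_U$, summed over $U$.

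First, I will record the key identity about the configuration decomposition: for every job $j$,
\begin{align*}
\sum_U z_U c_U(j) = \sum_U z_U \sum_s \mathbf{1}_{(j,s)\in U} = \sum_s \sum_{U\ni(j,s)} z_U = \sum_s x_{i,j,s} = y_{i,j}.
\end{align*}
Applying this with $j=j^*$ immediately gives $\sum_U z_U \cdot \cfive p_{i,j^*} c_U(j^*) = \cfive y_{i,j^*} p_{i,j^*}$, which matches the first summand. Applying it with $j\neq j^*$ and weighting by $\frac{\zeta}{2}\Pr[j\samegroup{i} j^*]\,p_{i,j}$ gives
\begin{align*}
\sum_U z_U\cdot\frac{\zeta}{2}\sum_{j\neq j^*}\Pr[j\samegroup{i} j^*]\,p_{i,j}\,c_U(j) \;=\; \frac{\zeta}{2}\sum_{j\neq j^*}\Pr[j\samegroup{i} j^*]\,y_{i,j}p_{i,j},
\end{align*}
matching the third summand.

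For the middle summand, I will use the linearity of $\otimes$ in its second argument together with~\eqref{equ:RwC-break-Ap} and the normalization $\sum_U z_U = 1$. Indeed,
\begin{align*}
\sum_U z_U\left(I-\ell_U\right) \;=\; \Big(\sum_U z_U\Big) I \;-\; \sum_U z_U\,\ell_U \;\equiv\; I - A'_{i,J},
\end{align*}
so that $\sum_U z_U\cdot \frac{h'_{i,j^*}}{y_{i,j^*}p_{i,j^*}} \otimes (I-\ell_U) = \frac{h'_{i,j^*}}{y_{i,j^*}p_{i,j^*}} \otimes (I - A'_{i,J})$, which is the second summand on the left side of~\eqref{inequ:RwC-lower-bound-neg}. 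Adding the three matches yields $\sum_U D_U$ equals the left side of~\eqref{inequ:RwC-lower-bound-neg}, proving the claim.

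There is no real obstacle here: the claim is a direct consequence of (i)~the fact that the convex decomposition $\{z_U\}$ of the $R$-rectangles marginalizes back to the $x$-values (hence to the $y$-values by summing over $s$), and (ii)~the identity~\eqref{equ:RwC-break-Ap} that the shifted areas $A'_{i,J}$ are the corresponding convex combination of the shifted ``coverage functions'' $\ell_U$. The only mild care is to note that $\otimes$ is linear in the second argument, which is immediate from its definition as an integral.
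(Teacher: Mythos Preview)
Your proof is correct and follows essentially the same approach as the paper: you match the three summands term by term, using $y_{i,j}=\sum_U z_U c_U(j)$ for the first and third terms and $\sum_U z_U(I-\ell_U)=I-A'_{i,J}$ (via~\eqref{equ:RwC-break-Ap} and $\sum_U z_U=1$) together with linearity of $\otimes$ for the second. The only difference is that you spell out the derivation of $\sum_U z_U c_U(j)=y_{i,j}$ and the linearity of $\otimes$ explicitly, whereas the paper simply states these facts.
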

	\begin{proof}
		Indeed, the three terms in \eqref{inequ:RwC-lower-bound-neg} is respectively the sum over all $U$ of each of the three terms in the definition of $D_U$.  For the first and the third term, the equality comes from $y_{i, j} = \sum_{U}z_U c_U(j)$ for every $j$. The equality for the second term comes from $\sum_U z_U(I - \ell_U) = I - A'_{i, J}$.
	\end{proof}

	\paragraph{Lower Bound the Contribution of Each $U$} Now we fix some configuration $U$ such that $z_U > 0$. Let $a$ be the largest integer such that $2^{a+1} \leq 0.9 p_{i, j^*}$; thus $a\geq -2$. We can focus on the basic block $(2^a, 2^{a+1}]$. Notice that the length of the basic block is  $2^a \geq 0.9 p_{i, j^*}/4$, by the definition of $a$.  
	
	The following simple observations are useful in establishing our bounds.	
		
	\begin{observation}
		\label{obs:RwC-bound-ell-0}		
		If $s \leq 18 \phi_{i, j^*}$, then $R'_{i, j^*, s}$ will cover $(2^a, p_{i, j^*}]$. 
	\end{observation}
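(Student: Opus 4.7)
The plan is to prove this by a direct algebraic calculation, using crucially that we are in the context of Lemma~\ref{lemma:RwC-bound-for-bad}, so $j^*$ is bad on $i$ and therefore $\phi_{i,j^*} + y_{i,j^*}p_{i,j^*} < \csix\, p_{i,j^*} = 0.01\, p_{i,j^*}$. First I would unfold the definition: $R'_{i,j^*,s}$ has horizontal span
\[
\bigl(\cthree s + \cone \phi_{i,j^*} + \ctwo y_{i,j^*} p_{i,j^*},\ \cthree s + \cone \phi_{i,j^*} + \ctwo y_{i,j^*} p_{i,j^*} + p_{i,j^*}\bigr].
\]
Since the span has length exactly $p_{i,j^*}$ and its left endpoint is non-negative, to show that it covers $(2^a, p_{i,j^*}]$ it suffices to check the single inequality
\[
\cthree s + \cone \phi_{i,j^*} + \ctwo y_{i,j^*} p_{i,j^*} \ \le\ 2^a.
\]

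Next I would upper bound the left side using the hypothesis $s \le 18\phi_{i,j^*}$, obtaining $(\cthree \cdot 18 + \cone)\phi_{i,j^*} + \ctwo y_{i,j^*} p_{i,j^*} = 21.8\,\phi_{i,j^*} + 0.4\, y_{i,j^*} p_{i,j^*}$. Since $\ctwo < 21.8$, this is at most $21.8(\phi_{i,j^*} + y_{i,j^*}p_{i,j^*})$, which by the badness of $j^*$ is strictly less than $21.8 \cdot 0.01\, p_{i,j^*} = 0.218\, p_{i,j^*}$.

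Finally I would lower bound $2^a$. By the definition of $a$ as the largest integer with $2^{a+1} \le 0.9\, p_{i,j^*}$, we have $2^{a+2} > 0.9\, p_{i,j^*}$, so $2^a > 0.225\, p_{i,j^*}$. Since $0.218 < 0.225$, combining the two bounds yields $\cthree s + \cone\phi_{i,j^*} + \ctwo y_{i,j^*}p_{i,j^*} < 0.218\, p_{i,j^*} < 0.225\, p_{i,j^*} < 2^a$, as required.

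There is no real obstacle: the whole thing is a one-line check. The only thing worth flagging is that the slack between $0.218$ and $0.225$ is precisely what makes the constants $\cone, \ctwo, \csix$ and the threshold $18\phi_{i,j^*}$ fit together; a larger shifting constant or a weaker badness threshold would break the inequality, which explains the specific numerical choices earlier in the section.
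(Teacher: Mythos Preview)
Your proof is correct and follows essentially the same approach as the paper: both reduce the covering condition to the single inequality $\cthree s + \cone\phi_{i,j^*} + \ctwo y_{i,j^*}p_{i,j^*} \le 2^a$ and verify it by a direct numerical check using the badness of $j^*$ and the lower bound $2^a \ge 0.9\,p_{i,j^*}/4$. The only cosmetic difference is that the paper rearranges to isolate $s$ and then converts the bound on $p_{i,j^*}$ back into a bound on $\phi_{i,j^*}$, whereas you bound the whole left-hand side in terms of $p_{i,j^*}$ and compare to $2^a$ directly; the arithmetic and the slack are the same.
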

	\begin{proof}		
		The rectangle $R'_{i, j^*, s}$ will cover $(2^a, p_{i, j^*}]$ if $s + \cone(s + \phi_{i, j^*}) + \ctwo y_{i, j}p_{i,j^*} \leq 2^a$, which is $s \leq (2^a-\cone\phi_{i, j^*} - \ctwo y_{i, j}p_{i,j^*})/\cthree $.  Since $j^*$ is bad on $i$, we have $0.2 \phi_{i, j^*} + 0.4 y_{i, j^*}p_{i, j^*} \leq 0.4 \times \csix p_{i, j^*} = 0.004 p_{i, j^*}$. Thus, $(2^a-\cone\phi_{i, j^*} - \ctwo y_{i, j}p_{i,j^*})/\cthree  \geq (2^a-0.004 p_{i, j^*})/\cthree \geq (0.9/4 -0.004)p_{i,j^*}/1.2 \geq \frac{0.9/4-0.004}{1.2}\frac{1}{\csix}\phi_{i, j^*} \geq 18\phi_{i, j^*}$. Thus, if $s \leq 18\phi_{i, j^*}$, we have $s + \cone(s + \phi_{i, j^*}) + \ctwo y_{i, j}p_{i,j^*} \leq 2^a$.  
	\end{proof}
	
	\begin{observation}
		For every $\tau \in (2^a, p_{i, j^*}]$, we have $h'_{i, j^*}(\tau) \geq \frac{17}{18}y_{i, j^*}$. 
			\label{obs:RwC-bound-ell-1}
	\end{observation}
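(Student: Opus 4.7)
The plan is to combine Observation~\ref{obs:RwC-bound-ell-0} with a Markov-type bound on how much ``mass'' of $j^*$ can start after time $18\phi_{i,j^*}$. The observation already tells us that any $R'$-rectangle $R'_{i,j^*,s}$ with $s \le 18\phi_{i,j^*}$ covers the whole interval $(2^a, p_{i,j^*}]$, and in particular covers the given $\tau$. So for such $\tau$,
\[
h'_{i, j^*}(\tau) \;\ge\; \sum_{s \le 18\phi_{i,j^*}} x_{i,j^*,s}.
\]
It therefore suffices to show that the total $x$-mass for $j^*$ on $i$ with $s > 18\phi_{i,j^*}$ is at most $y_{i,j^*}/18$.

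For this I would invoke the definition $\phi_{i,j^*} = \frac{1}{y_{i,j^*}}\sum_{s} x_{i,j^*,s}\, s$, which gives $\sum_{s} x_{i,j^*,s}\, s = y_{i,j^*}\phi_{i,j^*}$. By Markov's inequality applied to the nonnegative weights $x_{i,j^*,s}$ at ``positions'' $s$,
\[
18\phi_{i,j^*} \sum_{s > 18\phi_{i,j^*}} x_{i,j^*,s} \;\le\; \sum_{s} x_{i,j^*,s}\, s \;=\; y_{i,j^*}\phi_{i,j^*},
\]
so $\sum_{s > 18\phi_{i,j^*}} x_{i,j^*,s} \le y_{i,j^*}/18$. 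Consequently
\[
\sum_{s \le 18\phi_{i,j^*}} x_{i,j^*,s} \;\ge\; y_{i,j^*} - \tfrac{1}{18}y_{i,j^*} \;=\; \tfrac{17}{18}y_{i,j^*},
\]
which gives the desired lower bound on $h'_{i,j^*}(\tau)$.

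There is essentially no obstacle here: the only thing to be slightly careful about is edge cases where $\phi_{i,j^*} = 0$ (in which case every $s$ with $x_{i,j^*,s} > 0$ satisfies $s = 0 \le 18\phi_{i,j^*}$, so the sum already equals $y_{i,j^*}$), and confirming that Observation~\ref{obs:RwC-bound-ell-0} applies uniformly for all $\tau \in (2^a, p_{i,j^*}]$ (it does, since the rectangle covers the whole interval, not just a specific point). The proof is thus a two-line application of Observation~\ref{obs:RwC-bound-ell-0} plus Markov on the definition of $\phi_{i,j^*}$.
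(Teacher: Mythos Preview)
Your proposal is correct and follows exactly the same approach as the paper's proof: invoke Observation~\ref{obs:RwC-bound-ell-0} to see that every $R'_{i,j^*,s}$ with $s \le 18\phi_{i,j^*}$ covers $\tau$, then apply Markov's inequality to the definition of $\phi_{i,j^*}$ to conclude that at least $\tfrac{17}{18}y_{i,j^*}$ of the mass has $s \le 18\phi_{i,j^*}$. Your write-up is simply a more detailed version of the paper's two-line argument, including the edge case $\phi_{i,j^*}=0$.
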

	\begin{proof}
		This comes from Observation~\ref{obs:RwC-bound-ell-0}. $R'_{i, j^*, s}$ will cover $\tau$ if $s \leq 18\phi_{i, j^*}$. 
		By the definition of $\phi_{i, j}$ and Markov inequality, the sum of $x_{i, j^*, s}$ over all such $s$ is at least $\frac{17}{18}y_{i, j^*}$. 	
	\end{proof}
	
	\begin{observation}
		\label{obs:RwC-bound-ell-2}
		If $\tau'$ is not contained in the \emph{interior} of any interval in $U$ and $\tau \geq \tau'$, then $\tau  - \ell_U(\tau) \geq \min\set{\cone \tau', \tau - \tau'}$.
	\end{observation}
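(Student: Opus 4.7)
The plan is to partition $U$ according to how each of its intervals sits relative to $\tau'$ and to bound the contributions of the two parts to $\ell_U(\tau)$ separately. Since $\tau'$ lies outside the interior of every interval in $U$, each pair $(j,s) \in U$ satisfies either $s + p_{i,j} \leq \tau'$ (call this subset $U_1$) or $s \geq \tau'$ (call this subset $U_2$), and these two cases exhaust $U$.

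For $U_1$, the original intervals are pairwise disjoint subsets of $(0,\tau']$, so $\sum_{(j,s)\in U_1} p_{i,j} \leq \tau'$. Each shifted interval has the same length as its original, and its intersection with $(0,\tau]$ is at most its full length, so the contribution of $U_1$ to $\ell_U(\tau)$ is at most $\tau'$.

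For $U_2$, every pair has $s \geq \tau'$, so its shift amount $\cone s + \cone\phi_{i,j} + \ctwo y_{i,j}p_{i,j}$ is at least $\cone \tau'$. I would introduce the auxiliary translates $J_{(j,s)} := (s + \cone\tau', \, s + \cone\tau' + p_{i,j}]$, obtained by shifting each $U_2$ interval by exactly $\cone\tau'$. These are pairwise disjoint (being translates of disjoint intervals by the same amount) and lie in $(\cthree\tau', \infty)$, so $\sum_{(j,s)\in U_2} \mathrm{length}(J_{(j,s)} \cap (0,\tau])$ is at most $\max\set{0, \tau - \cthree\tau'}$. Since each true shifted interval in $U_2$ is a further rightward translate of its $J_{(j,s)}$, the one-line monotonicity fact that $\mathrm{length}((I + \epsilon) \cap (0,\tau]) \leq \mathrm{length}(I \cap (0,\tau])$ for $\epsilon \geq 0$ (immediate from $\mathrm{length}((a+\epsilon, a+\epsilon+p] \cap (0,\tau]) = \max\set{0, \min(p, \tau-a-\epsilon)}$) implies that the $U_2$-contribution to $\ell_U(\tau)$ is also at most $\max\set{0, \tau - \cthree\tau'}$.

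Combining the two bounds yields $\ell_U(\tau) \leq \tau' + \max\set{0, \tau - \cthree\tau'}$, so a case split on whether $\tau \leq \cthree\tau'$ or $\tau > \cthree\tau'$ gives $\tau - \ell_U(\tau) \geq \min\set{\tau - \tau',\, \cone\tau'}$, as required. No step is really difficult: the only subtlety is recognizing that one should compare shifted $U_2$ intervals to the uniformly $\cone\tau'$-shifted auxiliary intervals $J_{(j,s)}$ to harvest the disjointness of the originals, after which the monotonicity-under-right-shift observation dispatches the rest.
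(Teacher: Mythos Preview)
Your proof is correct and follows essentially the same approach as the paper's: both partition $U$ into intervals lying entirely to the left of $\tau'$ and intervals lying entirely to the right, then use that the right intervals are shifted by at least $\cone\tau'$. The paper phrases the conclusion as ``the sub-intervals of the intervals in $U$ from $\max\{\tau',\tau-\cone\tau'\}$ to $\tau$ will be shifted to the right of $\tau$,'' which when unpacked gives $\ell_U(\tau)\le \max\{\tau',\tau-\cone\tau'\}$, exactly your bound $\tau'+\max\{0,\tau-\cthree\tau'\}$; your auxiliary translates $J_{(j,s)}$ and the monotonicity-under-right-shift step make this explicit where the paper is terse.
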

	\begin{proof}
		Consider the definition of $\ell_U$ via the shifting of intervals. 
		 Since $\tau'$ is not contained in the interior of any interval in $U$, an interval in $U$ is either to the left of $\tau'$ or to the right of $\tau'$.  By the way we shifting intervals, any interval to the right of $\tau'$ will be shifted by at least $\cone\tau'$ distance to the right. Thus, the sub-intervals of the intervals in $U$ from $\max\set{\tau', \tau - \cone \tau'}$ to $\tau$ will be shifted to the right of $\tau$. The observation follows. 
	\end{proof}
	
	\begin{observation}
		\label{obs:RwC-bound-ell-3}
		If $\tau$ is covered by some interval $(s, s+p_{i, j}]$ in $U$, then $$\tau - \ell_U(\tau) \geq \min\set{\cone(s+\phi_{i,j}) + \ctwo y_{i,j} p_{i,j}, \tau - s}.$$
	\end{observation}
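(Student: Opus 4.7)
The plan is to bound $\ell_U(\tau)$ by splitting the sum defining it into the term coming from the distinguished pair $(j, s) \in U$ whose associated interval $(s, s+p_{i,j}]$ covers $\tau$, and the terms coming from the other pairs in $U$. Writing $\Delta := \cone(s+\phi_{i,j}) + \ctwo y_{i,j}p_{i,j}$ for the shift, the shifted version of the distinguished interval is $(s+\Delta,\ s+p_{i,j}+\Delta]$, so its contribution to $\ell_U(\tau)$ equals $\max\{0,\ \min\{p_{i,j},\ \tau - s - \Delta\}\}$.

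First I would simplify this contribution using the hypothesis $s < \tau \leq s+p_{i,j}$: since $\tau - s \leq p_{i,j}$ and $\Delta \geq 0$, the inner $\min$ is never larger than $p_{i,j}$, so the contribution equals $\max\{0,\ \tau - s - \Delta\} = (\tau - s) - \min\{\Delta,\ \tau - s\}$. Next, I would handle the remaining pairs $(j',s') \in U$ with $(j',s') \neq (j,s)$. By disjointness of the intervals of $U$, each such interval satisfies either $s' + p_{i,j'} \leq s$ or $s' \geq s + p_{i,j}$. In the second case $s' + \Delta' \geq s + p_{i,j} \geq \tau$ because $\Delta' \geq 0$, so the contribution to $\ell_U(\tau)$ is zero. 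In the first case $(s',\ s'+p_{i,j'}] \subseteq (0, s]$, and the shifted contribution $\max\{0,\ \min\{p_{i,j'},\ \tau - s' - \Delta'\}\}$ is at most the interval's length $p_{i,j'}$; summing those lengths over all such pairs gives at most $s$ by disjointness.

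Putting the pieces together, $\ell_U(\tau) \leq s + (\tau - s) - \min\{\Delta,\ \tau - s\} = \tau - \min\{\Delta,\ \tau - s\}$, which rearranges to the claimed inequality $\tau - \ell_U(\tau) \geq \min\{\cone(s+\phi_{i,j}) + \ctwo y_{i,j}p_{i,j},\ \tau - s\}$. The argument is essentially bookkeeping and I expect no serious obstacle; the only subtle point is that the uneven shifts $\Delta'$ may destroy pairwise disjointness of the shifted intervals, but this is harmless because $\ell_U$ is defined as a sum we bound term-by-term rather than as the Lebesgue measure of a union.
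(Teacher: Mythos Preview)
Your proof is correct and follows essentially the same approach as the paper's: both arguments exploit that the distinguished interval $(s,s+p_{i,j}]$ is shifted right by $\Delta$, so the sub-interval $(\max\{\tau-\Delta,s\},\tau]$ moves past $\tau$, while the remaining (disjoint) intervals contribute at most $s$ before $\tau$. The paper's proof just states this in one sentence of shifting intuition, whereas you spell out the term-by-term bookkeeping explicitly---including the useful remark that overlaps among shifted intervals are harmless because $\ell_U$ is a sum rather than a measure.
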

	\begin{proof} 
		The interval $(s, s+p_{i, j}]$ will be shifted by $\cone(s+\phi_{i, j})+\ctwo y_{i,j}p_{i,j}$ distance to the right. So the sub-interval $( \max\set{\tau - \cone(s+\phi_{i, j})-\ctwo p_{i,j}, s}, \tau]$ will be shifted to the right of $\tau$. The observation follows.
	\end{proof}
	
	Equipped with these observations, we can analyze the contribution of $U$ case by case: \bigskip
	
	\noindent {\bf Case 1:} $(2^a, 0.92 p_{i, j^*}]$ is not a sub-interval of any interval in $U$. In this case, there is $\tau' \in (2^a, 0.92 p_{i, j^*})$ that is not in the interior of any interval in $U$. By Observation~\ref{obs:RwC-bound-ell-2},  any time point in $ \tau \in (0.95p_{i, j^*},p_{i, j^*}]$ has $\tau - \ell_U(\tau) \geq \min\set{\tau - \tau', \cone\tau'} \geq \min\{0.95p_{i,j^*} - 0.92p_{i,j^*}, \cone \times 2^a\} \geq \min\set{0.03 p_{i, j^*}, \cone \times 0.9p_{i, j^*}/4} = 0.03 p_{i, j^*}$. By Observation~\ref{obs:RwC-bound-ell-1}, any such $\tau$ has $h'$ value at least $\frac{17}{18} y_{i, j^*}$. Thus,  the contribution of $U$ is 
	\begin{align*}
		D_U &\geq z_U \times \frac{h'_{i, j^*}}{y_{i, j^*}p_{i,j^*}}\otimes (I-\ell_U) \geq z_U \times \frac{17y_{i, j^*}}{18y_{i,j^*}p_{i,j^*}} \times (0.03 p_{i, j^*})\times (0.05 p_{i, j^*})\\
		&\geq 0.0014z_Up_{i,j^*} \geq \frac{z_U p_{i, j^*}}{6000}.
	\end{align*}
%

		\noindent {\bf Case 2:} $(2^a, 0.92 p_{i, j^*}]$ is covered by some interval $(s, s+p_{i, j}]$ in $U$, and $\cone (\phi_{i, j} + s) + \ctwo y_{i, j}p_{i,j} \geq 0.002  \times 2^a$. Focus on any $\tau \in (1.01 \times 2^a, 0.92 p_{i, j^*}] \subseteq (2^a,  p_{i, j^*}]$. By Observation~\ref{obs:RwC-bound-ell-1}, we have $h'(\tau) \geq \frac{17}{18} y_{i, j^*}$. By Observation~\ref{obs:RwC-bound-ell-3}, we have $\tau - \ell_U(\tau) \geq \min\{\cone (\phi_{i, j} + s) + \ctwo y_{i, j}p_{i,j}, \tau - s\} \geq  \min\set{0.002 \times 2^a, 0.01\times 2^a} = 0.002\times 2^a$. Thus, the contribution of $U$ is 
	\begin{align*}
		D_U &\geq z_U \times \frac{h'_{i, j^*}}{y_{i, j^*}p_{i,j^*}}\otimes (I-\ell_U)  \geq z_U \times \frac{17y_{i,j^*}}{18y_{i,j^*}p_{i,j^*}} \times (0.002\times 2^a) \times (0.92 p_{i, j^*} - 1.01 \times 2^a) \\
		& \geq \left(\frac{17}{18} \times 0.002 \times 0.9/4 \times (0.92-1.01\times 0.9/2)\right)z_U p_{i, j^*} \geq 0.00019 z_U p_{i, j^*} \geq \frac{z_U p_{i, j^*}}{6000},
	\end{align*}		
 where we used $0.9p_{i, j^*}/4 \leq 2^a \leq 0.9p_{i, j^*}/2$. \medskip
		
		\noindent {\bf Case 3:} $(2^a, 0.92 p_{i, j^*}]$ is covered by some interval $(s, s+p_{i, j}]$ in $U$, and $\cone (\phi_{i, j} + s) + \ctwo y_{i, j}p_{i,j} < 0.002 \times 2^a$.  If $j = j^*$, then $D_U \geq z_U \times \cfive p_{i, j^*}c_U(j^*) \geq z_U \times \cfive  p_{i, j^*}$. So, we assume $j \neq j^*$.  This is where we use the strong negative correlation between $j$ and $j^*$.  We shall lower bound $\Pr[j^* \assignblock{i} a]$ and $\Pr[j \assignblock{i} a]$ separately. 
		
		Notice that $2^a \geq 0.9p_{i, j^*}/4 \geq  0.9 \times \frac{1}{\csix} \phi_{i, j^*} /4 \geq 10 \phi_{i, j^*}$, by the fact that $j^*$ is bad on $i$. Thus, $(2^a, 2^{a+1}] \subseteq (\bone\phi_{i, j^*}, p_{i, j^*}]$, implying Property~\ref{property:RwC-job-2-block-sub-interval} for $j^*$.  If $s_{i, j^*} = s$ and $R'_{i, j^*, s}$ covers $(2^a, 2^{a+1}]$, then Property~\ref{property:RwC-job-2-block-stheta-small} holds. 
		By Observation~\ref{obs:RwC-bound-ell-0}, this happens with probability at least $\frac{17}{18}$. Thus, we have
		\begin{align*}
			\Pr\left[j^* \assignblock{i} a\right] \geq \frac{17}{18} \cdot \frac{2^a}{p_{i, j^*}} \geq \frac{17}{18}\cdot \frac{0.9}{4} \geq 0.21.
		\end{align*}
		
	
	Now, we continue to bound $\Pr\left[j \assignblock{i} a\right]$. To do this, we need to first prove that $j$ is bad on $i$. Indeed, $\phi_{i, j} + y_{i,j}p_{i,j} \leq 5(\cone (\phi_{i, j} + s) + \ctwo y_{i, j}p_{i,j}) < 5\times 0.002 \times 2^a \leq \csix p_{i, j} $, since $(s, s+p_{i, j}] \supseteq (2^a, 0.92p_{i, j^*}] \supseteq (2^a, 2^{a+1}]$, which is of length at least $2^a$. This implies that $j$ is bad on $i$.
				
	Then, $s \leq 0.002\times 2^a/\cone  = 0.01 \times 2^a$ and $s + p_{i, j} > 0.92 p_{i, j^*} \geq \frac{0.92}{0.9}\times 2^{a+1} \geq 2^{a+1} + 0.01\times 2^a$.  This implies that $p_{i, j} \geq 2^{a+1}$.  Also, $2^a \geq \cone  \phi_{i, j}/0.002 = 100 \phi_{i, j} \geq \bone\phi_{i, j}$. Thus, Property~\ref{property:RwC-job-2-block-sub-interval} holds.
		
		For every $s' \leq 50\phi_{i, j}$, we have $\cthree  s' + \cone \phi_{i, j} + \ctwo y_{i, j} p_{i, j} \leq  60.2 \phi_{i, j}  + \ctwo y_{i, j}p_{i, j} \leq 
		\frac{60.2}{\cone} (\cone (\phi_{i, j} + s) + \ctwo y_{i, j}p_{i,j} ) \leq \frac{60.2}{\cone} \times 0.002\times 2^a \leq 2^a$. Thus, Property~\ref{property:RwC-job-2-block-stheta-small} holds if $s_{i, j} \leq 50\phi_{i, j}$.  Notice that $\Pr[s_{i, j} \leq 50\phi_{i, j}] \geq 0.98$. Under this condition, the probability of $j \assignblock{i} a$ is at least $\frac{2^a}{p_{i, j}} \geq \frac{0.9}{4}\frac{p_{i,j^*}}{p_{i, j}}$. Overall, $\Pr\left[j \assignblock{i} a\right] \geq 0.98 \times \frac{0.9}{4}\frac{p_{i,j^*}}{p_{i, j}} \geq 0.22 \frac{p_{i,j^*}}{p_{i, j}}$.
		
		
		Thus, by Observation~\ref{obs:RwC-prob-same-group}, 
		\begin{align*}
			\Pr[j \samegroup{i} j^*] \geq 0.8 \Pr\left[j^* \assignblock{i} a\right] \Pr\left[j \assignblock{i} a\right] \geq 0.8 \times 0.21 \times 0.22 \frac{p_{i, j^*}}{p_{i,j}} \geq 0.036\frac{p_{i, j^*}}{p_{i,j}}.
		\end{align*}
		Thus, the contribution of $U$ is $D_U \geq  z_U\frac{\zeta}{2}\Pr[j \samegroup{i} j^*]p_{i, j} \geq 0.018\zeta z_U p_{i, j^*} = \frac{z_U p_{i, j^*}}{6000}$.
		
	Thus, for every $U$, the contribution of $U$ is at least $\frac{z_Up_{i, j^*}}{6000}$. By Claim~\ref{claim:RwC-break-contribution}, the left side of \eqref{inequ:RwC-lower-bound-neg} is $\sum_{U}D_U \geq \sum_U\frac{z_Up_{i, j^*}}{6000} = \frac{p_{i, j^*}}{6000}$. This finishes the proof of Lemma~\ref{lemma:RwC-bound-for-bad}. 
	
	So, we always have $\E\left[\tilde C_{j^*}|j^* \assignmachine i\right] \leq \left(1.5-\frac{1}{6000}\right)(\phi_{i,j^*} + p_{i, j^*})$. Deconditioning on $j^* \assignmachine i$, we have $\E\left[\tilde C_{j^*}\right] \leq \left(1.5-\frac{1}{6000}\right)\sum_{i}y_{i, j^*}(\phi_{i, j^*} + p_{i, j^*}) = \left(1.5-\frac{1}{6000}\right)C_j$. This finishes the proof of Theorem~\ref{theorem:main-RwC}.

	\section{Handling Super-Polynomial $T$}
\label{sec:superT}

In this section, we show how to handle the case when $T$ is super-polynomial in $n$  
for $P|\tprec|\sum_j w_j C_j$ and $R|\tprec|\sum_j w_j C_j$. The way we handle super-polynomial $T$ is the same as that in \cite{IL16}.  \cite{IL16} considers the problem $R|r_j|\sum_j w_j C_j$, i.e, the unrelated machine job scheduling with job arrival times.  They showed how to efficiently obtain a $(1+\epsilon)$ approximate LP solution that only contains polynomial number of non-zero variables. Since the problem they considered is more general than $R||\sum_j w_j C_j$ and their LP is also a generalization of our \eqref{LP:RwC}, their technique can be directly applied to our algorithm for $R||\sum_j w_j C_j$.  Due to the precedence constraints, the technique does not  directly apply to $P|\tprec|\sum_j w_j C_j$. However, with a trivial modification, it can handle the precedence constraints as well.  We omit the detail here since the analysis will be almost identical to that in \cite{IL16}.

	\bibliographystyle{plain}
	\bibliography{reflist}
%
\end{document}